\documentclass[11pt,letterpaper,oneside]{article}
\usepackage[letterpaper,margin=1in]{geometry}
\usepackage[T1]{fontenc}
\usepackage{lmodern}
\usepackage{graphicx}
\usepackage{amsmath}
\usepackage{amssymb}
\usepackage{amsthm}
\usepackage[scr=pxtx]{mathalpha}
\usepackage{mathtools}
\usepackage[colorlinks=true, citecolor=blue, urlcolor=blue, pagebackref=true, linktocpage, plainpages=false, pdfpagelabels]{hyperref}

\usepackage{xcolor}
\usepackage{cancel}
\usepackage{subcaption}
\usepackage{paralist}
\usepackage{booktabs}
\usepackage{placeins}
\usepackage{float}
\usepackage{enumitem}
\usepackage{cleveref}
\usepackage{tikz-cd}
\usepackage{tikz}
\usepackage{pgfplots}
\pgfplotsset{compat=1.17}

\usepackage[textwidth=2.5cm]{todonotes}

\newtheorem{theorem}{Theorem}[section]
\numberwithin{equation}{section}
\newtheorem{lemma}[theorem]{Lemma}

\newtheorem{corollary}[theorem]{Corollary}

\theoremstyle{definition}
\newtheorem{definition}[theorem]{Definition}

\renewcommand{\setminus}{\smallsetminus}

\newcommand{\bfa}{\mathbf{a}}
\newcommand{\bfb}{\mathbf{b}}
\newcommand{\bfc}{\mathbf{c}}
\newcommand{\bfu}{\mathbf{u}}
\newcommand{\bfx}{\mathbf{x}}
\newcommand{\bfy}{\mathbf{y}}
\newcommand{\bfz}{\mathbf{z}}
\newcommand{\bfL}{\mathbf{L}}
\newcommand{\bfR}{\mathbf{R}}
\newcommand{\bfzero}{\mathbf{0}}
\newcommand{\bfone}{\mathbf{1}}

\newcommand{\sfk}{\mathsf{k}}
\newcommand{\sfL}{\mathsf{L}}
\newcommand{\sfR}{\mathsf{R}}

\newcommand{\PAR}{\mathsf{PAR}}
\newcommand{\OR}{\mathsf{OR}}
\newcommand{\EXACTZ}{\mathsf{EXACT}_{\mathbb{Z}}}
\newcommand{\SUPZ}{\mathsf{SUP}_{\mathbb{Z}}}

\newcommand{\calA}{\mathcal{A}}
\newcommand{\calB}{\mathcal{B}}
\newcommand{\calF}{\mathcal{F}}
\newcommand{\calH}{\mathcal{H}}
\newcommand{\calI}{\mathcal{I}}
\newcommand{\calK}{\mathcal{K}}

\newcommand{\calY}{\mathcal{Y}}
\newcommand{\calZ}{\mathcal{Z}}

\DeclareMathOperator{\h}{h}
\DeclareMathOperator{\Mid}{Mid}
\DeclareMathOperator{\nnz}{nnz}

\definecolor{diffdelete}{RGB}{180,35,35}
\definecolor{diffinsert}{RGB}{20,85,170}

\newcommand{\email}[1]{\href{mailto:#1}{\texttt{#1}}}

\title{Subset Sum via Partial Match}
\author{Baitian Li\thanks{Columbia University. \email{bl3052@columbia.edu}. Supported in part by NSF Grant CCF-2238221 and a Packard
Foundation Fellowship.} \and Lixi Ye\thanks{Institute for Interdisciplinary Information Sciences, Tsinghua University. \email{ylx25@mails.tsinghua.edu.cn}.}}
\date{}

\usepackage{array,longtable,calc}
\usepackage{xurl}

\hypersetup{pdftitle={Subset Sum via Partial Match},
  pdfauthor={Baitian Li, Lixi Ye}}

\begin{document}
\maketitle

\begin{abstract}
  We show that the subset sum problem can be solved in randomized time $O(2^{0.49999\overline{7}n})$, breaking the $2^{n/2}$ meet-in-the-middle barrier.

  Our approach builds on the representation technique framework of Randolph and W\k{e}grzycki (STOC 2026). We choose a representation for which testing the compatibility of pairs of partial solution vectors is exactly the \emph{partial match} problem. To beat exponent $1/2$ for subset sum, it then suffices to give a nontrivial partial match algorithm in a certain parameter regime. We achieve this by designing a depth-2 linear circuit for the partial match matrix, which yields an efficient algorithm via the framework of Alman and Li (FOCS 2025).
\end{abstract}

\paragraph*{Declaration.}

A result of a similar type was recently released by OpenAI: \emph{\href{https://github.com/openai/math/blob/main/preprints/Subset-Sum-in-Time-2-power-0-49n-October-4-2026/subset-sum.pdf}{Subset
Sum in Time $O(2^{0.49n})$}} \cite{OpenAI26}, dated 4 October 2026. Our result was developed independently of that work and uses a different approach; see \Cref{sec:comparison} for a brief comparison of the two approaches and \hyperref[sec:ai-disclosure]{AI disclosure and timestamps} at the end of the paper for details.

\clearpage
\section{Introduction}\label{sec:introduction}

Given $n$ integers and a target, the \emph{subset sum} problem asks whether some subset of the integers sums to the target. Formally, given a vector $\bfz\in\mathbb{Z}^n$ and a target $\boldsymbol{\tau}\in\mathbb{Z}$, it asks for a vector $\bfc\in\{0,1\}^n$ such that $\bfc\cdot\bfz=\boldsymbol{\tau}$.

In their foundational work, Horowitz and Sahni~\cite{HS74} gave an $O^*(2^{n/2})$-time algorithm for subset sum,\footnote{We use $O^*(\cdot)$ to suppress polynomial factors in $n$ and the input length (measured in bits).} a method now known as \emph{meet-in-the-middle}. Chen, Jin, Randolph, and Servedio~\cite{CJRS23} later gave a randomized $O(2^{n/2}n^{-\gamma})$-time algorithm with $\gamma\approx0.5023$, in the word RAM model. Whether subset sum can be solved \emph{exponentially} faster, say in time $O(2^{(1/2-\varepsilon)n})$ for some constant $\varepsilon>0$, has been a longstanding open question in exact algorithms.

Howgrave-Graham and Joux~\cite{HJ10} were the first to break the meet-in-the-middle barrier in an \emph{average-case} setting of subset sum, with running time $O(2^{0.337n})$. Their algorithm was subsequently improved in~\cite{BCJ11,BBSS20}.
Although these algorithms are not worst-case algorithms, \cite{HJ10} introduced the \emph{representation technique}, which later proved very fruitful. Mucha, Nederlof, Pawlewicz, and W\k{e}grzycki~\cite{MNPW19} showed that the representation technique also yields a faster worst-case algorithm for a related problem, \emph{equal subset sum}, in which the coefficient vector ranges over $\bfc\in\{-1,0,1\}^n\setminus\{\bfzero\}$ and the target is $0$. Compared with the meet-in-the-middle bound $O^*(3^{n/2})\subseteq O(1.7321^n)$, they achieved time $O(1.7088^n)$.

Recently, Randolph and W\k{e}grzycki~\cite{RW26} systematically studied a class of problems related to subset sum, which we call \emph{generalized subset sum} (GSS). Here the target is $0$, and the coefficient vector ranges over $C^n$ or $C^n\setminus\{\bfzero\}$ for a fixed finite set $C\subseteq\mathbb{Z}$ that defines the problem. Their starting point is the earlier work of Nederlof and W\k{e}grzycki~\cite{NW21}, which used an algorithm for detecting sparse Orthogonal Vectors to improve the space complexity of the $O^*(2^{n/2})$-time, $O^*(2^{n/4})$-space subset sum algorithm of Schroeppel and Shamir~\cite{SS81} to $O^*(2^{(1/4-\varepsilon)n})$, while keeping the time $O^*(2^{n/2})$. Randolph and W\k{e}grzycki observed that this Orthogonal Vectors subroutine generalizes to a wider family of problems, which they call \emph{compatibility testing}. Combining compatibility testing with the representation technique, they showed that GSS can be solved in time $O(\lvert C\rvert^{(1/2-\varepsilon)n})$ for a large family of sets $C$, beating meet-in-the-middle. In particular, they beat meet-in-the-middle for $C=\{-d,\ldots,d\}$ for every $d>1$, and for $C=[\pm d]\coloneqq\{-d,\ldots,-1,1,\ldots,d\}$ for every $d>2$. They also improved the equal subset sum bound of~\cite{MNPW19} to $O^*(1.7067^n)$. For $C=[\pm1]$, GSS is equivalent to subset sum via the affine change of coefficients $\bfc\mapsto2\bfc-\bfone$. This leaves the tantalizing possibility that their method extends to subset sum itself.

In this paper, we show that this line of approach indeed solves subset sum in time below $2^{n/2}$.

\begin{theorem}[{Theorem~\ref{thm:main}}]
  Subset sum can be solved in randomized time $O(2^{0.49999\overline{7}n})$.\footnote{Here $0.49999\overline{7}=0.49999777\ldots=1/2-1/450000$.}
\end{theorem}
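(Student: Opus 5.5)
We follow the outline in the abstract: a representation-technique reduction in the style of Randolph and W\k{e}grzycki~\cite{RW26}, in which compatibility testing becomes partial match, and then a fast partial-match algorithm in the parameter regime that arises.

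\emph{Step 1: setting up the representation.} We work with GSS over $C=[\pm1]$, which is equivalent to subset sum. The unknown solution $\bfc\in\{-1,1\}^n$ is written as $\bfc=\bfa+\bfb$. The vectors $\bfa,\bfb$ range over a carefully chosen alphabet, for instance entries in $\{-1,0,1\}$ or a scaled variant with a fixed number of each symbol. The alphabet is chosen so that whether a pair $(\bfa,\bfb)$ is \emph{compatible} (i.e.\ $\bfa+\bfb\in\{-1,1\}^n$) is a coordinatewise condition. Concretely, on each coordinate exactly one of $\bfa_i,\bfb_i$ is ``active'', and where $\bfa$ is specified, $\bfb$ must be zero or match.

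This is precisely partial match. Encode $\bfa$ as a string over $\{0,1,*\}$, with wildcards at its zero coordinates, and $\bfb$ as a full $\{0,1\}$ string. A pair is compatible iff the strings agree off the wildcards.

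\emph{Step 2: the filtering step.} Next we pick a random modulus $p\approx$ (number of representations), together with a random residue $r$. We then list all $\bfa$ with $\bfa\cdot\bfz\equiv r$ and all $\bfb$ with $\bfb\cdot\bfz\equiv -r \pmod p$. These lists can be produced by Schroeppel--Shamir / meet-in-the-middle enumeration in time roughly the list size. Each list then has size about $2^{(1/2-\delta)n}$ up to the counting of the representation. By standard arguments in~\cite{RW26}, some representation of the solution survives with constant probability.

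What remains is a bichromatic compatibility-testing instance on two lists of size $N$ with dimension $n=\Theta(\log N)$. Meet-in-the-middle corresponds exactly to quadratic time $N^2$ in this instance, so any $N^{2-\varepsilon'}$ algorithm in this regime gives $2^{(1/2-\varepsilon)n}$. We also need to handle the case where many false-positive pairs with $\bfa\cdot\bfz+\bfb\cdot\bfz\neq0$ pass the test. For this we use the usual trick of additionally hashing the integer sums into buckets. Alternatively, we enumerate the compatible pairs with an output-sensitive bound, using that the expected number of compatible pairs is small.

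\emph{Step 3: the main obstacle, partial match in the relevant regime.} The hard part is a nontrivial algorithm for partial match with $N$ queries and $N$ database points, in dimension $c\log N$ for the specific $c$ and wildcard density that come out of Step~1. Known generic methods (polynomial method, OV-style) give only subpolynomial savings at linear dimension.

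Following the abstract, we will design a depth-2 linear circuit for the $2^n\times 3^n$ (or restricted-weight) partial match matrix. The matrix is a Kronecker power of the $2\times3$ gadget $\begin{pmatrix}1&0&1\\0&1&1\end{pmatrix}$, restricted to rows and columns of prescribed weights. We first try to factor the gadget power through a middle layer of slightly smaller size by grouping coordinates in blocks of constant size $k$. On each block we find a sparse rank-type factorization whose nonzero count beats the trivial one, and we tensor it up, restricting weights via rectangular-matrix-style concentration. Then we invoke the Alman--Li framework: a depth-2 circuit with $O(N^{2-\varepsilon'})$ wires for the relevant submatrix yields an algorithm multiplying it by the indicator vectors of the two lists in comparable time.

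The gain obtained this way is tiny, which is consistent with the exponent $1/2-1/450000$. The final step is therefore to optimize the parameters (block size, weights, representation alphabet) numerically and verify the constant.
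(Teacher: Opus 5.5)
\textbf{There are genuine gaps in the reduction, the mixing argument, and the circuit.}

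\textbf{Wrong target and missing mixing dichotomy.} Your Step~2 aims at the wrong target. With filtered lists of size $N\approx2^{(1/2-\delta)n}$, an $N^{2-\varepsilon'}$ partial match algorithm costs $2^{(1-2\delta)(1-\varepsilon'/2)n}$, far above $2^{n/2}$. Meet-in-the-middle does not correspond to $N^2$ here, and, as you note, subquadratic partial match in dimension $c\log N$ is not available anyway. What the paper needs is an asymmetric \emph{per-element} bound for half-stars partial match (queries with exactly $m/2$ wildcards): time $O(2^{(l+\varepsilon)m}\lvert\boldsymbol{A}\rvert+2^{(r+\varepsilon)m}\lvert\boldsymbol{B}\rvert)$. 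Letting each query enumerate its $2^{m/2}$ completions gives $(l,r)=(0,1/2)$, and Theorem~\ref{thm:subset-sum-transfer} shows that $l+r\le1/2-2\xi$ suffices.

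That reduction depends on choices missing from your sketch:
\begin{itemize}
\item representations only on a prefix of length $n/3$, with $\bfa\in\{0,1\}^{n/3}$ and $\bfb\in T_{n/3,s}$ (your Step~1 encoding, as written, yields no wildcard symbol; use $0=0+0=1+(-1)$, $1=0+1=1+0$);
\item plain meet-in-the-middle on the other $2n/3$ coordinates, with the split $\ell$ chosen to balance the two asymmetric costs;
\item padding by always-compatible pairs $(1,-1)$, which brings every support layer $s\le1/2$ to half-stars.
\end{itemize}
Moreover, ``some representation survives with constant probability'' is false in the worst case. Representations may share the key $\bfa\cdot\bfz$, so hashing modulo $p\approx N_{\mathrm{rep}}$ can kill all of them. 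The paper instead filters at rate about $1/N_{\mathrm{sum}}$ with a pairwise independent family. It shows by complementation that some layer $s\le1/2$ carries $\Omega(N_{\mathrm{sum}}/n)$ distinct correct keys. It switches to the few-sums procedure of~\cite{AKKN16} when $N_{\mathrm{sum}}\le2^{(1-2\xi/3)n/3}$; this dichotomy is what turns $\xi$ into the final saving $\xi/9$.

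\textbf{No circuit construction.} Step~3 contains no construction, and ``find a block factorization with fewer nonzeros and tensor it up'' misses the real obstruction. The relevant measure is the pair of degrees (row sparsity of $\bfL$, column sparsity of $\bfR$), not the wire count. Cancellation is also unavoidable: by Theorem~\ref{thm:or-lower-bound}, no $\OR$ circuit beats $l+r=1/2$ at $s=1/2$, and this includes rectangle-covering constructions such as Fredman--Volper's.

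The missing idea is the integer inclusion--exclusion expansion of $\prod_{i\in A_0(\bfy)}(1-x_i)$. The circuit enumerates $\bfx$ on a set $B$ of size $bk$ with $b\approx0.99$, evaluates the monomials on $\overline B$ through covering families, and covers all queries by random pairs $(B,\boldsymbol{\varphi})$ using the bit-flip symmetry. Even this colored circuit does not improve the degrees by itself. The bound $l+r<1/2$ appears only after the degree version of rebalancing (Corollary~\ref{cor:rebalancing}, from the Alman--Li recursion) mixes its color classes with the trivial circuits $(P_{k,1/2},I)$ and $(I,P_{k,1/2})$.

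Also, $P_{n,1/2}$ is not a tensor power. The paper reaches it from $(P_{k,1/2})^{\otimes d}$ via the stack-based balancing permutations of Lemma~\ref{lem:block-grouping}, at a left-degree cost of $\binom{kd}{d}$. Without these ingredients there is nothing to optimize numerically. The constant $1/450000$ comes from the explicit $a,b$, the values $V(0),V(1/2),V(1)$, and the chord choice of $(l,r)$.
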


\subsection{The representation technique}

We first describe the framework of the representation technique~\cite{HJ10,MNPW19,RW26}, on which our algorithm builds.

The starting point of the representation technique is to carefully choose sets $\calA,\calB\subseteq\mathbb{Z}^n$ and then search for $\bfa\in \calA$ and $\bfb\in \calB$ such that $\bfc=\bfa+\bfb$ is a solution. For example, the classical meet-in-the-middle algorithm takes
\[
 \calA=\{0,1\}^{n/2}\times0^{n/2},\qquad
 \calB=0^{n/2}\times\{0,1\}^{n/2}.
\]

At first glance, this approach cannot break the $2^{n/2}$ barrier: to cover all $2^n$ candidate solutions, we need $\lvert\calA\rvert\cdot\lvert\calB\rvert\ge2^n$, so at least one of $\calA$ and $\calB$ has size at least $2^{n/2}$. The key observation of the representation technique~\cite{HJ10}, as used in~\cite{MNPW19}, is that one need not list $\bfa\cdot\bfz$ and $\bfb\cdot\bfz$ for all $\bfa\in \calA$ and $\bfb\in \calB$. The idea is that, instead of making $\calA$ and $\calB$ as small as possible, one should first enlarge them: the benefit is that each solution $\bfc$ then admits many representations $\bfc=\bfa+\bfb$ with $\bfa\in \calA$ and $\bfb\in \calB$.

\paragraph{Mixing and hashing.}
Suppose that the choice of $\calA$ and $\calB$ guarantees that every solution $\bfc$ has at least $N_{\mathrm{rep}}$ representations $(\bfa,\bfb)\in \calA\times \calB$. Following~\cite{AKKN16,MNPW19,RW26}, one first analyzes the algorithm under an idealized \emph{mixing} assumption, and then complements this analysis with a \emph{mixing dichotomy}: whenever the mixing assumption fails, the instance can be solved by a different, faster algorithm. The algorithm of~\cite{RW26}, as well as ours, relies on a more technically involved form of this dichotomy. Nonetheless, throughout this overview the reader may keep in mind the simplest form of the mixing assumption, which is the easiest to analyze: the sums $\bfa\cdot\bfz$ are \emph{pairwise distinct} over all representations $\bfa+\bfb=\bfc$.

In this case, one can avoid listing all $\lvert\calA\rvert+\lvert\calB\rvert$ sums as follows. Pick a random prime $p$ of size near $N_{\mathrm{rep}}$, and then a uniformly random residue $r$ modulo $p$.
\begin{enumerate}
 \item List all $\bfa\in \calA$ with $\bfa\cdot\bfz\equiv r\pmod p$ and all $\bfb\in \calB$ with $\bfb\cdot\bfz\equiv\boldsymbol{\tau}-r\pmod p$; denote these sets by $\boldsymbol{A}$ and $\boldsymbol{B}$, respectively.
 \item Find $\bfa\in \boldsymbol{A}$ and $\bfb\in \boldsymbol{B}$ such that $\bfa+\bfb$ is a solution.\label{intro:item:sift_list}
\end{enumerate}
With suitable hashing parameters and repetitions, a representation survives with high probability. In each trial, the expected filtered list sizes are roughly $\lvert\calA\rvert/N_{\mathrm{rep}}$ and $\lvert\calB\rvert/N_{\mathrm{rep}}$, up to polynomial factors.

For the choices of $\calA$ and $\calB$ below, $\boldsymbol{A}$ and $\boldsymbol{B}$ can be listed in time roughly $\lvert\boldsymbol{A}\rvert+\lvert\boldsymbol{B}\rvert$, together with the cost of generating suitable partial lists, so this part can run well below $2^{n/2}$. However, another technical problem arises: if $\calA+\calB\not\subseteq\{0,1\}^n$, then step~\ref{intro:item:sift_list} becomes harder, since one must sift out the \emph{pseudosolutions}, i.e., pairs with $\bfa\cdot\bfz+\bfb\cdot\bfz=\boldsymbol{\tau}$ but $\bfa+\bfb\notin\{0,1\}^n$.

The insight of~\cite{RW26} is that a carefully optimized algorithm for detecting true solutions can be used to speed up certain generalized subset sum problems. As an illustrative application, consider the case where the coefficient vector $\bfc$ may range over $\{-1,0,1\}^n$. In~\cite{MNPW19}, the authors showed how to beat the meet-in-the-middle bound $3^{n/2}$ by taking $\calA=\{0,1\}^n$ and $\calB=\{0,-1\}^n$. Then~\cite{RW26} improved this exponent by taking $\calA$ and $\calB$ to be carefully chosen subsets of larger sets,
\[
 \calA\subseteq\{0,1,2\}^n,\qquad \calB\subseteq\{-2,-1,0\}^n.
\]
The values of $a_i+b_i$ are given by the following table:
\[
\begin{array}{c|ccc}
 a_i\backslash b_i&-2&-1&0\\\hline
 0&-2&\mathbf{-1}&\mathbf{0}\\
 1&\mathbf{-1}&\mathbf{0}&\mathbf{1}\\
 2&\mathbf{0}&\mathbf{1}&2
\end{array}
\]
where the entries in bold are those with $a_i+b_i\in\{-1,0,1\}$. Since a true solution must lie in $\{-1,0,1\}^n$, the relevant compatibility matrix is
\[
P_{\mathrm{RW}}=\begin{array}{c|ccc}
 a_i\backslash b_i&-2&-1&0\\\hline
 0&0&1&1\\
 1&1&1&1\\
 2&1&1&0
\end{array}.
\]
Sifting out pseudosolutions in step~\ref{intro:item:sift_list} thus leads to the following algorithmic problem: given sets of vectors $\boldsymbol{A}\subseteq\{0,1,2\}^n$ and $\boldsymbol{B}\subseteq\{-2,-1,0\}^n$, find $\bfa\in \boldsymbol{A}$ and $\bfb\in \boldsymbol{B}$ such that the corresponding entry of $P_{\mathrm{RW}}^{\otimes n}$ is $1$, i.e., $(P_{\mathrm{RW}})_{a_i,b_i}=1$ for every $i$. Randolph and W\k{e}grzycki~\cite{RW26} call this problem \emph{compatibility testing}; they designed a fast compatibility testing algorithm for $P_{\mathrm{RW}}$, which improves the exponent obtained in~\cite{MNPW19}. One of our key technical contributions is a different choice of representation, whose compatibility matrix admits a surprisingly fast testing algorithm.

\subsection{Algorithm overview}\label{sec:overview}

We now describe our algorithm at a level of detail sufficient to verify its running time; Sections~\ref{sec:reduction}--\ref{sec:together} give the formal proof. Throughout, $\h(t)\coloneqq-t\log_2t-(1-t)\log_2(1-t)$ denotes the binary entropy function, with $0\log_20\coloneqq0$.

\paragraph{The representation and its compatibility matrix.}
We represent $\bfc\in\{0,1\}^m$ as $\bfc=\bfa+\bfb$ with $\bfa\in\{0,1\}^m$ and $\bfb\in\{-1,0,1\}^m$, using
\[
 0=0+0=1+(-1),\qquad 1=0+1=1+0.
\]
For $s\in[0,1]$, let $T_{m,s}\subseteq\{-1,0,1\}^m$ be the set of vectors with exactly $sm$ nonzero entries. The values of $a_i+b_i$ are given by the following table:
\[
\begin{array}{c|ccc}
 a_i\backslash b_i&-1&0&1\\\hline
 0&-1&\mathbf{0}&\mathbf{1}\\
 1&\mathbf{0}&\mathbf{1}&2
\end{array}
\]
where the entries in bold are those with $a_i+b_i\in\{0,1\}$. Accordingly, the compatibility matrix is
\[
P=\begin{array}{c|ccc}
 a_i\backslash b_i&-1&0&1\\\hline
 0&0&1&1\\
 1&1&1&0
\end{array},
\]
and compatibility testing asks, given $\boldsymbol{A}\subseteq\{0,1\}^m$ and $\boldsymbol{B}\subseteq T_{m,s}$, whether some $\bfa\in \boldsymbol{A}$ and $\bfb\in \boldsymbol{B}$ satisfy $\bfa+\bfb\in\{0,1\}^m$, i.e., whether the corresponding entry of $P^{\otimes m}$ is $1$.

Relabeling the columns via $1\mapsto\mathtt0$, $-1\mapsto\mathtt1$, and $0\mapsto\mathtt*$, and ordering them as $\mathtt0,\mathtt1,\mathtt*$, turns $P$ into the partial match matrix
\[
P_1=\begin{array}{c|ccc}
 &\mathtt0&\mathtt1&\mathtt*\\\hline
 \mathtt0&1&0&1\\
 \mathtt1&0&1&1
\end{array},
\]
where $\mathtt*$ is a wildcard matching both $\mathtt0$ and $\mathtt1$. We call strings over $\{0,1\}$ \emph{binary strings} and strings over $\{0,1,*\}$ \emph{query strings} (or queries). Under this relabeling, $T_{m,s}$ becomes the set of queries with exactly $(1-s)m$ wildcards. Thus compatibility testing for our representation is exactly partial match. A key technical ingredient of our algorithm is the following lemma for the ``half-stars'' case $s=1/2$.

\begin{lemma}\label{intro:lem:half-stars}
 There are nonnegative constants $l,r$ and a constant $\xi>0$ with $l+r\le1/2-2\xi$ such that, for every $\varepsilon>0$, there is a deterministic algorithm which, for even $m$, tests compatibility for $s=1/2$ in time
 \[
 O\!\left(2^{(l+\varepsilon)m}\lvert\boldsymbol{A}\rvert+2^{(r+\varepsilon)m}\lvert\boldsymbol{B}\rvert\right).
 \]
\end{lemma}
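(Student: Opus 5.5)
We will prove the lemma by giving a depth-2 linear circuit for the half-stars partial match matrix, and then plugging it into the framework of Alman and Li. Let $M$ be the submatrix of $P_1^{\otimes m}$ whose rows are indexed by $\{\mathtt0,\mathtt1\}^m$ and whose columns are indexed by the queries with exactly $m/2$ wildcards. Suppose we can write $M=UV$ over a suitable ring (over the integers, or modulo a large prime, which suffices for detecting a nonzero entry since all entries are nonnegative) with the following two properties. First, each row of $U$ has at most $2^{(l+o(1))m}$ nonzeros, each computable on the fly. Second, each column of $V$ has at most $2^{(r+o(1))m}$ nonzeros, also computable on the fly. Then compatibility testing reduces to computing $\sum_{\bfa\in\boldsymbol A,\bfb\in\boldsymbol B}M_{\bfa,\bfb}=\sum_k\bigl(\sum_{\bfa\in\boldsymbol A}U_{\bfa,k}\bigr)\bigl(\sum_{\bfb\in\boldsymbol B}V_{k,\bfb}\bigr)$. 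We scatter each $\bfa$ into the middle layer, scatter each $\bfb$ into the middle layer, and take an inner product, using a hash map over the touched middle coordinates. This costs $O(2^{(l+\varepsilon)m}|\boldsymbol A|+2^{(r+\varepsilon)m}|\boldsymbol B|)$. So the whole task is to build such a circuit with $l+r\le 1/2-2\xi$. The trivial circuits give exactly $l+r=1/2$: either expand each query into its $2^{m/2}$ completions, or use a single identity layer.

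\textbf{Construction.} The plan is to obtain a small gain on a constant-size block and then tensorize. We fix a block length $k$ and consider the blocks of $P_1^{\otimes k}$ whose query part has a prescribed number $w$ of wildcards. A query with exactly $m/2$ wildcards, split into $m/k$ blocks, has a wildcard-count profile across the blocks. By random permutation of coordinates (applied to both sides, with polynomially many repetitions) or by enumerating profile types (polynomially many), we may assume the profile is the typical one. On the typical profile most blocks have about $k/2$ wildcards, and the matrix factors as a Kronecker product of block matrices $M_w$. For each $w$ we choose a middle-layer factorization $M_w=U_wV_w$ over strings, with row sparsity $\alpha_w$ and column sparsity $\beta_w$. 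These might mix the two trivial choices, e.g. expand the wildcards only on a subset of the positions, and additionally use a nontrivial identity such as inclusion--exclusion with signed entries. Row and column sparsities are multiplicative under Kronecker products. This gives $l=\frac1k\sum_w p_w\log\alpha_w$ and $r=\frac1k\sum_w p_w\log\beta_w$, where $p_w$ is the fraction of blocks with $w$ wildcards.

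\textbf{Main obstacle.} Everything hinges on exhibiting a single explicit small factorization that beats the product $\alpha\beta=2^{k/2}$ of the trivial circuits. The naive depth-2 constructions, such as splitting positions into ``expand on the $\bfb$ side'' and ``expand on the $\bfa$ side'', only interpolate along $l+r=1/2$. This is because they handle one coordinate at a time, and partial match is a product of per-coordinate ORs. To go below the line I would first try a non-product, signed identity on a block of $k=2$ or $4$ coordinates. Concretely, I would exploit the fact that a query with exactly $w$ wildcards in a block is a low-dimensional object. For instance, one can write the indicator $[\bfa\text{ matches }\bfb]$ restricted to fixed-weight queries as a combination of terms indexed by pairs (subset $S$ of the query's wildcard positions, the value of $\bfa$ off $S$), with Möbius-inversion coefficients. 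The cancellations can then kill terms that would otherwise be dense, because the wildcard count is fixed. I would search for such a factorization by computer, as a small rank/sparsity optimization over $k\le 6$, and accept any block achieving $\log_2(\alpha_w\beta_w)<w$ on average over the typical profile. Tensorizing that block yields the constant $\xi>0$. The tiny final constant $1/450000$ in the main theorem suggests the gain $\xi$ is very small, consistent with such a computer-found gadget. The remaining steps, which are tensoring, profile handling, and the evaluation algorithm, are routine.
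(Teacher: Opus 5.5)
There is a genuine gap. The whole content of the lemma is the existence of explicit $\EXACTZ$ circuits for $P_{m,1/2}$ with $l+r<1/2$, and your proposal does not construct one.

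Your reduction from a factorization to an algorithm is correct and is what the paper does in Corollary~\ref{cor:circuit-to-pm}: scatter both sides into counters indexed by middle gates, then take the inner product over the integers. But you leave the ``main obstacle'' to a hypothetical computer search over $k\le6$. That is not an argument, and nothing indicates such a small gadget exists. The paper's gain appears only at very large block lengths: the chosen $a\approx3.8\cdot10^{-4}$ needs $ak\ge1$, and all estimates are asymptotic in $k$.

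The missing idea is the construction of Lemma~\ref{lem:fixed-B-circuit} and Theorem~\ref{thm:half-stars-circuit}:
\begin{itemize}
\item On a large set $B$ ($\lvert B\rvert\approx0.991k$), use the trivial circuit in which queries enumerate the completions of their wildcards.
\item Expand the $2ak$ specified positions outside $B$ by inclusion--exclusion into monomials $\bfx^U$.
\item Group these monomials with a covering family of $2\mu k$-sets $Q\supseteq U$, so that a middle gate is a pair $(Q,\boldsymbol\sigma)$ with $\boldsymbol\sigma$ an assignment on $B\cup Q$.
\item Reach every query via the flip symmetry $\bfx\mapsto\bfx\oplus\boldsymbol\varphi$ and a random covering by pairs $(B,\boldsymbol\varphi)$. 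The left degree is controlled by averaging over $\boldsymbol\varphi$ and a Chernoff bound.
\end{itemize}

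Even with that ingredient, your plan of tensorizing fixed block factorizations would not suffice. The paper's bounds for this circuit as a whole give left and right exponents $V(1)\approx0.0076$ and $V(0)\approx0.4927$. These sum to about $0.5003>1/2$, so the circuit does not beat the trivial one on its own. The improvement comes from coloring the middle gates by $(\mu,\theta)$ and rebalancing (Corollary~\ref{cor:rebalancing} and Lemma~\ref{lem:colored-transfer}). In the recursive tensor power, each color class of the outer factor independently chooses among this circuit and the two trivial ones at a rescaled weight. The Alman--Li analysis then bounds the result by the colored volume.

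Your profile step is also only sketched. As written it is randomized, while the lemma asks for a deterministic algorithm, and it needs block circuits for every $w$ in the support of the typical profile. The paper avoids both issues with the deterministic stack procedure of Lemma~\ref{lem:block-grouping}. It brings every half-stars query to a profile with exactly $k/2$ wildcards in every block, using a family of at most $\binom{kd}{d}$ permutations. Each query uses one permutation while each string enumerates all of them. The overhead $2^{\h(1/k)kd}$ therefore falls only on the left degree and vanishes in the exponent as $k\to\infty$, and only $P_{k,1/2}$ is ever needed.

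A minor slip: the factorization $P=P\cdot I$, in which strings enumerate their matching queries, has $l+r=\h(1/2)=1$. Only $(I,P)$ lies on $l+r=1/2$.
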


Other support densities $s<1/2$ reduce to the case $s=1/2$ by a simple padding argument: we append $(1-2s)m$ dummy pairs $(1,-1)$, which are always compatible, to every left and right vector. After relabeling, the padded strings have even length $(2-2s)m$, and exactly half of the positions of each query are specified. Lemma~\ref{intro:lem:half-stars} then tests compatibility in time
\begin{equation}\label{eqn:padding}
  O\!\left(2^{2(1-s)(l+\varepsilon)m}\lvert\boldsymbol{A}\rvert+2^{2(1-s)(r+\varepsilon)m}\lvert\boldsymbol{B}\rvert\right).
\end{equation}

If ``$\xi>0$'' is relaxed to ``$\xi\ge0$'', the lemma would be trivial: the pair $(l,r)=(0,1/2)$ is achieved by letting each query enumerate all $2^{m/2}$ completions of its wildcards. As we show below, beating this trivial bound $l+r=1/2$ by any positive constant already yields a faster subset sum algorithm.

\paragraph{The search space.}
We use representations only on the first $n/3$ coordinates, which we call the \emph{prefix}, and plain meet-in-the-middle on the remaining $2n/3$; this lets us balance the asymmetric costs of the two sides. For a parameter $\ell$ chosen below and a support density $s\in[0,1/2]$, let
\[
 \calA=\{0,1\}^{n/3}\times\{0,1\}^{\ell}\times0^{2n/3-\ell},\qquad
 \calB=T_{n/3,s}\times0^{\ell}\times\{0,1\}^{2n/3-\ell},
\]
so that
\[
 \lvert\calA\rvert=2^{n/3+\ell},\qquad
 \lvert\calB\rvert=\binom{n/3}{sn/3}2^{sn/3+2n/3-\ell}.
\]
We run the algorithm for every possible value of $s$.

\paragraph{Mixing case: many correct keys.}
Fix a solution $\bfc$. The representations $\bfc=\bfa+\bfb$ with $(\bfa,\bfb)\in\calA\times\calB$ differ only on the prefix, and each binary prefix of $\bfa$ determines $\bfb=\bfc-\bfa$. Let $N_{\mathrm{sum}}$ be the number of distinct subset sums of the first $n/3$ input numbers. The mixing assumption discussed above corresponds to the perfect case $N_{\mathrm{sum}}=2^{n/3}$, in which all representations of $\bfc$ have distinct keys $\bfa\cdot\bfz$. We replace it by the weaker assumption that $N_{\mathrm{sum}}$ is large, and determine below how large it needs to be.

We first observe that some support layer contains many distinct keys. Pair each binary prefix of $\bfa$ with its complement $\bfa'$. The corresponding vectors $\bfb=\bfc-\bfa$ and $\bfb'=\bfc-\bfa'$ have prefix support sizes $k$ and $n/3-k$, respectively, and the prefix sums of $\bfa$ and $\bfa'$ add up to the total prefix sum, which is fixed. Hence at least $N_{\mathrm{sum}}/2$ distinct prefix sums have a representation with prefix support size at most $n/6$, so some layer $s\le1/2$ contains $\Omega(N_{\mathrm{sum}}/n)$ distinct correct keys.

Since we run over all $s$, it suffices for the algorithm to succeed in this layer. We hash the keys modulo a random prime of size near $N_{\mathrm{sum}}$, as described above.\footnote{We remark that the formal proof uses a slightly different hash function, which avoids some technicalities arising when the input integers have up to $2^{o(n)}$ bits.} The surviving lists $\boldsymbol{A},\boldsymbol{B}$ then have expected sizes roughly
\[
 \mathbb{E}[\lvert\boldsymbol{A}\rvert]=O\!\left(\frac{\lvert\calA\rvert}{N_{\mathrm{sum}}}\right),\qquad
 \mathbb{E}[\lvert\boldsymbol{B}\rvert]=O\!\left(\frac{\lvert\calB\rvert}{N_{\mathrm{sum}}}\right),
\]
and they can be generated from meet-in-the-middle partial lists in time close to their sizes.

We then run the padded compatibility testing algorithm~\eqref{eqn:padding} with $m=n/3$ on the prefixes of the surviving vectors, grouping them by their complete integer key so that compatibility and the target-sum condition are checked together. Using $\binom{n/3}{sn/3}\le2^{\h(s)n/3}$, for any fixed $\varepsilon>0$, the expected cost is, up to subexponential factors,
\[
\begin{aligned}
 &\frac{2^{2(1-s)(l+\varepsilon)n/3}\lvert\calA\rvert+2^{2(1-s)(r+\varepsilon)n/3}\lvert\calB\rvert}{N_{\mathrm{sum}}}\\
 &\qquad\le\frac{2^{2(1-s)(l+\varepsilon)n/3+n/3+\ell}+2^{2(1-s)(r+\varepsilon)n/3+(\h(s)+s)n/3+2n/3-\ell}}{N_{\mathrm{sum}}}.
\end{aligned}
\]
Choosing
\[
 \ell\coloneqq\frac{n}{6}\left(1+\h(s)+s-2(l-r)(1-s)\right)
\]
balances the two terms, and their common exponent is
\[
 \frac{n}{2}+\left((l+r+2\varepsilon)(1-s)+\frac{\h(s)+s}{2}\right)\frac{n}{3}-\log_2N_{\mathrm{sum}}.
\]
Using $\h(s)\le1$, $l+r\le1/2-2\xi$, and $s\le1/2$ gives the upper bound
\[
 \frac{n}{2}+\frac{(1-\xi)n}{3}-\log_2N_{\mathrm{sum}}+\frac{2\varepsilon n}{3}.
\]
Consequently, whenever $N_{\mathrm{sum}}\ge2^{(1-\xi+\eta)n/3}$ for some constant $\eta>0$, choosing $\varepsilon<\eta/2$ beats $2^{n/2}$ by an exponential factor.

\paragraph{Mixing dichotomy.}
When $N_{\mathrm{sum}}$ is small, we instead use the few-sums meet-in-the-middle procedure of~\cite{AKKN16}, which runs in time $\sqrt{N_{\mathrm{sum}}}2^{n/3}$. For a threshold $\gamma$,
\begin{itemize}
 \item if $N_{\mathrm{sum}}\le2^{(1-\gamma)n/3}$, the procedure of~\cite{AKKN16} takes time at most $2^{(1/2-\gamma/6)n+o(n)}$;
 \item otherwise, the many-sums procedure above takes time at most $2^{(1/2-(\xi-\gamma)/3+2\varepsilon/3)n+o(n)}$.
\end{itemize}
Choosing $\gamma\coloneqq2\xi/3$ makes both exponents before the added slack $1/2-\xi/9$. Accounting for the partial match algorithm's exponent slack, subset sum can be solved in time $O(2^{(1/2-\xi/9+\varepsilon)n})$ for every $\varepsilon>0$. Our circuit gives $\xi>1/50000$, leaving positive slack before the exponent $1/2-1/450000$ of Theorem~\ref{thm:main}.

\subsection{Partial match via depth-2 linear circuits}\label{sec:overview-circuits}
Finally, we briefly explain how we obtain the partial match algorithm of Lemma~\ref{intro:lem:half-stars}.

\paragraph{Depth-2 linear circuits.}
Let $M\in\{0,1\}^{I\times J}$ be a large, implicitly given $01$ matrix. Given $\boldsymbol{A}\subseteq I$ and $\boldsymbol{B}\subseteq J$, with sizes much smaller than $\lvert I\rvert$ and $\lvert J\rvert$, respectively, we want to decide whether some $(a,b)\in \boldsymbol{A}\times \boldsymbol{B}$ has $M_{a,b}=1$; this is compatibility testing for $M$.

A depth-2 linear circuit for $M$ is simply a matrix factorization $M=\bfL\bfR$. It can be viewed as a circuit computing $\bfu\mapsto\bfu\bfL\bfR$: the input gates carry the coordinates of $\bfu$, each middle gate computes a linear combination of the values at the input gates, and each output gate computes a linear combination of the values at the middle gates. The relevant measure here is the pair of \emph{degrees}: the circuit has degrees $(\deg_{\sfL},\deg_{\sfR})$ if every row of $\bfL$ has at most $\deg_{\sfL}$ nonzero entries and every column of $\bfR$ has at most $\deg_{\sfR}$ nonzero entries.

Given such a circuit, a simple algorithm solves compatibility testing in time $O(\lvert\boldsymbol{A}\rvert\deg_{\sfL}+\allowbreak\lvert\boldsymbol{B}\rvert\deg_{\sfR})$, up to the cost of arithmetic. Writing $M_{a,b}=\bfL_{a,\cdot}\cdot\bfR_{\cdot,b}$, we obtain
\[
 \sum_{a\in \boldsymbol{A},\,b\in \boldsymbol{B}}M_{a,b}=\Bigl(\sum_{a\in \boldsymbol{A}}\bfL_{a,\cdot}\Bigr)\cdot\Bigl(\sum_{b\in \boldsymbol{B}}\bfR_{\cdot,b}\Bigr),
\]
and by the degree bounds, the two vectors on the right have at most $\lvert\boldsymbol{A}\rvert\deg_{\sfL}$ and $\lvert\boldsymbol{B}\rvert\deg_{\sfR}$ nonzero entries, respectively. When the factorization holds over the integers, the left-hand side is exactly the number of compatible pairs; we call such circuits $\EXACTZ$ circuits.

Many known compatibility testing algorithms implicitly construct such a circuit, with the caveat that their arithmetic is over the Boolean semiring rather than the integers; we call these $\OR$ circuits. For example, the Orthogonal Vectors algorithm of~\cite{NW21} covers the Orthogonal Vectors matrix (also known as the disjointness matrix) by combinatorial rectangles, each of which is a rank-one $01$ matrix represented by a single middle gate.

\paragraph{Uniformization.}
Proving the mere \emph{existence} of a good depth-2 linear circuit would a priori yield only a \emph{non-uniform} algorithm, since a uniform algorithm must also \emph{efficiently construct} the circuit. For depth-2 linear circuits, this is not a genuine obstacle, thanks to the \emph{tensor product}: it lets us start from a depth-2 linear circuit of \emph{fixed} size and bootstrap it into an explicit family that handles arbitrarily large $n$.

This phenomenon is sometimes called \emph{uniformization}. Williams~\cite[Theorem~8]{Wil24} showed that a notion called \emph{equality rank} admits uniformization, and combined this with a fixed-size construction for the disjointness matrix to obtain Orthogonal Vectors algorithms. Perhaps the most famous instance of uniformization is matrix multiplication: Strassen~\cite{Str69} found a bilinear algorithm of rank $7$ for $2\times2$ matrix multiplication, and applying it recursively turns this single finite construction into a uniform algorithm for all dimensions.

Alman and Li~\cite{AL25} observed that an equality rank bound implies a degree bound for depth-2 linear circuits. Their framework, which turns a depth-2 linear circuit for a matrix into an algorithm, thus generalizes Williams's approach, and it is the framework we build on.

\paragraph{Rebalancing.}
Uniformization therefore lets us concentrate on bounding the degrees of circuits; explicit constructions, and hence algorithms, follow automatically. Bounding the degrees directly is nevertheless difficult, so we take a two-step route:
\begin{enumerate}[label=(\arabic*)]
  \item construct a family of circuits with complementary advantages, none of which need meet our degree requirement on its own; and
  \item combine their advantages to obtain a genuine improvement in the degrees.
\end{enumerate}
For step~(2), previous works on depth-2 linear circuits~\cite{JS13,AGP23,Ser22,AL25} developed a technique called \emph{rebalancing}. These works use rebalancing to control the circuit \emph{size}, whereas we need to control the \emph{degrees}. Fortunately, \cite{AL25} gives a general analysis of a certain rebalancing process, which adapts to our setting after a slight technical modification.

\paragraph{The partial match matrix.}
For our algorithmic purposes, we need to uniformly construct circuits for the matrix $M=P^{\otimes n}$ with its columns restricted to $T_{n,s}$. After the relabeling above, this is the matrix $P_{n,s}$, which we call a \emph{partial match matrix}.

The rest of the construction is technically involved. In the remainder of this overview, we only sketch the main ideas and their relation to previous work, deferring the details to Section~\ref{sec:circuits}.

A minor technical issue is that $P_{n,s}$ is not itself a tensor power, only a submatrix of the tensor power $P^{\otimes n}$, so the general uniformization results for tensor powers do not apply directly. We overcome this by noting that $P_{n,1/2}$ is still \emph{nearly} a tensor power: up to coordinate permutations, it can be covered by a small number of copies of a tensor power of a fixed-size partial match matrix $P_{k,1/2}$, which is enough for our purpose.

\paragraph{Circuit construction.}
To explain the ingredients of our circuit construction, we first review a classical result of Fredman and Volper~\cite{FV82} on data structures for partial match, which can be implicitly viewed as an efficient $\PAR$ circuit for $P_{n,s}$.

There are two trivial circuits, given by the factorizations $P_{n,s}=P_{n,s}\cdot I$ and $P_{n,s}=I\cdot P_{n,s}$. In the first one, every string enumerates the $\binom{n}{sn}$ possible wildcard masks; in the second one, every query enumerates all $2^{(1-s)n}$ ways to fill its wildcards. The first one gives $l+r\le\h(s)$, and the second one gives $l+r\le1-s$.

In the regime $s\le1/2$, Fredman and Volper combine the two circuits as follows.
\begin{enumerate}[label=(\arabic*)]
  \item Pick a family $\mathscr{B}$ of subsets of $[n]$ of size $2sn$ such that every subset $S$ of size $sn$ is contained in some $B\in\mathscr{B}$. Assign to each $S$ one such $B$.
  \item For each $B\in\mathscr{B}$, construct a circuit for the submatrix formed by the columns of the queries $\bfy$ whose set $S(\bfy)$ of non-wildcard positions is assigned to $B$. This circuit has $2^{2sn}$ middle gates, corresponding to all assignments of values to the positions in $B$. Each string is linked to the unique assignment it agrees with, and each query is linked to the $2^{sn}$ assignments that agree with it on $S(\bfy)$, that is, to all ways of filling the remaining positions $B\setminus S(\bfy)$.
\end{enumerate}
Since every query is handled by exactly one $B$, the right degree is $2^{sn}$. Each $B$ contributes $1$ to the degree of every string, so the left degree is $\lvert\mathscr{B}\rvert$. By a classical probabilistic argument, all sets $S$ can be covered by roughly $\binom{n}{sn}/\binom{2sn}{sn}\approx2^{(\h(s)-2s)n}$ sets $B$.

This gives $l+r\le\h(s)-s$, which improves on both trivial circuits for every $0<s<1/2$. One can view this construction as using a \emph{random covering} argument to combine the advantages of the two trivial factorizations. At $s=1/2$, however, the only choice is $B=[n]$, and the bound degenerates to the trivial $l+r=1/2$.

Interestingly, this construction, like the two trivial ones, is a $\PAR$ circuit, and in particular also an $\OR$ circuit. Although $\OR$ circuits are often the easier ones to construct, for example for the disjointness matrix~\cite{NW21,AL25}, they do not help for the partial match matrix in our regime: a simple combinatorial argument shows that they cannot beat $l+r=1/2$ (Theorem~\ref{thm:or-lower-bound}).

Therefore, to improve the case $s=1/2$ that we are interested in, we need to genuinely exploit algebraic cancellation. Besides the two trivial factorizations, we introduce a third one, coming from the \emph{inclusion--exclusion principle}. For a query $\bfy$, let $A_0(\bfy)$ and $A_1(\bfy)$ be the sets of positions where $\bfy$ is $\mathtt0$ and $\mathtt1$, respectively. Then whether a binary string $\bfx$ matches $\bfy$ is given by the polynomial
\[
 \prod_{i\in A_1(\bfy)}x_i\prod_{i\in A_0(\bfy)}(1-x_i)=\sum_{A_1(\bfy)\subseteq U\subseteq A_1(\bfy)\cup A_0(\bfy)}(-1)^{\lvert U\rvert-\lvert A_1(\bfy)\rvert}\bfx^U,
\]
where $\bfx^U\coloneqq\prod_{i\in U}x_i$. Viewing each monomial $\bfx^U$ as a middle gate, the right-hand side is a new factorization of the partial match matrix, now with coefficients $\pm1$: the string $\bfx$ is linked to all subsets of its support, and the query $\bfy$ to all sets $U$ in the above sum. Combining this factorization with a random covering argument and feeding the result into the rebalancing machinery, we can, perhaps surprisingly, beat $l+r=1/2$.

\subsection{Comparison with the OpenAI preprint}\label{sec:comparison}

In this subsection, we briefly discuss the difference between our approach and the approach of the recent OpenAI preprint~\cite{OpenAI26}. The OpenAI preprint also follows the representation technique framework developed in~\cite{HJ10,MNPW19,RW26}. The technical twist, however, is that their approach \emph{bypasses} the compatibility testing problem altogether.

The high-level approach of the OpenAI preprint can be described as follows. First, pick a large random prime $p\approx2^{(1/2+\varepsilon)n}$. Then use a procedure~(1) to estimate a checksum over all modular solutions, that is, over all $\bfc\in\{0,1\}^n$ with $\bfc\cdot\bfz\equiv\boldsymbol{\tau}\pmod p$. Next, use another procedure~(2) to find all $\bfc$ with $\bfc\cdot\bfz=\boldsymbol{\tau}+kp$ for some $k\in\mathbb{Z}\setminus\{0\}$; these are called \emph{aliases} in the preprint. Finally, subtracting the contribution of the aliases from the checksum leaves the checksum of the integral solutions, and this detects whether an integral solution exists. We also note that, to make the algorithm work, procedure~(1) actually only needs to estimate the checksum rather than compute it exactly; this step also contains technically involved ingredients.

Only procedure~(2) uses the representation technique, and in their algorithm there is no need to design a faster compatibility testing algorithm. Recall that in the usual representation technique, compatibility testing is required to sift out the pseudosolutions when one wants to find a true solution, because there is no good bound on the number of pseudosolutions. In their setting, however, the randomness of $p$ allows them to bound the number of pseudosolutions representing the aliases, so they can afford to enumerate all of these pseudosolutions.

Compared with the OpenAI result, which avoids compatibility testing, our improvement comes from tackling the compatibility testing problem directly. To achieve this, we use the machinery of depth-2 linear circuits, together with new constructions of such circuits.

\subsection{Organization}

\hspace*{\parindent}\mbox{Section~\ref{sec:reduction}\ }reduces subset sum to an algorithm for ``half-stars'' partial match.

\mbox{Section~\ref{sec:circuits}\ }constructs such an algorithm from depth-2 linear circuits. Section~\ref{sec:circuit-prelim} introduces our notation for depth-2 linear circuits. Section~\ref{sec:circuit-enumeration} establishes the precise form of uniformization that we need for the partial match matrices. Section~\ref{sec:colored-rebalancing} proves our degree version of rebalancing. Section~\ref{sec:integer-circuit} gives the circuit construction and bounds its degrees with the help of rebalancing. Finally, Section~\ref{sec:partial-match-circuits} shows that $\OR$ circuits cannot improve on the trivial bound $l+r=1/2$; this justifies our use of algebraic cancellation.

\mbox{Section~\ref{sec:together}\ }combines the two with explicit parameters to prove Theorem~\ref{thm:main}: subset sum can be solved by a randomized algorithm in worst-case time $O(2^{(1/2-1/450000)n})$.

\section{Reduction}\label{sec:reduction}

Formally, an instance of subset sum consists of integers $\bfz=(z_1,\ldots,z_n)\in\mathbb{Z}^n$ and a target $\boldsymbol{\tau}\in\mathbb{Z}$, and the task is to find $\bfc\in\{0,1\}^n$ with $\bfc\cdot\bfz=\boldsymbol{\tau}$, or to report that no such vector exists. We assume that the input length is $2^{o(n)}$ bits, and the running times of our algorithms are measured in bit operations.

In this section, we show that a sufficiently fast algorithm for a special case of partial match yields a subset sum algorithm beating $2^{n/2}$. Section~\ref{sec:pm-prelim} defines the partial match problem we need, and Section~\ref{sec:subset-sum-reduction} gives the reduction. Throughout this section, $\h(t)\coloneqq-t\log_2t-(1-t)\log_2(1-t)$ denotes the binary entropy function, with $0\log_20\coloneqq0$.

\subsection{Partial match}\label{sec:pm-prelim}

The compatibility testing problem arising from our representation is governed by the matrix
\[
P_1\coloneqq\begin{pmatrix}1&0&1\\0&1&1\end{pmatrix},
\]
which we call the \emph{partial match} matrix. Its rows are indexed by $0,1$ and its columns by $0,1,*$.

We write $\otimes$ for the \emph{Kronecker product} of matrices. In general, for matrices $A$ with rows indexed by $I$ and columns by $J$, and $B$ with rows indexed by $I'$ and columns by $J'$, the Kronecker product $A\otimes B$ has rows indexed by $I\times I'$ and columns by $J\times J'$, and is defined by
\[
(A\otimes B)_{(i,i'),(j,j')}\coloneqq A_{i,j}\cdot B_{i',j'}.
\]
Thus the rows of $P_1^{\otimes n}$ are indexed by \emph{binary strings} $\bfx\in\{0,1\}^n$, its columns by \emph{query strings} $\bfy\in\{0,1,*\}^n$, and $(P_1^{\otimes n})_{\bfx,\bfy}=\prod_{i=1}^n(P_1)_{x_i,y_i}$.

This notion lets us describe the matching of strings. For $\bfy\in\{0,1,*\}^n$, put $S(\bfy)\coloneqq\{i:y_i\ne *\}$, the set of \emph{specified} positions of $\bfy$. Then $(P_1^{\otimes n})_{\bfx,\bfy}=1$ if and only if $x_i=y_i$ for every $i\in S(\bfy)$, that is, $\bfx$ matches every specified position of $\bfy$; in this case we say that $\bfx$ \emph{matches} $\bfy$.

The reduction only needs the special case of partial match in which every query has exactly half of its positions specified.

\begin{definition}[Half-stars partial match]\label{def:pm-scheme}
In the \emph{half-stars partial match} problem, we are given an even integer $n$ and two sets
\[
\boldsymbol{A}\subseteq\{0,1\}^n,\qquad\boldsymbol{B}\subseteq\bigl\{\bfy\in\{0,1,*\}^n:\lvert S(\bfy)\rvert=n/2\bigr\}.
\]
The task is to decide whether some $\bfx\in\boldsymbol{A}$ matches some $\bfy\in\boldsymbol{B}$, that is, whether $(P_1^{\otimes n})_{\bfx,\bfy}=1$ for some $(\bfx,\bfy)\in\boldsymbol{A}\times\boldsymbol{B}$.
\end{definition}

The name ``half-stars'' refers to the requirement that exactly half of the positions of every query are wildcards.

We measure an algorithm for this problem by the work it spends on each string of either set. We call a pair $(l,r)\in\mathbb{R}_{\ge0}^2$ an \emph{admissible exponent pair} if for every $\varepsilon>0$ there is a deterministic algorithm solving half-stars partial match in time
\[
O\!\left(2^{(l+\varepsilon)n}\lvert\boldsymbol{A}\rvert+2^{(r+\varepsilon)n}\lvert\boldsymbol{B}\rvert\right).
\]
Clearly admissibility is closed upward: if $(l,r)$ is an admissible exponent pair, $l'\ge l$ and $r'\ge r$, then $(l',r')$ is also admissible.

For example, $(0,1/2)$ is an admissible exponent pair: sort $\boldsymbol{A}$, and for each $\bfy\in\boldsymbol{B}$, enumerate the $2^{n/2}$ binary strings matching $\bfy$ and look each of them up in $\boldsymbol{A}$. The main result of this section, Theorem~\ref{thm:subset-sum-transfer}, shows that improving the sum $l+r$ below $1/2$ by any positive constant already beats $2^{n/2}$ for subset sum.

\subsection{The reduction}\label{sec:subset-sum-reduction}

\begin{theorem}\label{thm:subset-sum-transfer}
Suppose $(l,r)$ is an admissible exponent pair and $l+r\le1/2-2\xi$ for some $\xi>0$. For every $\varepsilon>0$, subset sum has a randomized algorithm in time $O(2^{(1/2-\xi/9+\varepsilon)n})$.
\end{theorem}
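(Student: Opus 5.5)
We follow the overview in Section~\ref{sec:overview}, making each step rigorous. The plan is a dichotomy on $N_{\mathrm{sum}}$, the number of distinct subset sums of the first $n/3$ integers $z_1,\dots,z_{n/3}$. We assume $3\mid n$ by padding with zeros, and we may estimate $N_{\mathrm{sum}}$ up to a $\mathrm{poly}(n)$ factor by random sampling. Alternatively, we run both branches in parallel and stop at the first solution found; this avoids estimating $N_{\mathrm{sum}}$ at all.

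\textbf{Few-sums case.} Set $\gamma=2\xi/3$. If $N_{\mathrm{sum}}\le 2^{(1-\gamma)n/3}$, we invoke the few-sums meet-in-the-middle procedure of~\cite{AKKN16}. It runs in time $\sqrt{N_{\mathrm{sum}}}\,2^{n/3}$ up to polynomial factors, which gives $2^{(1/2-\xi/9)n+o(n)}$. We will need to state this procedure precisely, including that it handles $2^{o(n)}$-bit integers.

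\textbf{Many-sums case.} Here $N_{\mathrm{sum}}\ge 2^{(1-\gamma)n/3}$. We fix a hypothetical solution $\bfc$ and use the complement-pairing argument to find a layer $s\le 1/2$ with $\Omega(N_{\mathrm{sum}}/n)$ distinct correct prefix keys. We loop over all $O(n)$ values of $s$. For each $s$, we define $\calA,\calB$ as in the overview with the stated $\ell$.

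For hashing, we do not rely on a random prime modulus near $N_{\mathrm{sum}}$, which is delicate with long integers. Instead we first reduce all integers modulo a random prime of $O(n)$ bits, so that no false collisions among relevant sums occur with high probability. We then apply a pairwise-independent hash $x\mapsto(\alpha x+\beta)\bmod q$ with range about $N_{\mathrm{sum}}/n$. A second-moment or Markov argument then shows three things:
\begin{itemize}
\item some bucket $r$ contains a correct key with constant probability, over a random choice of $r$ and polynomially many repetitions;
\item $\mathbb{E}|\boldsymbol{A}|\lesssim|\calA|/N_{\mathrm{sum}}$;
\item $\mathbb{E}|\boldsymbol{B}|\lesssim|\calB|/N_{\mathrm{sum}}$.
\end{itemize}
Each of these holds up to $\mathrm{poly}(n)$ factors.

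The bound on $|\boldsymbol{B}|$ requires care, because distinct vectors in $\calB$ may share a key. The bucket counts should be bounded via the number of vectors whose hashed key equals a random value. This is fine because a key is a full subset-sum value, and hashing a value to a uniformly random bucket only needs pairwise independence over values.

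We list $\boldsymbol{A}$ and $\boldsymbol{B}$ in time near their sizes plus $2^{(1/2-\Omega(1))n}$ by splitting each side into two parts. We sort the partial sums of each part, and for each element of one part enumerate the matching residues of the other. Then we group the vectors by full key, pairing the key of $\bfa$ with the key $\boldsymbol{\tau}$ minus that key. Within each group we run the admissible algorithm for $(l,r)$ on the prefixes, after padding with $(1-2s)n/3$ always-compatible coordinates to obtain half-stars instances of length $2(1-s)n/3$. The grouping step uses only superadditivity of cost in $|\boldsymbol{A}|,|\boldsymbol{B}|$.

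Any compatible pair found yields $\bfa+\bfb\in\{0,1\}^n$ with the correct integer sum. We verify it exactly, so the algorithm has one-sided error.

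\textbf{Cost.} The cost calculation is the one displayed in the overview. It yields exponent at most
\[
\frac{n}{2}+\frac{(1-\xi)n}{3}-\log_2N_{\mathrm{sum}}+O(\varepsilon n)\le\Bigl(\frac12-\frac{\xi-\gamma}{3}\Bigr)n+O(\varepsilon n)=\Bigl(\frac12-\frac{\xi}{9}\Bigr)n+O(\varepsilon n).
\]
To convert expected time into a worst-case bound, we abort any run exceeding a constant multiple of its expected time and apply Markov's inequality. Rescaling $\varepsilon$ then gives the claimed $O(2^{(1/2-\xi/9+\varepsilon)n})$.

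I expect the main obstacle to be the hashing and listing step. It must simultaneously guarantee survival of a correct representation and control $|\boldsymbol{B}|$ when many $\calB$-vectors collide in value, and it must generate $\boldsymbol{B}$ (with its $\binom{n/3}{sn/3}$ prefix structure) in time proportional to its size. I would handle the generation by splitting the prefix layer $T_{n/3,s}$ into two halves with balanced support and enumerating pairs via sorted residue lists. Any listing overhead only needs to fall below the final exponent.
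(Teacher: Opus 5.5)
Your outline matches the paper's proof: the threshold $N_{\mathrm{sum}}=2^{(1-2\xi/3)n/3}$, the few-sums meet-in-the-middle, representations only on the prefix with the same $\calA,\calB,\ell$, padding to half-stars, the complement-pairing good layer, filtering by key, split-list generation, and grouping by integer key before calling partial match. \textbf{The gap is the filtering step, which you yourself flag as the main obstacle.}

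As written, after reducing to residues $x\in\mathbb{Z}_P$, the map $x\mapsto(\alpha x+\beta)\bmod q$ with $q\approx N_{\mathrm{sum}}/n$ depends only on $x\bmod q$. So two distinct reduced keys that are congruent modulo $q$ share a bucket for every $\alpha,\beta$. The family is therefore not pairwise independent over key values. Nothing in your argument controls how the reduced correct keys distribute modulo $q$, so the second-moment bound on surviving correct keys is unsupported.

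If you instead mean a genuinely pairwise-independent family, you still owe compatibility with generation. For a fixed second-half key $\sfk_2$, survival of $\sfk_1+\sfk_2$ must be a contiguous range in one sorted order of the first half-list, and a generic family does not give this.

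The paper gets both properties at once, with no prime reduction. For a power of two $p>2N_{\mathrm{sum}}(\lvert\boldsymbol{\tau}\rvert+\sum_i\lvert z_i\rvert+1)$, it keeps $\sfk$ if and only if $(\alpha\sfk+\beta)\bmod p<q=p/2^{\lfloor\log_2N_{\mathrm{sum}}\rfloor}$. Any two keys differ by less than $q$, so survival events of distinct keys are exactly pairwise independent (Lemma~\ref{lem:filter}). Survival of $\sfk_1+\sfk_2$ says that $\alpha\sfk_1\bmod p$ lies in a cyclic interval, which binary search finds in a list sorted by $\alpha\sfk_1\bmod p$. Your route can be repaired, e.g.\ with $((\alpha x+\beta)\bmod P)\bmod q$ on the reduced keys, where survival becomes two residue-class-within-a-range queries. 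But that has to be carried out.

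Second, you never check the generation overhead against the exponent the theorem fixes. Saying the overhead is ``$2^{(1/2-\Omega(1))n}$'' is not enough when the target is $1/2-\xi/9$ and the margin is of order $\xi$. The paper shows the left half-lists have at most $2^{5n/12+O(1)}$ entries, since $\ell\le n/2$. Each right half-list holds half of the prefix with at most $sn/3$ nonzero entries and half of the second outside block. Its size exponent is
\[
\frac{5n}{12}-\frac{\h(s)n}{12}+\frac{sn}{4}+\frac{(l-r)(1-s)n}{6}\le\Bigl(\frac12-\frac{\xi}{6}\Bigr)n+O(1),
\]
using $l-r\le1/2-2\xi$, $\h(s)\ge2s$ and $s\le1/2$.

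The half-prefixes must range over all support splits, with the first half-list grouped by nonzero count. If ``balanced support'' means equal counts in the two halves, most of $T_{n/3,s}$ is never generated. The count of Lemma~\ref{lem:good-layer} then no longer applies to what you generate.

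Smaller points:
\begin{itemize}
\item Compute $N_{\mathrm{sum}}$ exactly by enumerating the $2^{n/3}$ prefix sums. Sampling gives no reliable estimate of a number of distinct values. Your parallel alternative still needs $N_{\mathrm{sum}}$ for the filter range, and explicit time budgets to stop when no solution exists.
\item An admissible pair yields only a decision procedure, so extracting a matching pair needs $O(n)$ further bisection calls.
\item Take $6\mid n$ and $\ell$ even so that the outside blocks split evenly.
\end{itemize}
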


We fix some notation. From now on, we assume that $6\mid n$: appending at most five zeros to $\bfz$ achieves this, and changes neither the set of solutions on the original coordinates nor the running time by more than a constant factor. For a subset sum instance $(\bfz,\boldsymbol{\tau})$, the \emph{prefix} consists of the first $n/3$ coordinates, and $N_{\mathrm{sum}}$ denotes the number of distinct subset sums of $z_1,\ldots,z_{n/3}$. For $\bfa,\bfb\in\mathbb{Z}^n$, the \emph{left key} of $\bfa$ is $\bfa\cdot\bfz$ and the \emph{right key} of $\bfb$ is $\boldsymbol{\tau}-\bfb\cdot\bfz$. Thus $\bfa+\bfb$ is a solution if and only if the two keys are equal and $\bfa+\bfb\in\{0,1\}^n$.

We will technically use the following hash family~\cite{Die96}. For a power of two $p$, consider the family
\[
\calH_p\coloneqq\{h_{\alpha,\beta}:\alpha,\beta\in\mathbb{Z}_p\},\qquad h_{\alpha,\beta}(\sfk)\coloneqq(\alpha\sfk+\beta)\bmod p\qquad(\sfk\in\mathbb{Z}).
\]
For a power of two $q\le p$, put $\calI_q\coloneqq\{0,\ldots,q-1\}\subseteq\mathbb{Z}_p$. We filter an integer $\sfk$ by keeping it if and only if $h(\sfk)\in\calI_q$, where $h$ is drawn uniformly from $\calH_p$, that is, $\alpha$ and $\beta$ are independent and uniform in $\mathbb{Z}_p$.

\begin{lemma}\label{lem:filter}
Let $q\le p$ be powers of two, and let $h$ be uniform in $\calH_p$.
\begin{enumerate}
 \item For every $\sfk\in\mathbb{Z}$, we have $\Pr[h(\sfk)\in\calI_q]=q/p$.
 \item For all $\sfk_1,\sfk_2\in\mathbb{Z}$ with $0<\lvert\sfk_1-\sfk_2\rvert<q$, the events $h(\sfk_1)\in\calI_q$ and $h(\sfk_2)\in\calI_q$ are independent.
\end{enumerate}
\end{lemma}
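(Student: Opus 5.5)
The plan is to prove both parts by conditioning on $\alpha$ and using the uniformity of $\beta$. The only arithmetic fact needed about $\mathbb{Z}_p$ with $p$ a power of two is how multiplication by a fixed integer acts on a uniform element.

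\emph{Part 1.} Fix $\sfk\in\mathbb{Z}$ and condition on any value of $\alpha$. Since $\beta$ is uniform in $\mathbb{Z}_p$ and independent of $\alpha$, the value $h(\sfk)=(\alpha\sfk+\beta)\bmod p$ is a translate of a uniform element, hence uniform in $\mathbb{Z}_p$. Therefore $\Pr[h(\sfk)\in\calI_q]=\lvert\calI_q\rvert/p=q/p$.

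\emph{Part 2.} Let $d\coloneqq\sfk_1-\sfk_2$ and write $d=2^t u$ with $u$ odd. The hypothesis $0<\lvert d\rvert<q$ gives $2^t\le\lvert d\rvert<q$, so $2^t$ divides $q$ (both are powers of two). Set $X\coloneqq h(\sfk_2)$ and $\Delta\coloneqq h(\sfk_1)-h(\sfk_2)=\alpha d\bmod p$.

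The key steps are as follows.
\begin{enumerate}
 \item $\Delta$ depends only on $\alpha$. Since $u$ is odd, $\alpha\mapsto\alpha u$ is a bijection of $\mathbb{Z}_p$. Multiplication by $2^t$ maps $\mathbb{Z}_p$ onto the subgroup $2^t\mathbb{Z}_p$ and is exactly $2^t$-to-one. Hence $\Delta$ is uniform on $2^t\mathbb{Z}_p$, a set of size $p/2^t$. (If $2^t\ge p$, then $\Delta=0$; the count in step 3 still works, using $q\le p$.)
 \item Conditioned on $\alpha$, the value $X$ is uniform in $\mathbb{Z}_p$ by Part 1. So $X$ is independent of $\Delta$, and $(X,\Delta)$ is uniform on $\mathbb{Z}_p\times 2^t\mathbb{Z}_p$.
 \item Fix $x\in\calI_q$. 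As $\Delta$ ranges over $2^t\mathbb{Z}_p$, the value $x+\Delta$ ranges uniformly over the coset $x+2^t\mathbb{Z}_p$, which has $p/2^t$ elements. Since $2^t\mid q$, the interval $\calI_q=\{0,\ldots,q-1\}$ is a union of $q/2^t$ full blocks of $2^t$ consecutive residues. It therefore meets every residue class modulo $2^t$ in exactly $q/2^t$ elements. Consequently,
 \[
 \Pr\bigl[x+\Delta\in\calI_q\bigr]=\frac{q/2^t}{p/2^t}=\frac qp .
 \]
 \item Summing over $x\in\calI_q$, each with probability $1/p$, gives
 \[
 \Pr\bigl[h(\sfk_1)\in\calI_q,\ h(\sfk_2)\in\calI_q\bigr]=q\cdot\frac1p\cdot\frac qp=\Bigl(\frac qp\Bigr)^2 .
 \]
 By Part 1, this equals the product of the two marginal probabilities, which is the claimed independence.
\end{enumerate}

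The only delicate point is step 3. There one needs $2^t\le q$, which is exactly where the assumption $\lvert\sfk_1-\sfk_2\rvert<q$ is used: it guarantees that $\calI_q$ is equidistributed over residues modulo $2^t$. The assumption $\sfk_1\ne\sfk_2$ is what makes $t$ finite, so that the decomposition $d=2^t u$ exists.
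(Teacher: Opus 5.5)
Your proof is correct and is essentially the paper's argument. The paper conditions on the value $h(\sfk_1)=c$ and observes that $\alpha$ remains uniform, so $h(\sfk_2)$ is uniform on the coset $c+g\mathbb{Z}_p$ with $g=\gcd(\sfk_2-\sfk_1,p)$ (your $2^t$); it then counts the $q/g$ of its $p/g$ elements lying in $\calI_q$, just as your steps 1--3 do with the two keys swapped. The only blemish is your parenthetical about $2^t\ge p$: that case cannot occur, since $2^t\le\lvert d\rvert<q\le p$, and if it could, the count would actually fail (with $\Delta\equiv0$ the conditional probability is $1$, not $q/p$), so you should simply drop it.
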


\begin{proof}[Proof of Lemma~\ref{lem:filter}]
For the first claim, $h_{\alpha,\beta}(\sfk)$ is uniform in $\mathbb{Z}_p$ for every fixed $\alpha$, since $\beta$ is uniform.

For the second claim, put $g\coloneqq\gcd(\sfk_2-\sfk_1,p)$. Then $g\le\lvert\sfk_2-\sfk_1\rvert<q$, and $g\mid q$ since both are powers of two. Conditional on $h_{\alpha,\beta}(\sfk_1)=t$, the multiplier $\alpha$ remains uniform, so $h_{\alpha,\beta}(\sfk_2)=t+\alpha(\sfk_2-\sfk_1)$ is uniform on the coset $t+g\mathbb{Z}_p$. Exactly $q/g$ of its $p/g$ elements lie in $\calI_q$. Hence $\Pr[h(\sfk_2)\in\calI_q\mid h(\sfk_1)=t]=q/p$ for every $t$.
\end{proof}

\begin{proof}[Proof of Theorem~\ref{thm:subset-sum-transfer}]
We first enumerate all $2^{n/3}$ subset sums of the prefix by brute force, in time $2^{n/3+o(n)}$. This determines $N_{\mathrm{sum}}$, and we retain one representative prefix for each of the $N_{\mathrm{sum}}$ distinct sums. Depending on the value of $N_{\mathrm{sum}}$, we then run one of the following two procedures.
\begin{itemize}
 \item \emph{Few distinct sums}~\cite[Lemma 3.2]{AKKN16}, when $N_{\mathrm{sum}}\le2^{(1-2\xi/3)n/3}$. Combine the representatives with $\ell\coloneqq\lfloor n/3-\log_2N_{\mathrm{sum}}/2\rfloor$ outside coordinates, and apply meet-in-the-middle \cite{HS74} against the remaining coordinates. The two lists have sizes $N_{\mathrm{sum}}2^\ell$ and $2^{2n/3-\ell}$. Including preprocessing and integer arithmetic, the time is
 \[
 2^{o(n)}2^{n/2}\sqrt{\frac{N_{\mathrm{sum}}}{2^{n/3}}}.
 \]
 \item \emph{Many distinct sums}, when $N_{\mathrm{sum}}>2^{(1-2\xi/3)n/3}$. This is the main case, which occupies the rest of the proof.
\end{itemize}

\paragraph{Representation.}
As in the representation technique~\cite{HJ10,AKKN16,MNPW19,RW26}, we search for $\bfc=\bfa+\bfb$, but we use our representations only on the prefix. For $s\in\{0,3/n,\ldots,1/2\}$ and a parameter $\ell$ determined below, $\bfa$ and $\bfb$ range over the conceptual lists
\[
\calA\coloneqq\{0,1\}^{n/3}\times\{0,1\}^{\ell}\times0^{2n/3-\ell},\qquad
\calB\coloneqq T_{n/3,s}\times0^{\ell}\times\{0,1\}^{2n/3-\ell},
\]
where $T_{m,s}\subseteq\{-1,0,1\}^m$ is the set of vectors with exactly $sm$ nonzero entries. We run the procedure for every such $s$, and we call the two blocks of coordinates after the prefix, of lengths $\ell$ and $2n/3-\ell$, the \emph{outside blocks}. Only the elements of $\calA$ and $\calB$ passing the filter below, which we call \emph{records}, will be actually generated.

Next, we translate compatibility into partial match. In a prefix coordinate, $a_i+b_i\in\{0,1\}$ precisely when $a_i$ matches the query symbol $1,*,0$ for $b_i=-1,0,1$, respectively. On the outside coordinates, $\bfa$ and $\bfb$ have disjoint supports, so $\bfa+\bfb$ is always binary there.

To make the queries half-stars, prepend $(1-2s)n/3$ dummy pairs $(1,-1)$, which always match. For $\bfa\in\calA$ and $\bfb\in\calB$, write $\bfx(\bfa)$ and $\bfy(\bfb)$ for the resulting binary and query strings. They have even length $n'\coloneqq(2-2s)n/3$, exactly $(1-s)n/3=n'/2$ positions of $\bfy(\bfb)$ are specified, and
\[
\bfa+\bfb\in\{0,1\}^n\iff\bfx(\bfa)\text{ matches }\bfy(\bfb).
\]
For strings of length $n'$, the partial match algorithm runs in time $O(\operatorname{cost}(\lvert\boldsymbol{A}\rvert,\lvert\boldsymbol{B}\rvert))$, where
\[
\operatorname{cost}(N_{\sfL},N_{\sfR})\coloneqq2^{(l+\varepsilon)n'}N_{\sfL}+2^{(r+\varepsilon)n'}N_{\sfR}.
\]

We choose $\ell\in[0,n/2]$ as the largest even integer that is not greater than
\[
\frac{n}{6}\left(1+\h(s)+s-2(l-r)(1-s)\right).
\]
This is possible since $\lvert l-r\rvert\le1/2$, so the expression itself lies in $[0,n/2]$. Since $6\mid n$, both of the outside blocks must have even length. The following lemma bounds the cost of running partial match directly on all of $\calA$ and $\calB$, which we call the \emph{unscreened} cost.

\begin{lemma}\label{lem:unscreened}
We have
\[
\operatorname{cost}\bigl(\lvert\calA\rvert,\lvert\calB\rvert\bigr)\le2^{n/2+(1-\xi)n/3+2\varepsilon n/3+O(1)}.
\]
\end{lemma}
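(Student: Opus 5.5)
The bound follows by writing out the two terms of $\operatorname{cost}(\lvert\calA\rvert,\lvert\calB\rvert)$ explicitly, using the choice of $\ell$ to balance them, and then using $l+r\le1/2-2\xi$, $\h(s)\le1$, and $s\le1/2$ to control the common exponent. This is the computation from the overview, made rigorous.

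\emph{Computing the two terms.} We have $\lvert\calA\rvert=2^{n/3+\ell}$ and $\lvert\calB\rvert=\binom{n/3}{sn/3}2^{sn/3+2n/3-\ell}\le2^{(\h(s)+s)n/3+2n/3-\ell}$ by the standard entropy bound on binomials. Substituting $n'=(2-2s)n/3$, the left term has exponent
\[
E_{\sfL}=2(1-s)(l+\varepsilon)n/3+n/3+\ell,
\]
and the right term has exponent
\[
E_{\sfR}=2(1-s)(r+\varepsilon)n/3+(\h(s)+s)n/3+2n/3-\ell.
\]

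\emph{Balancing.} Let $\ell^*=\frac n6(1+\h(s)+s-2(l-r)(1-s))$ be the real balancing value. With $\ell=\ell^*$, both exponents equal
\[
\frac n2+\Bigl((l+r+2\varepsilon)(1-s)+\frac{\h(s)+s}{2}\Bigr)\frac n3 .
\]
This is just $(E_{\sfL}+E_{\sfR})/2$: the $\ell$ terms cancel in the sum, and the constant part is $(n/3+2n/3)/2=n/2$. The actual $\ell$ is the largest even integer at most $\ell^*$, so $\ell^*-2<\ell\le\ell^*$. Hence $E_{\sfL}$ only decreases, and $E_{\sfR}$ increases by less than $2$; this is the source of the $O(1)$ in the statement. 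We should also check that $\ell\ge0$ and that $\ell$ lies in the admissible range, which the paper asserts via $\lvert l-r\rvert\le1/2$. The edge case where $\ell^*<2$ and $\ell$ is taken to be $0$ is covered by the same estimate, since the loss is still below $2$.

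\emph{Bounding the common exponent.} Using $l+r\le1/2-2\xi$, we get $(l+r)(1-s)\le(1/2-2\xi)(1-s)$ and $2\varepsilon(1-s)\le2\varepsilon$. It then remains to show
\[
\Bigl(\tfrac12-2\xi\Bigr)(1-s)+\frac{\h(s)+s}{2}\le1-\xi\qquad\text{for } s\in[0,1/2].
\]
Since $\h(s)\le1$, the left side is at most $\frac12-2\xi+2\xi s+\frac12=1-2\xi(1-s)$. For $s\le1/2$ we have $1-s\ge1/2$, so this is at most $1-\xi$.

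The only care needed is the rounding of $\ell$ to an even integer and the monotonicity in $s$ in this last inequality; there is no real obstacle.
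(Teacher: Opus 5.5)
Your proof is correct and follows the paper's argument: you bound $\lvert\calA\rvert$ and $\lvert\calB\rvert$ via the entropy estimate, use the choice of $\ell$ to make both cost exponents equal to their average up to $O(1)$, and bound that average using $\h(s)\le1$, $l+r\le1/2-2\xi$, and $s\le1/2$. Your explicit rounding argument ($\ell^*-2<\ell\le\ell^*$, so only $E_{\sfR}$ grows, by less than $2$) spells out what the paper compresses into ``equal up to an additive constant.''
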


\begin{proof}[Proof of Lemma~\ref{lem:unscreened}]
Using $\binom{n/3}{sn/3}\le2^{\h(s)n/3}$, the list sizes are
\[
\lvert\calA\rvert=2^{n/3+\ell},\qquad \lvert\calB\rvert=\binom{n/3}{sn/3}2^{sn/3+2n/3-\ell}\le2^{(\h(s)+s)n/3+2n/3-\ell}.
\]
With $n'=2(1-s)n/3$, the two terms of the cost thus have exponents at most
\[
2(1-s)(l+\varepsilon)\frac{n}{3}+\frac{n}{3}+\ell
\qquad\text{and}\qquad
2(1-s)(r+\varepsilon)\frac{n}{3}+(\h(s)+s)\frac{n}{3}+\frac{2n}{3}-\ell.
\]
Our choice of $\ell$ makes them equal up to an additive constant, so both are at most their average plus $O(1)$. Using $\h(s)\le1$, $l+r\le1/2-2\xi$, and $s\le1/2$, the average is
\begin{align*}
\frac{n}{2}+\left((l+r+2\varepsilon)(1-s)+\frac{\h(s)+s}{2}\right)\frac{n}{3}
&\le\frac{n}{2}+\left(\frac{\h(s)+1}{2}-2\xi(1-s)\right)\frac{n}{3}+\frac{2\varepsilon n}{3}\\
&\le\frac{n}{2}+\frac{(1-\xi)n}{3}+\frac{2\varepsilon n}{3}.\qedhere
\end{align*}
\end{proof}

Now fix a solution $\bfc$. For a given $s$, we call an integer $\sfk$ a \emph{correct key} if it is the left key of some $\bfa\in\calA$ with $\bfc-\bfa\in\calB$. In this case, $\sfk$ is also the right key of $\bfc-\bfa$, since $\bfc\cdot\bfz=\boldsymbol{\tau}$.

\begin{lemma}\label{lem:good-layer}
For some $s\in\{0,3/n,\ldots,1/2\}$, there are at least $N_{\mathrm{sum}}/(n/3+2)$ distinct correct keys.
\end{lemma}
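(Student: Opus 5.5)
Fix a solution $\bfc$. The plan is to parametrize the representations of $\bfc$ by the binary prefix of $\bfa$, exhibit a correct key for each such prefix, and then combine a complement-pairing argument with pigeonhole over the support layers.

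\emph{Step 1: every binary prefix gives a correct key.} For $\bfu\in\{0,1\}^{n/3}$, let $\bfa_{\bfu}$ be the vector with prefix $\bfu$, with the first outside block (length $\ell$) equal to the corresponding block of $\bfc$, and with zeros on the second outside block. Then $\bfb=\bfc-\bfa_{\bfu}$ has prefix $\bfc_{\mathrm{pre}}-\bfu\in\{-1,0,1\}^{n/3}$, is $0$ on the first outside block, and equals $\bfc$ on the second outside block, which is binary. The number of nonzero prefix entries of $\bfb$ is the Hamming distance $k\coloneqq d(\bfu,\bfc_{\mathrm{pre}})$. Hence, whenever $k\le n/6$, we have $\bfa_{\bfu}\in\calA$ and $\bfc-\bfa_{\bfu}\in\calB$ for the layer $s=3k/n$. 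This $s$ lies in $\{0,3/n,\ldots,1/2\}$ because $6\mid n$. The left key of $\bfa_{\bfu}$ is $\bfu\cdot\bfz_{\mathrm{pre}}+\kappa$, where $\kappa$ is a constant depending only on $\bfc$. So distinct prefix sums $\bfu\cdot\bfz_{\mathrm{pre}}$ give distinct correct keys in their respective layers.

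\emph{Step 2: complement pairing.} Let $V$ be the set of $N_{\mathrm{sum}}$ distinct prefix subset sums, and let $\Sigma\coloneqq\sum_{i\le n/3}z_i$. Let $W\subseteq V$ be the set of sums $\sigma$ that are attained by some $\bfu$ with $d(\bfu,\bfc_{\mathrm{pre}})\le n/6$. Take any $\sigma\in V$, attained by some $\bfu$. The complement $\bfone-\bfu$ attains $\Sigma-\sigma$ and has distance $n/3-d(\bfu,\bfc_{\mathrm{pre}})$ from $\bfc_{\mathrm{pre}}$. One of the two distances is at most $n/6$, so $\sigma\in W$ or $\Sigma-\sigma\in W$. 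Thus $V\subseteq W\cup(\Sigma-W)$, which gives $\lvert W\rvert\ge N_{\mathrm{sum}}/2$.

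\emph{Step 3: pigeonhole.} Assign each $\sigma\in W$ to one distance $k\in\{0,\ldots,n/6\}$ at which it is attained. There are $n/6+1$ possible values of $k$, so some $k$ receives at least $\frac{N_{\mathrm{sum}}/2}{n/6+1}=\frac{N_{\mathrm{sum}}}{n/3+2}$ sums. Taking $s=3k/n$, Step 1 turns these into that many distinct correct keys.

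There is no real obstacle here. The only points needing care are the following. First, the outside blocks of $\bfa_{\bfu}$ and $\bfb$ must be chosen so that both vectors lie in $\calA$ and $\calB$; the choice in Step 1 does this, and it holds for any value of $\ell$. Second, $s\le 1/2$ corresponds exactly to $k\le n/6$, which is precisely what the complement pairing provides.
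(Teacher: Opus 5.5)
Your proof is correct and follows the paper's argument: complement pairing shows that at least $N_{\mathrm{sum}}/2$ distinct prefix sums are attained at Hamming distance at most $n/6$ from the prefix of $\bfc$, the outside part of $\bfc$ only shifts keys by a fixed amount, and pigeonhole over the $n/6+1$ layers finishes. One small imprecision: your translation constant depends on $\ell$, and hence on $s$, not only on $\bfc$; this is harmless because Step 3 compares keys within a single layer.
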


\begin{proof}[Proof of Lemma~\ref{lem:good-layer}]
Each binary prefix $\bfa$ determines $\bfb\coloneqq\bfc-\bfa$ on the prefix. Complementing $\bfa$ changes the support size of this $\bfb$ from $k$ to $n/3-k$, and maps its prefix sum to the total prefix sum minus that sum. Consequently, at least $N_{\mathrm{sum}}/2$ distinct prefix sums have a representation with $k\le n/6$: every sum without one has a distinct complementary sum with one. Placing the outside part of $\bfc$ on the designated sides merely translates these sums by a fixed amount. Among the $n/6+1$ values of $s$, one therefore receives at least $N_{\mathrm{sum}}/(n/3+2)$ distinct correct keys.
\end{proof}

We call such a value of $s$ the \emph{good layer}.

\paragraph{Filtering.}
We filter the elements of $\calA$ and $\calB$ by their keys $\sfk$, namely the left key for $\bfa$ and the right key for $\bfb$. Let $p$ be the smallest power of two with $p>2N_{\mathrm{sum}}\bigl(\lvert\boldsymbol{\tau}\rvert+\sum_i\lvert z_i\rvert+1\bigr)$, and put
\[
q\coloneqq\frac{p}{2^{\lfloor\log_2N_{\mathrm{sum}}\rfloor}}.
\]
Draw $h=h_{\alpha,\beta}$ uniformly from $\calH_p$, and keep an element if and only if $h(\sfk)\in\calI_q$ for its key $\sfk$; we then say that the element and its key \emph{survive}.

\begin{lemma}\label{lem:filter-success}
Every element of $\calA$ and $\calB$ survives with probability less than $2/N_{\mathrm{sum}}$. In the good layer, some correct key survives with probability at least $1/(n/3+3)$.
\end{lemma}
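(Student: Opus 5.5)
The first claim follows directly from Lemma~\ref{lem:filter}(1): each element of $\calA$ or $\calB$ survives with probability exactly $q/p=2^{-\lfloor\log_2N_{\mathrm{sum}}\rfloor}$. Since $2^{\lfloor\log_2N_{\mathrm{sum}}\rfloor}>N_{\mathrm{sum}}/2$, this probability is less than $2/N_{\mathrm{sum}}$. (If $N_{\mathrm{sum}}=1$ then $q=p$ and the bound $1<2$ still holds.)

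For the second claim we fix the good layer and let $K$ be its set of distinct correct keys. By Lemma~\ref{lem:good-layer}, $\lvert K\rvert\ge N_{\mathrm{sum}}/(n/3+2)$. We also need an upper bound on $\lvert K\rvert$. Each correct key equals the prefix sum of the binary prefix of $\bfa$ plus a fixed outside contribution, so $\lvert K\rvert\le N_{\mathrm{sum}}$.

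Let $X$ count the surviving keys in $K$. We apply the second-moment bound $\Pr[X>0]\ge\mathbb{E}[X]^2/\mathbb{E}[X^2]$.

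Write $\pi\coloneqq q/p\in(1/N_{\mathrm{sum}},2/N_{\mathrm{sum}}]$. Then $\mathbb{E}[X]=\lvert K\rvert\pi$, so
\[
\mathbb{E}[X^2]=\lvert K\rvert\pi+\sum_{\sfk_1\ne\sfk_2}\Pr[\text{both survive}].
\]
Pairwise independence, which gives $\Pr[\text{both survive}]=\pi^2$, is exactly where Lemma~\ref{lem:filter}(2) is needed. It applies once $0<\lvert\sfk_1-\sfk_2\rvert<q$.

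We check this gap condition. Every key has absolute value at most $\lvert\boldsymbol{\tau}\rvert+\sum_i\lvert z_i\rvert$, so differences are less than $2(\lvert\boldsymbol{\tau}\rvert+\sum_i\lvert z_i\rvert+1)$. On the other hand,
\[
q=\frac{p}{2^{\lfloor\log_2N_{\mathrm{sum}}\rfloor}}\ge\frac{p}{N_{\mathrm{sum}}}>2\Bigl(\lvert\boldsymbol{\tau}\rvert+\sum_i\lvert z_i\rvert+1\Bigr).
\]
Hence the gap condition holds. The choice of $p$ was made precisely for this, and verifying it is the only delicate step.

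With pairwise independence we get $\mathbb{E}[X^2]\le\lvert K\rvert\pi+\lvert K\rvert^2\pi^2$. Therefore
\[
\Pr[X>0]\ge\frac{\lvert K\rvert\pi}{1+\lvert K\rvert\pi}.
\]
This expression is increasing in $\lvert K\rvert\pi$. Using $\lvert K\rvert\pi\ge\lvert K\rvert/N_{\mathrm{sum}}\ge1/(n/3+2)$, we obtain
\[
\Pr[X>0]\ge\frac{1/(n/3+2)}{1+1/(n/3+2)}=\frac{1}{n/3+3}.
\]
This is the claimed bound.
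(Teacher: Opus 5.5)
Your proof is correct and follows the paper's argument essentially verbatim: the survival probability is $q/p=2^{-\lfloor\log_2N_{\mathrm{sum}}\rfloor}$, distinct correct keys are pairwise independent by Lemma~\ref{lem:filter}(2) since $q\ge p/N_{\mathrm{sum}}$ exceeds every key difference, and the second-moment bound gives $\Pr[X>0]\ge\mathbb{E}[X]/(1+\mathbb{E}[X])$ with $\mathbb{E}[X]\ge1/(n/3+2)$. Two cosmetic points: the interval should read $\pi\in[1/N_{\mathrm{sum}},2/N_{\mathrm{sum}})$ rather than $(1/N_{\mathrm{sum}},2/N_{\mathrm{sum}}]$, which is harmless since you only use $\pi\ge1/N_{\mathrm{sum}}$, and the upper bound $\lvert K\rvert\le N_{\mathrm{sum}}$ is never used.
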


\begin{proof}[Proof of Lemma~\ref{lem:filter-success}]
By Lemma~\ref{lem:filter}, each key survives with probability $q/p=2^{-\lfloor\log_2N_{\mathrm{sum}}\rfloor}\in[1/N_{\mathrm{sum}},2/N_{\mathrm{sum}})$, which gives the first claim. Since $q\ge p/N_{\mathrm{sum}}>2\bigl(\lvert\boldsymbol{\tau}\rvert+\sum_i\lvert z_i\rvert+1\bigr)$, any two keys differ by at most $\lvert\boldsymbol{\tau}\rvert+2\sum_i\lvert z_i\rvert<q$. By Lemma~\ref{lem:filter}, the survival events of distinct keys are therefore pairwise independent.

Let $X$ count the surviving distinct correct keys in the good layer. By Lemma~\ref{lem:good-layer}, $\mathbb{E}[X]\ge1/(n/3+2)$, and by pairwise independence, $\mathbb{E}[X^2]\le\mathbb{E}[X]+(\mathbb{E}[X])^2$. Cauchy--Schwarz gives
\[
\Pr[X>0]\ge\frac{(\mathbb{E}[X])^2}{\mathbb{E}[X^2]}\ge\frac{1}{1+1/\mathbb{E}[X]}\ge\frac{1}{n/3+3}.\qedhere
\]
\end{proof}

Next, we generate the surviving records without listing all of $\calA$ and $\calB$, using the meet-in-the-middle enumeration idea of \cite[Lemma 3.8]{MNPW19}.

\begin{lemma}\label{lem:generate}
The surviving records can be listed in time $2^{o(n)}\bigl(2^{5n/12}+2^{(1/2-\xi/6)n}+N_{\sfL}+N_{\sfR}\bigr)$, where $N_{\sfL}$ and $N_{\sfR}$ are the numbers of surviving records in $\calA$ and $\calB$, respectively.
\end{lemma}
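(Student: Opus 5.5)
The plan is to list the survivors of each side by splitting every element of $\calA$ and $\calB$ into a prefix half and an outside half. We precompute the keys of each half and then find the combinations whose total key satisfies $h(\sfk)\in\calI_q$ by a sorted merge over residues modulo $p$. This is the meet-in-the-middle enumeration of \cite[Lemma 3.8]{MNPW19}.

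The key point is linearity of the filter. Since $h_{\alpha,\beta}(\sfk)=(\alpha\sfk+\beta)\bmod p$, for $\sfk=\sfk_1+\sfk_2$ we get $h(\sfk)=(\alpha\sfk_1+\beta)+\alpha\sfk_2\bmod p$. So the condition $h(\sfk)\in\calI_q$ asks that the residue $u_1\coloneqq(\alpha\sfk_1+\beta)\bmod p$ lie in the interval $[-u_2,\,-u_2+q)$ modulo $p$, where $u_2\coloneqq\alpha\sfk_2\bmod p$.

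For the left side, write each $\bfa\in\calA$ as (prefix part, outside part). The prefix part ranges over $\{0,1\}^{n/3}$, giving $2^{n/3}$ elements, and the outside part over $\{0,1\}^\ell$, giving $2^\ell$ elements.
\begin{itemize}
\item We compute and sort the residues $u_1$ of the prefix parts.
\item For each outside part we binary-search the (at most two, because of wraparound) matching residue intervals.
\item We output every hit.
\end{itemize}
The time is $2^{o(n)}(2^{n/3}+2^\ell+N_{\sfL})$. Arithmetic on $2^{o(n)}$-bit keys and on $p$ costs $2^{o(n)}$.

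For the right side, the prefix parts range over $T_{n/3,s}$, of size at most $3^{n/3}$, and the outside parts over $\{0,1\}^{2n/3-\ell}$. If we split naively, the outside list alone has size $2^{2n/3-\ell}$, and this can be large. The fix is to split the concatenated vector more evenly. We take one half to be the prefix part together with the first $j$ outside coordinates, and the other half to be the remaining $2n/3-\ell-j$ outside coordinates, with $j$ chosen to balance the two sizes. Then both halves have size about $\sqrt{\lvert\calB\rvert}$. Each half is enumerated explicitly in time equal to its size, and the merge outputs exactly the $N_{\sfR}$ survivors. The same balancing can be applied to $\calA$ if it helps.

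So the total time is $2^{o(n)}\bigl(\sqrt{\lvert\calA\rvert}+\sqrt{\lvert\calB\rvert}+2^{n/3}+N_{\sfL}+N_{\sfR}\bigr)$, up to the granularity of the split. The prefix itself cannot be cut, so the $\calB$ half containing it has size at least $\lvert T_{n/3,s}\rvert\le 2^{(\h(s)+s)n/3}$. If that half is larger than $\sqrt{\lvert\calB\rvert}$, we split the prefix itself into two halves of $n/6$ coordinates each and enumerate the vectors of each half with the appropriate number of nonzeros. This is valid because $T_{n/3,s}$ is the union over $k$ of products of pieces with $k$ and $sn/3-k$ nonzeros, and that union has polynomially many terms.

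The remaining work is the numerics, which I expect to be the main obstacle. We must show $\sqrt{\lvert\calA\rvert}$ and $\sqrt{\lvert\calB\rvert}$ are at most $2^{5n/12}+2^{(1/2-\xi/6)n}$.
\begin{itemize}
\item For $\calA$: $\ell\le n/2$, so $\sqrt{\lvert\calA\rvert}\le 2^{(n/3+n/2)/2}=2^{5n/12}$.
\item For $\calB$: $\lvert\calB\rvert$ is at most the unscreened cost of Lemma~\ref{lem:unscreened} divided by $2^{(r+\varepsilon)n'}\ge1$. So $\lvert\calB\rvert\le 2^{n/2+(1-\xi)n/3+2\varepsilon n/3+O(1)}$, and therefore $\sqrt{\lvert\calB\rvert}\le 2^{(5/12-\xi/6+\varepsilon/3)n+O(1)}$. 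This lies within the claimed $2^{(1/2-\xi/6)n}$ term, since $\varepsilon/3<1/12$.
\end{itemize}
Similarly, any term of the form $2^{n/3}$ or $3^{n/6}$ is at most $2^{5n/12}$. The delicate part is the bookkeeping: checking that every piece of the uneven prefix split stays within these bounds, and that the wraparound intervals and the duplicate keys across components are handled without double counting. I would handle duplicates by deduplicating records by vector.
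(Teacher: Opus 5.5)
Your mechanism matches the paper's. Keys are additive, the affine filter turns survival into a cyclic-interval search over sorted residues, and the non-product set $T_{n/3,s}$ is treated as a disjoint union of polynomially many products, indexed by the number $k$ of nonzeros in the first prefix half.

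The gap is that you verify the size bound for $\calB$ only when the prefix is left uncut, and that is not the case that matters. At $s=1/2$ we have $\lvert T_{n/3,1/2}\rvert=\binom{n/3}{n/6}2^{n/6}\ge2^{n/2}/(n/3+1)$. This exceeds both $\sqrt{\lvert\calB\rvert}\le2^{(5/12-\xi/6)n+O(1)}$ and the allowed $2^{(1/2-\xi/6)n}$, so cutting the prefix is unavoidable rather than a corner case.

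For that case you never say how the $2n/3-\ell$ outside coordinates are split between the halves. Your claim that both halves have size about $\sqrt{\lvert\calB\rvert}$ does not carry over by itself, because the half-prefix lists are now unions over $k$ of sets of very different sizes. At $s=1/2$ the first one is all of $\{-1,0,1\}^{n/6}$, of size $3^{n/6}$, which is exponentially more than $\sqrt{\lvert T_{n/3,1/2}\rvert}\approx2^{n/4}$. The ``bookkeeping'' you defer is precisely the content of the paper's proof.

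The gap closes within your framework. For each $k$, let $X_k$ and $Y_k$ be the vectors on the two prefix halves with $k$ and $sn/3-k$ nonzeros. Choose separately, for each $k$, the number $j_k$ of outside coordinates attached to $X_k$, so that $\lvert X_k\rvert2^{j_k}$ and $\lvert Y_k\rvert2^{2n/3-\ell-j_k}$ are within a factor $4$.
\begin{itemize}
\item If this is possible, both lists have at most $2\sqrt{\lvert\calB\rvert}$ entries.
\item If not, the larger list carries no outside coordinates and has at most $3^{n/6}\le2^{5n/12}$ entries.
\end{itemize}
Summing over the $O(n)$ values of $k$ gives $2^{o(n)}\bigl(2^{5n/12}+N_{\sfL}+N_{\sfR}\bigr)$. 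This is slightly better than the lemma; here your balancing beats the paper's equal split of all three blocks.

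Alternatively, follow the paper. Use equal halves of every block and bound the half-prefixes by $\sum_{j\le sn/3}\binom{n/6}{j}2^j\le2^{n/6+sn/3}$. Then substitute $\ell$ and use $l-r\le1/2-2\xi$, $\h(s)\ge2s$, and $s\le1/2$.

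Two minor points. First, no deduplication is needed, since the split is unique and the union over $k$ is disjoint; records with equal keys must all be kept. Second, on the right the key is $\boldsymbol{\tau}-\sfk_1-\sfk_2$, so the search interval is reflected.
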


\begin{proof}[Proof of Lemma~\ref{lem:generate}]
Split each of the three blocks into equal halves. Then every $\bfa\in\calA$ is uniquely a sum $\bfa_1+\bfa_2$ of partial assignments on the unions of the first and second halves of the three blocks, respectively, and we enumerate these into the \emph{left half-lists} $\calA_1$ and $\calA_2$. Likewise $\bfb=\bfb_1+\bfb_2$, giving the \emph{right half-lists} $\calB_1$ and $\calB_2$; here we keep only half-prefixes with at most $sn/3$ nonzero entries, and group the elements of $\calB_1$ by the number of nonzero entries in their half-prefixes.

Keys are additive over the halves. For $\bfa=\bfa_1+\bfa_2$, put $\sfk_i\coloneqq\bfa_i\cdot\bfz$, so that $\bfa$ survives if and only if $\alpha\sfk_1\bmod p$ lies in the cyclic interval $\calI_q-\beta-\alpha\sfk_2$. So we sort $\calA_1$ by $\alpha\sfk_1\bmod p$, and for each $\bfa_2\in\calA_2$ report the entries of $\calA_1$ in this interval. On the right, with $\sfk_i\coloneqq\bfb_i\cdot\bfz$, the interval is $\alpha\boldsymbol{\tau}+\beta-\alpha\sfk_2-\calI_q$; for each $\bfb_2\in\calB_2$, we search only the group of $\calB_1$ whose members give exactly $sn/3$ nonzero entries in the prefix when added to $\bfb_2$. Each interval is a union of at most two ordinary intervals. Preserve all assignments, including equal residues.

This takes time proportional to the sizes of the four half-lists plus the reported records, times $2^{o(n)}$ for arithmetic and sorting. The left half-lists $\calA_1,\calA_2$ have at most $2^{5n/12+O(1)}$ entries each, since $\ell\le n/2$. For either right half-list, the number of possible half-prefixes is at most
\[
\sum_{j=0}^{sn/3}\binom{n/6}{j}2^j\le2^{n/6+sn/3}.
\]
Each of $\calB_1,\calB_2$ also has half of the $2n/3-\ell$ binary variables in the second outside block, so substituting $\ell$ gives the size exponent
\[
\begin{aligned}
\frac{n}{6}+\frac{sn}{3}+\frac{1}{2}\!\left(\frac{2n}{3}-\ell\right)
&=\frac{5n}{12}-\frac{\h(s)n}{12}+\frac{sn}{4}+\frac{(l-r)(1-s)n}{6}+O(1)\\
&\le\frac{n}{2}-\frac{\h(s)n}{12}+\frac{sn}{6}-\frac{\xi(1-s)n}{3}+O(1)\\
&\le\left(\frac{1}{2}-\frac{\xi}{6}\right)\!n+O(1),
\end{aligned}
\]
using $l-r\le1/2-2\xi$ in the second step, and $\h(s)\ge2s$ and $s\le1/2$ in the last.
\end{proof}

Finally, we test the surviving records by half-stars partial match.

\begin{lemma}\label{lem:detect}
Given the surviving records, we can find a solution among the pairs of surviving records, or correctly report that none exists, in time $2^{o(n)}\operatorname{cost}(N_{\sfL},N_{\sfR})$.
\end{lemma}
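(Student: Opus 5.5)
The plan is to group the surviving records by their exact integer key and run the half-stars partial match algorithm separately inside each group. A pair $(\bfa,\bfb)$ of surviving records yields a solution if and only if two conditions hold. First, the left key $\bfa\cdot\bfz$ must equal the right key $\boldsymbol{\tau}-\bfb\cdot\bfz$. Second, $\bfa+\bfb\in\{0,1\}^n$ must hold, which by the translation above is equivalent to $\bfx(\bfa)$ matching $\bfy(\bfb)$. So we sort the surviving left records by left key and the surviving right records by right key, in time $2^{o(n)}(N_{\sfL}+N_{\sfR})$, since keys have $2^{o(n)}$ bits. This partitions them into classes $\boldsymbol{A}_{\sfk}$ and $\boldsymbol{B}_{\sfk}$ indexed by the common key value $\sfk$; classes present on only one side are discarded.

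Next, for each key $\sfk$ we form $\boldsymbol{A}'_{\sfk}=\{\bfx(\bfa):\bfa\in\boldsymbol{A}_{\sfk}\}$ and $\boldsymbol{B}'_{\sfk}=\{\bfy(\bfb):\bfb\in\boldsymbol{B}_{\sfk}\}$. These are half-stars instances of even length $n'$. We then invoke the admissible-pair algorithm on each of them. The cost of instance $\sfk$ is $O(2^{(l+\varepsilon)n'}\lvert\boldsymbol{A}_{\sfk}\rvert+2^{(r+\varepsilon)n'}\lvert\boldsymbol{B}_{\sfk}\rvert)$. Since the classes are disjoint, summing over $\sfk$ gives $O(\operatorname{cost}(N_{\sfL},N_{\sfR}))$. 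One subtlety is the constant hidden in the $O(\cdot)$ for fixed $\varepsilon$. It is uniform across instances because it depends only on $\varepsilon$; if some $O(1)$ additive overhead per call is an issue, it is absorbed since the number of calls is at most $N_{\sfL}$. Different prefixes can map to the same string $\bfx(\bfa)$ only if the records are identical on the prefix. So we keep a dictionary from strings back to (some) records, merging duplicates within a class. Merging is harmless: any record with that string and that key is an equally valid partner.

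The main obstacle is that the algorithm is only a decision procedure, whereas we must output a solution. I would handle this with standard self-reduction by halving. Suppose a class $\sfk$ reports a match. We split $\boldsymbol{A}'_{\sfk}$ into two halves and recurse on the half that still reports a match, until a single $\bfx$ remains. Then we do the same on $\boldsymbol{B}'_{\sfk}$ with that single $\bfx$. This costs a geometric series on one side: $O(\log)$ rounds, each costing at most the original. So it adds only a polynomial factor $O(n^{O(1)}\cdot 2^{o(n)})$ to $\operatorname{cost}(N_{\sfL},N_{\sfR})$. Alternatively, once $\bfx$ is fixed, we can brute-force $\bfy\in\boldsymbol{B}'_{\sfk}$ directly in time $\lvert\boldsymbol{B}_{\sfk}\rvert\cdot n$.

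Finally, we recover $\bfa,\bfb$ from the found strings and the stored records, and output $\bfc=\bfa+\bfb$. Correctness follows from the equivalence $\bfa+\bfb\in\{0,1\}^n\iff\bfx(\bfa)$ matches $\bfy(\bfb)$ together with equality of keys. If no class reports a match, no pair of surviving records forms a solution, and we correctly report that.
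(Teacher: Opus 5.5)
Your proposal is correct and follows essentially the same route as the paper: you sort and group the surviving records by their exact integer key, run the half-stars partial match algorithm on each class (whose sizes sum to at most $N_{\sfL}$ and $N_{\sfR}$), and extract an actual solution by bisection at the cost of $O(n)$ extra calls. The paper handles your merging of records with identical padded strings implicitly, by defining $\boldsymbol{A}_{\sfk}$ and $\boldsymbol{B}_{\sfk}$ as sets of strings.
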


\begin{proof}[Proof of Lemma~\ref{lem:detect}]
Sort the surviving records by their keys; since keys have $2^{o(n)}$ bits, this costs a factor of $2^{o(n)}$ per record. For each key value $\sfk$ occurring on both sides, let
\[
\begin{aligned}
\boldsymbol{A}_{\sfk}&\coloneqq\{\bfx(\bfa):\bfa\text{ survives and has left key }\sfk\},\\
\boldsymbol{B}_{\sfk}&\coloneqq\{\bfy(\bfb):\bfb\text{ survives and has right key }\sfk\},
\end{aligned}
\]
and run the half-stars partial match algorithm on $(\boldsymbol{A}_{\sfk},\boldsymbol{B}_{\sfk})$. By the translation to partial match, some call answers positively if and only if there are surviving $\bfa,\bfb$ with
\[
\bfa\cdot\bfz=\boldsymbol{\tau}-\bfb\cdot\bfz\qquad\text{and}\qquad\bfa+\bfb\in\{0,1\}^n,
\]
that is, if and only if some pair of surviving records forms a solution. The sets $\boldsymbol{A}_{\sfk}$ have total size at most $N_{\sfL}$, and the sets $\boldsymbol{B}_{\sfk}$ total size at most $N_{\sfR}$, so the total cost of these calls is $O(\operatorname{cost}(N_{\sfL},N_{\sfR}))$.

If the call for $\sfk$ answers positively, repeatedly bisect $\boldsymbol{A}_{\sfk}$ and $\boldsymbol{B}_{\sfk}$, always retaining a half for which the answer remains positive. After $O(n)$ such calls, one matching pair $(\bfx(\bfa),\bfy(\bfb))$ remains, and $\bfa+\bfb$ is a solution.
\end{proof}

\paragraph{Putting things together.}
An \emph{attempt} in layer $s$ draws $h$, generates the surviving records by Lemma~\ref{lem:generate}, and runs Lemma~\ref{lem:detect}. Since $\operatorname{cost}(N_{\sfL},N_{\sfR})\ge N_{\sfL}+N_{\sfR}$, the last term of Lemma~\ref{lem:generate} is covered by Lemma~\ref{lem:detect}. By Lemma~\ref{lem:filter-success} and linearity of expectation, $\mathbb{E}[N_{\sfL}]<2\lvert\calA\rvert/N_{\mathrm{sum}}$ and $\mathbb{E}[N_{\sfR}]<2\lvert\calB\rvert/N_{\mathrm{sum}}$, so filtering reduces the expected cost by a factor of at least $N_{\mathrm{sum}}/2$ compared to the unscreened cost. By Lemma~\ref{lem:unscreened}, the expected time of one attempt is therefore at most
\[
2^{o(n)}\left(2^{5n/12}+2^{(1/2-\xi/6)n}+\frac{2^{n/2+(1-\xi)n/3+2\varepsilon n/3}}{N_{\mathrm{sum}}}\right).
\]
This bound is known in advance, and we abort an attempt after $n^2$ times this bound; by Markov's inequality, this loses at most $1/n^2$ success probability.

If a solution exists, then by Lemmas~\ref{lem:filter-success} and~\ref{lem:detect}, an attempt in the good layer finds a solution with probability at least $1/(n/3+3)$ if not aborted, hence before the cutoff with probability at least $1/(n/3+3)-1/n^2\ge1/(2n)$. We perform $n^2$ independent attempts per layer, so the failure probability is exponentially small.

It remains to compare the two procedures. At the threshold $N_{\mathrm{sum}}=2^{(1-2\xi/3)n/3}$, both principal costs are $2^{(1/2-\xi/9)n}$, because
\[
\sqrt{\frac{N_{\mathrm{sum}}}{2^{n/3}}}=\frac{2^{(1-\xi)n/3}}{N_{\mathrm{sum}}}=2^{-\xi n/9}.
\]
The exponent of the right half-lists is smaller by $\xi/18$, and preprocessing and the left half-lists are also cheaper. Both procedures therefore run in time $2^{(1/2-\xi/9+2\varepsilon/3)n+o(n)}$.
\end{proof}

\section{Depth-2 linear circuits}\label{sec:circuits}

In this section, we obtain a nontrivial partial match algorithm by designing depth-2 linear circuits.

\subsection{Preliminaries}\label{sec:circuit-prelim}

We first fix our notation for depth-2 linear circuits, following~\cite[Section 1.2]{JS13} and~\cite[Section 3.2]{AL25}. A \emph{depth-2 linear circuit} is a pair of matrices $C\coloneqq(\bfL_C,\bfR_C)$ whose product $\bfL_C\bfR_C$ is defined, and it computes the map $\bfu\mapsto\bfu\bfL_C\bfR_C$. The rows of $\bfL_C$ index the \emph{input gates}, the columns of $\bfL_C$, equivalently the rows of $\bfR_C$, index the \emph{middle gates}, and the columns of $\bfR_C$ index the \emph{output gates}. We write $\Mid(C)$ for the set of middle gates. The nonzero entries of $\bfL_C$ and $\bfR_C$ are the \emph{edges} of $C$, and their values are the \emph{edge coefficients}.

We measure a circuit by its degrees. Write $\nnz$ for the number of nonzero entries. For an input gate $\bfx$ and an output gate $\bfy$, put
\[
\deg_{\sfL}(C,\bfx)\coloneqq\nnz(\bfL_C[\bfx,*]),\qquad\deg_{\sfR}(C,\bfy)\coloneqq\nnz(\bfR_C[*,\bfy]).
\]
Thus $\deg_{\sfL}(C,\bfx)$ and $\deg_{\sfR}(C,\bfy)$ count the edges at the gate. We write $\deg_{\sfL}(C)$ and $\deg_{\sfR}(C)$ for their maxima over all gates, and call $\nnz(\bfL_C)+\nnz(\bfR_C)$ the \emph{size} of $C$.

Different requirements on the product $\bfL_C\bfR_C$ give different circuit models. All products are taken over the integers, and for a $01$ matrix $M$ we consider the following four models.

\begin{center}
\renewcommand{\arraystretch}{1.15}
\begin{tabular}{@{}p{0.13\linewidth}p{0.21\linewidth}p{0.58\linewidth}@{}}
\toprule
Model & Allowed coefficients & Requirement for computing a $01$ matrix $M$ \\
\midrule
$\PAR$ & $0,1$ & $\bfL_C\bfR_C=M$ \\
$\OR$ & $0,1$ & $(\bfL_C\bfR_C)_{\bfx,\bfy}\ne0\iff M_{\bfx,\bfy}=1$ \\
$\EXACTZ$ & all integers & $\bfL_C\bfR_C=M$ \\
$\SUPZ$ & all integers & $(\bfL_C\bfR_C)_{\bfx,\bfy}\ne0\iff M_{\bfx,\bfy}=1$ \\
\bottomrule
\end{tabular}
\end{center}

Here $\PAR$ stands for a partition of the ones of $M$ into all-one rectangles. Our constructions are in the third model: $C$ is an \emph{exact integer circuit}, or $\EXACTZ$ circuit, for $M$ if both matrices have integer entries and $\bfL_C\bfR_C=M$.

Next, we specialize to partial match. Recall from Section~\ref{sec:pm-prelim} that the rows of $P_1^{\otimes n}$ are indexed by binary strings and its columns by query strings. Accordingly, we name the input gates of a circuit for $P_1^{\otimes n}$ by binary strings $\bfx$ and its output gates by query strings $\bfy$; the string coordinates $x_i$ should not be confused with the input values $u_{\bfx}$. The entries of $P_1^{\otimes n}$ are then polynomials in the coordinates of $\bfx$: for $\bfy\in\{0,1,*\}^n$, put
\[
f_{\bfy}(\bfx)\coloneqq\prod_{i:y_i=1}x_i\prod_{i:y_i=0}(1-x_i),
\]
so that $(P_1^{\otimes n})_{\bfx,\bfy}=f_{\bfy}(\bfx)$. Since half-stars partial match only involves queries with a fixed number of specified positions, we let $P_{n,s}$ be the submatrix of $P_1^{\otimes n}$ formed by the columns $\bfy$ with $\lvert S(\bfy)\rvert=\lfloor sn\rfloor$, retaining all rows.

We measure circuits for $P_{n,s}$ by the growth rate of their degrees, in the same way as admissible exponent pairs. Let $\calK(s)$ consist of the pairs $(l,r)\in\mathbb{R}_{\ge0}^2$ for which there are $\EXACTZ$ circuits $C_n$ computing $P_{n,s}$ with $\deg_{\sfL}(C_n)\le2^{(l+o(1))n}$ and $\deg_{\sfR}(C_n)\le2^{(r+o(1))n}$. Equivalently, for every $\varepsilon>0$, every sufficiently large $n$ has an $\EXACTZ$ circuit for $P_{n,s}$ with degrees at most $2^{(l+\varepsilon)n}$ and $2^{(r+\varepsilon)n}$. For $s=1/2$, even lengths suffice: for odd $n$, appending $0$ to both strings embeds $P_{n,1/2}$ into $P_{n+1,1/2}$ without increasing either degree.

Finally, we fix some notation for counting. For $0\le v\le u$, put
\[
\h(u,v)\coloneqq u\log_2u-v\log_2v-(u-v)\log_2(u-v),
\]
with $0\log_20\coloneqq0$; thus $\h(t)=\h(1,t)$ is the binary entropy function of Section~\ref{sec:reduction}. For a string $\bfx$ and a symbol $i$, let $\#i(\bfx)$ count the occurrences of $i$ in $\bfx$.

\subsection{From circuits to algorithms}\label{sec:circuit-enumeration}

For an algorithm, the mere existence of good circuits is not enough: we must be able to \emph{explicitly construct} them, in the sense that the edges at any given input or output gate can be listed in time comparable to its degree. Lemma~\ref{lem:block-grouping} reduces circuits for $P_{n,1/2}$ to tensor powers of a circuit for $P_{k,1/2}$. When $k$ is fixed, the latter circuit can be hardcoded, and the construction becomes explicit for all $n$ (Corollary~\ref{cor:enumeration}) and hence partial match algorithms (Corollary~\ref{cor:circuit-to-pm}).

We start with the combinatorial part, which does not yet care about explicitness. A circuit for $P_{k,1/2}$ only handles queries whose wildcards are spread evenly over blocks of length $k$, while a general query of $P_{n,1/2}$ may have its wildcards distributed arbitrarily. The following lemma shows that we can always permute the coordinates so that the wildcards become evenly spread, and that the price is paid only on the left degree.

\begin{lemma}\label{lem:block-grouping}
Let $k,n\ge2$ be even, and put $d\coloneqq\lceil n/k\rceil$. If $E$ is an $\EXACTZ$ circuit computing $(P_{k,1/2})^{\otimes d}$, then there is an $\EXACTZ$ circuit $C_n$ computing $P_{n,1/2}$ with
\[
\deg_{\sfL}(C_n)\le\binom{kd}{d}\deg_{\sfL}(E),\qquad\deg_{\sfR}(C_n)\le\deg_{\sfR}(E).
\]
\end{lemma}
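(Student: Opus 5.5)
First pad: append $kd-n$ coordinates to every string and query. The padding is chosen so that the padded queries are still half-stars of length $kd$. Since $kd-n$ is even, we can append $(kd-n)/2$ pairs $(0,0)$ and $(0,*)$ to each query, and $0$'s to each string. This embeds $P_{n,1/2}$ as a submatrix of $P_{kd,1/2}$ without changing any entry, so it suffices to treat $n=kd$.

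The main idea is to cover all queries by a family of coordinate permutations. Say a permutation $\pi$ of $[kd]$ is \emph{good} for $\bfy$ if, after permuting by $\pi$, each of the $d$ consecutive blocks of length $k$ contains exactly $k/2$ specified positions. For such $\pi$, the column of $\pi(\bfy)$ lies in $(P_{k,1/2})^{\otimes d}$. I want a family $\Pi$ of permutations, of size at most $\binom{kd}{d}$, such that every half-stars query has at least one good $\pi\in\Pi$.

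I would construct $\Pi$ by pairing. Split $[kd]$ into $kd/2$ fixed pairs of consecutive positions, $\{2j-1,2j\}$. If every pair had exactly one specified position, then any permutation that keeps pairs together would work. In general, though, the number of "both specified" pairs equals the number of "both wildcard" pairs, so the pairs are not uniform. A cleaner route is to use sorting. There is a canonical permutation $\sigma_{\bfy}$ that makes $\bfy$'s specified set equal to the alternating pattern, but that gives $\binom{kd}{kd/2}$ permutations, which is too many. To get down to $\binom{kd}{d}$, I would let $\Pi$ be indexed by $d$-subsets $D$ of $[kd]$, or rather by a choice of one "anchor" per block. Fix the query's specified set $S$, with $|S|=kd/2$. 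Walk along the positions in order and cut the sequence into $d$ consecutive segments of $k$ positions each, taken from a cyclic order chosen so that each segment is balanced. By a discrete intermediate value argument on the cyclic running count of specified positions, some cyclic shift makes every block contain exactly $k/2$ specified positions. Only $kd$ cyclic shifts then suffice, which is well within $\binom{kd}{d}$.

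This intermediate value step is the point I expect to be the main obstacle. Balancing one block is easy, but balancing all $d$ blocks simultaneously by a single shift is false in general. I would therefore replace the shift by a choice of $d$ cut points, i.e.\ a $d$-subset. The idea is to reorder the coordinates so that consecutive specified and wildcard positions are interleaved, with a greedy assignment that fills block $j$ with the first $k/2$ unused specified and the first $k/2$ unused wildcard positions. The hope is that the resulting permutation is determined by the at most $\binom{kd}{d}$ possible "boundary" data. If this encoding fails, the fallback is a probabilistic covering of random permutations, which also gives $\binom{kd}{d}\cdot\mathrm{poly}$ up to the allowed slack.

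Given $\Pi$, define $C_n$ by taking one copy $E_\pi$ of $E$ for each $\pi\in\Pi$, with the inputs permuted by $\pi$. Each string $\bfx$ connects to all copies with its edges from $E$ at $\pi(\bfx)$, giving left degree at most $|\Pi|\deg_{\sfL}(E)$. Each query $\bfy$ is assigned to a single good $\pi$ and connects only within $E_\pi$, giving right degree at most $\deg_{\sfR}(E)$. Correctness holds because $(\bfL\bfR)_{\bfx,\bfy}=((P_{k,1/2})^{\otimes d})_{\pi\bfx,\pi\bfy}=(P_1^{\otimes n})_{\bfx,\bfy}$, since matching is invariant under simultaneous permutation of coordinates.
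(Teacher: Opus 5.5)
Your padding and your circuit (one copy of $E$ per permutation, every string wired to all copies, every query wired only to the copy of its assigned permutation) match the paper's proof. They correctly reduce the lemma to a purely combinatorial claim: there is a family of at most $\binom{kd}{d}$ permutations of $[kd]$ that contains, for every $kd/2$-subset $S$ of specified positions, one placing exactly $k/2$ elements of $S$ in each of the $d$ blocks.

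That claim is the whole content of the bound, and your proposal does not prove it.

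\emph{Cyclic shifts.} These fail, as you note. Already $k=2$, $d=3$, $S=\{1,2,3\}$ has no balancing shift.

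\emph{Greedy interleaving.} This always balances, but its output is not determined by $d$ boundary positions. Its block labels form a shuffle of two copies of $1^{k/2}2^{k/2}\cdots d^{k/2}$, and that shuffle is dictated by all of $S$. Concretely, take the sets $S$ whose running surplus of specified over wildcard positions is at least $k/2$ after each of the steps $k/2,\dots,kd-k/2$.
\begin{itemize}
\item Any two such sets receive distinct greedy permutations: at the first position where they differ, the two candidate labels differ by at least one.
\item There are $C_{(kd-k)/2}=2^{kd-k-O(\log(kd))}$ such sets, a Catalan number.
\item For $k\ge4$ and large $d$, this far exceeds $\binom{kd}{d}\le2^{\h(1/k)kd}$.
\end{itemize}

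\emph{Random covering.} You assert this fallback only up to unquantified slack, so as written it proves a weaker statement. More importantly for how the lemma is used, a random family is not explicit once $d$ grows. Corollary~\ref{cor:enumeration} needs to enumerate the family, and to compute the permutation assigned to a query, in time comparable to the degrees for every $n$.

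The missing idea is a deterministic rule whose output is encoded by $d$ positions. The paper's rule works as follows.
\begin{itemize}
\item Scan $i=1,\dots,kd$, pushing each coordinate onto a stack. Whenever the top $k$ entries contain exactly $k/2$ wildcards, pop them as the next block.
\item Invariant: after every step, no $k$ consecutive stack entries are balanced. A push creates only the top window, and a pop leaves a prefix of an earlier stack.
\item Suppose the stack were nonempty at the end. Its length would be a multiple of $k$ with exactly half wildcards. Cut it into length-$k$ windows and slide a window along it; the count changes by at most one per step, so some window is balanced, a contradiction.
\item Hence exactly $d$ pops occur. Replaying the stack using only the $d$ pop times, without looking at the query, recovers the permutation.
\end{itemize}
This gives at most $\binom{kd}{d}$ permutations and makes them explicitly enumerable. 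Your intermediate-value instinct was the right tool, but it has to be applied to the residual stack rather than to a global cyclic shift.
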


\begin{proof}[Proof of Lemma~\ref{lem:block-grouping}]
Let $\bfy$ be a length-$kd$ query with exactly $kd/2$ wildcards. We call a permutation of $[kd]$ \emph{balancing} for $\bfy$ if, after dividing the permuted query into $d$ consecutive blocks of length $k$, each block contains exactly $k/2$ wildcards.

Suppose that $\boldsymbol{\iota}$ is a family of permutations containing a balancing permutation for every such query. Given a length-$n$ binary string $\bfx$, append $kd-n$ zeros; given a length-$n$ query $\bfy$, append $(kd-n)/2$ zeros followed by $(kd-n)/2$ wildcards. The new positions all match, and the padded query has exactly $kd/2$ wildcards.

Construct $C_n$ with middle gates $\{(e,\iota)\}_{e\in\Mid(E),\iota\in\boldsymbol{\iota}}$. Each input gate $\bfx$ uses every permutation $\iota$ of its padded binary string: copy the edges from that permuted string to $e$ in $E$ as edges $\bfx\to(e,\iota)$ in $C_n$, with the same coefficients. Each output gate $\bfy$ selects exactly one permutation $\iota(\bfy)\in\boldsymbol{\iota}$ balancing for its padded query: copy the edges from $e$ to that permuted query in $E$ as edges $(e,\iota(\bfy))\to\bfy$ in $C_n$, with the same coefficients. Padding and simultaneous coordinate permutations preserve matching, and only the selected copy contributes to the entry for $(\bfx,\bfy)$. Thus $C_n$ computes $P_{n,1/2}$, with $\deg_{\sfL}(C_n)\le\lvert\boldsymbol{\iota}\rvert\deg_{\sfL}(E)$ and $\deg_{\sfR}(C_n)\le\deg_{\sfR}(E)$.

It remains to construct $\boldsymbol{\iota}$ with $\lvert\boldsymbol{\iota}\rvert\le\binom{kd}{d}$, and to select a balancing permutation for each padded query. Given a length-$kd$ query $\bfy$ with $\lvert S(\bfy)\rvert=kd/2$, start with an empty stack and scan its coordinates $i=1,\ldots,kd$, pushing $i$ onto the stack. Whenever the top $k$ coordinates contain exactly $k/2$ wildcards in $\bfy$, remove them and map them, in increasing coordinate order, to the next block.

There are exactly $d$ removals. At the end of each step, no consecutive $k$ coordinates on the stack contain exactly $k/2$ wildcards: pushing can create only the top window, and removing it leaves a prefix of the previous stack. If the stack were nonempty at the end, its length would be divisible by $k$ and exactly half its coordinates would contain wildcards. Divide it into blocks of length $k$; their wildcard counts have average $k/2$. A sliding window's wildcard count changes by at most $1$ at each step, so some window must contain exactly $k/2$ wildcards, a contradiction. The stack is therefore empty at the end, defining a permutation $\iota(\bfy)$.

The permutation $\iota(\bfy)$ is determined by the $d$ positions in $[kd]$ at which blocks are removed: simulating the stack procedure with these removal positions, ignoring the query characters, recovers it. Let $\boldsymbol{\iota}$ consist of the permutations obtained in this way from all $d$-subsets of $[kd]$ for which every removal can be performed. This family contains every $\iota(\bfy)$ and has $\lvert\boldsymbol{\iota}\rvert\le\binom{kd}{d}$.
\end{proof}

The factor $\binom{kd}{d}\le2^{\h(1/k)kd}$ has exponential rate $\h(1/k)$, which tends to $0$ as $k$ grows. So for any slack we can fix one good circuit $C_0$ for $P_{k,1/2}$ once and for all, and apply Lemma~\ref{lem:block-grouping} to its tensor powers. Since the edges of a tensor power can be listed block by block, the resulting circuits are explicit.

\begin{corollary}\label{cor:enumeration}
Let $(l,r)\in\calK(1/2)$ and $\varepsilon>0$. There are $\EXACTZ$ circuits $C_n$ computing $P_{n,1/2}$, one for each even $n\ge2$, and deterministic algorithms $A_{\sfL}$ and $A_{\sfR}$ with the following properties.
\begin{itemize}
 \item Given $\bfx\in\{0,1\}^n$, the algorithm $A_{\sfL}$ lists all edges $\bfx\to e$ of $C_n$, each as a middle-gate name $e$ with its coefficient, in time $O\bigl(2^{(l+\varepsilon)n}\bigr)$.
 \item Given $\bfy\in\{0,1,*\}^n$ with $\lvert S(\bfy)\rvert=n/2$, the algorithm $A_{\sfR}$ lists all edges $e\to\bfy$ of $C_n$, each as a middle-gate name $e$ with its coefficient, in time $O\bigl(2^{(r+\varepsilon)n}\bigr)$.
\end{itemize}
\end{corollary}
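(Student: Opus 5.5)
Since $(l,r)\in\calK(1/2)$, I first fix a single even $k$ large enough that $\h(1/k)\le\varepsilon/4$ and that there is an $\EXACTZ$ circuit $E_0$ computing $P_{k,1/2}$ with $\deg_{\sfL}(E_0)\le2^{(l+\varepsilon/4)k}$ and $\deg_{\sfR}(E_0)\le2^{(r+\varepsilon/4)k}$. Both requirements hold for all sufficiently large $k$, by the definition of $\calK(1/2)$ and since $\h(1/k)\to0$. This $E_0$ is a fixed finite object, so the algorithms may hardcode it as a table. For each gate of $E_0$, the table stores its list of edges together with the middle-gate names and coefficients.

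For each even $n\ge2$, let $d\coloneqq\lceil n/k\rceil$ and $E\coloneqq E_0^{\otimes d}$, meaning $\bfL_E=\bfL_{E_0}^{\otimes d}$ and $\bfR_E=\bfR_{E_0}^{\otimes d}$. By the mixed-product property, $E$ computes $(P_{k,1/2})^{\otimes d}$. Its degrees multiply, so $\deg_{\sfL}(E)\le2^{(l+\varepsilon/4)kd}$ and likewise on the right. Its middle gates are $d$-tuples of middle gates of $E_0$. The edges at an input gate $(\bfx_1,\ldots,\bfx_d)$ are all tuples $(e_1,\ldots,e_d)$ with each $e_j$ an edge of $\bfx_j$ in $E_0$, and the coefficient is the product of the $d$ coefficients. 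Hence they can be listed, by a nested loop over the hardcoded tables, in time $O(d\cdot\deg_{\sfL}(E))$. Output gates are handled symmetrically. I then take $C_n$ to be the circuit of Lemma~\ref{lem:block-grouping} built from this $E$. Its degrees are at most $\binom{kd}{d}2^{(l+\varepsilon/4)kd}\le2^{(\h(1/k)+l+\varepsilon/4)kd}$ and $2^{(r+\varepsilon/4)kd}$. Since $kd\le n+k$ with $k$ constant, these are $O(2^{(l+\varepsilon/2)n})$ and $O(2^{(r+\varepsilon/2)n})$.

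It remains to make the proof of Lemma~\ref{lem:block-grouping} explicit.

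For $A_{\sfL}$, given $\bfx$, the algorithm pads it with zeros to length $kd$. It then enumerates all $d$-subsets of $[kd]$. For each subset it simulates the stack procedure with these removal positions, discards the subset if some removal is impossible, and otherwise obtains the permutation $\iota$. This costs $\mathrm{poly}(n)$ per subset, for $\binom{kd}{d}\mathrm{poly}(n)$ in total. For each resulting $\iota$, it permutes the padded string, lists its edges in $E$ as above, and outputs each middle-gate name paired with $\iota$, encoded as the removal set. Duplicate permutations arising from different subsets must be avoided, since a repeated $\iota$ would double-count edges. Using the canonical removal set as the name of $\iota$ handles this, because distinct valid removal sets yield distinct simulations. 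If two sets could still give the same permutation, I would deduplicate by sorting, which costs only a polynomial factor.

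For $A_{\sfR}$, given $\bfy$, the algorithm pads it as in the lemma. It runs the stack procedure on the actual query to obtain $\iota(\bfy)$ and its removal set in polynomial time. It then permutes the padded query and lists its edges in $E$, tagged with $\iota(\bfy)$. The name must match the one used by $A_{\sfL}$; canonical removal sets guarantee this.

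The polynomial and $O(d)$ overheads are absorbed into the remaining $\varepsilon/2$ of slack, which gives the claimed times. I expect the main obstacle to be this bookkeeping: showing that the permutation family is listed without duplicates and with names consistent across $A_{\sfL}$ and $A_{\sfR}$. The degree calculations themselves are routine.
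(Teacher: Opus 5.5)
Your proposal is correct and follows the paper's proof essentially step for step: you fix one good circuit for $P_{k,1/2}$ with $\h(1/k)\le\varepsilon/4$ and list the edges of its $d$-th tensor power blockwise as a Cartesian product of hardcoded lists. You then make Lemma~\ref{lem:block-grouping} explicit by naming each permutation by its set of removal positions, with $A_{\sfL}$ enumerating all $\binom{kd}{d}$ candidates and $A_{\sfR}$ running the stack procedure once. Your worry about duplicates is harmless, and the paper does not address it either. Once removal sets are the names, distinct sets are distinct middle gates and each output gate attaches to only one of them, so nothing is double counted. In any case the removal positions can be recovered from $\iota$ as the maxima of its blocks.
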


\begin{proof}[Proof of Corollary~\ref{cor:enumeration}]
Since $(l,r)\in\calK(1/2)$, its definition lets us choose and fix an even $k\ge2$ with $\h(1/k)\le\varepsilon/4$ and an $\EXACTZ$ circuit $C_{0}$ computing $P_{k,1/2}$ with $\deg_{\sfL}(C_{0})\le2^{(l+\varepsilon/4)k}$ and $\deg_{\sfR}(C_{0})\le2^{(r+\varepsilon/4)k}$. Hardcode this fixed circuit's adjacency lists and integer edge coefficients into the algorithms.

For each even $n\ge2$, let $d\coloneqq\lceil n/k\rceil$ and let $C_n$ be the circuit obtained by applying Lemma~\ref{lem:block-grouping} to $E\coloneqq C_0^{\otimes d}$. We first list the edges of $C_0^{\otimes d}$ at a binary string or query of length $kd$: divide it into $d$ blocks of length $k$, list the edges of each block in the hardcoded circuit $C_0$, and take the Cartesian product. This gives the names $e=(e_1,\ldots,e_d)$, with edge coefficient equal to the product of the $d$ block coefficients.

Next, recall from the proof of Lemma~\ref{lem:block-grouping} that the middle gates of $C_n$ are the pairs $(e,\iota)$, and that each $\iota\in\boldsymbol{\iota}$ is determined by its $d$ removal positions; we name $\iota$ by the $kd$-bit indicator string of these positions, from which $\iota$ is recovered in time $n^{O(1)}$. Given $\bfx$, the algorithm $A_{\sfL}$ pads it, enumerates the at most $\binom{kd}{d}$ indicator strings, recovers each $\iota$, and lists the edges of $C_0^{\otimes d}$ at the permuted padded string. Given $\bfy$, the algorithm $A_{\sfR}$ pads it, runs the stack procedure once to find $\iota(\bfy)$, and lists the edges of $C_0^{\otimes d}$ at the permuted padded query. Names and edge coefficients have $O(n)$ bits, since each coefficient is a product of $d$ fixed integers. Therefore
\[
\begin{aligned}
\operatorname{time}(A_{\sfL})
&\le n^{O(1)}\binom{kd}{d}\deg_{\sfL}(C_{0})^{d}
\le n^{O(1)}2^{\h(1/k)kd}2^{(l+\varepsilon/4)kd}\\
&\le n^{O(1)}2^{(l+\varepsilon/2)kd}
=O\!\left(2^{(l+\varepsilon)n}\right),\\
\operatorname{time}(A_{\sfR})
&\le n^{O(1)}\deg_{\sfR}(C_{0})^{d}
\le n^{O(1)}2^{(r+\varepsilon/4)kd}
=O\!\left(2^{(r+\varepsilon)n}\right).
\end{aligned}
\]
Here $k$ is fixed, so $kd=n+O(1)$, and the remaining exponent slack absorbs the polynomial factors and padding loss. This proves the corollary.
\end{proof}

Once the edges at every gate can be listed efficiently, we can turn these circuits into a partial match algorithm via ``meet-in-the-middle''.

\begin{corollary}\label{cor:circuit-to-pm}
Every $(l,r)\in\calK(1/2)$ is an admissible exponent pair.
\end{corollary}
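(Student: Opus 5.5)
We fix $(l,r)\in\calK(1/2)$ and $\varepsilon>0$, and apply Corollary~\ref{cor:enumeration} with slack $\varepsilon/2$. This gives explicit $\EXACTZ$ circuits $C_n$ for $P_{n,1/2}$ together with the listing algorithms $A_{\sfL}$ and $A_{\sfR}$. Given an instance $(\boldsymbol{A},\boldsymbol{B})$ of half-stars partial match with even $n$, the algorithm is the ``meet-in-the-middle'' count from the overview:
\[
 \sum_{\bfx\in\boldsymbol{A},\,\bfy\in\boldsymbol{B}}(P_{n,1/2})_{\bfx,\bfy}=\Bigl(\sum_{\bfx\in\boldsymbol{A}}\bfL_{C_n}[\bfx,*]\Bigr)\cdot\Bigl(\sum_{\bfy\in\boldsymbol{B}}\bfR_{C_n}[*,\bfy]\Bigr).
\]
This identity holds exactly over $\mathbb{Z}$ because $\bfL_{C_n}\bfR_{C_n}=P_{n,1/2}$. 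Its left-hand side counts matching pairs, which are nonnegative, so it is positive if and only if some pair matches. The case $n=0$ (or tiny $n$) is trivial.

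Concretely, we run $A_{\sfL}$ on every $\bfx\in\boldsymbol{A}$. This produces at most $\lvert\boldsymbol{A}\rvert\,2^{(l+\varepsilon/2)n}$ pairs (middle-gate name, coefficient). We sort them by name and merge equal names by adding their coefficients, obtaining a sparse vector $\bfu$. We do the same with $A_{\sfR}$ on $\boldsymbol{B}$ to obtain $\bfv$. Finally, we compute $\bfu\cdot\bfv$ by a sorted merge over the common names, and answer ``yes'' if and only if the result is nonzero.

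The only remaining point is bookkeeping for the bit complexity, which is where the slack goes. Middle-gate names and coefficients have $O(n)$ bits, as noted in the proof of Corollary~\ref{cor:enumeration}. Accumulated sums are bounded by $2^{O(n)}\cdot(\lvert\boldsymbol{A}\rvert+\lvert\boldsymbol{B}\rvert)\cdot\max$ coefficient, so they have $O(n+\log(\lvert\boldsymbol{A}\rvert+\lvert\boldsymbol{B}\rvert))=O(n)$ bits, since $\lvert\boldsymbol{A}\rvert\le2^n$ and $\lvert\boldsymbol{B}\rvert\le3^n$. Sorting $N$ items therefore costs $O(N\log N\cdot n)=N\cdot n^{O(1)}$. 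All of these polynomial factors in $n$ are absorbed by the remaining $\varepsilon/2$ slack in the exponent. The total time is then $O(2^{(l+\varepsilon)n}\lvert\boldsymbol{A}\rvert+2^{(r+\varepsilon)n}\lvert\boldsymbol{B}\rvert)$, and the algorithm is deterministic (we can use radix sort or comparison sort). Hence $(l,r)$ is admissible.

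I expect no real obstacle. The one step requiring care is the correctness claim: we need exact integer arithmetic, rather than computation modulo something, so that cancellation among the $\pm$ coefficients never produces a false zero. Exactness over $\mathbb{Z}$ guarantees this, because the product equals the true count of matching pairs.
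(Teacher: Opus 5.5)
Your proposal is correct and matches the paper's proof essentially step for step. The paper also invokes Corollary~\ref{cor:enumeration} with slack $\varepsilon/2$, accumulates per-middle-gate left and right counters by deterministic sorting, sums the products of counters to get the exact integer count of matching pairs, and answers positively iff this count is positive. It too notes that $O(n)$-bit names and coefficients, together with $\lvert\boldsymbol{A}\rvert\le2^n$ and $\lvert\boldsymbol{B}\rvert\le3^n$, give only an $n^{O(1)}$ overhead, which the remaining slack absorbs.
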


\begin{proof}[Proof of Corollary~\ref{cor:circuit-to-pm}]
Use the algorithms $A_{\sfL},A_{\sfR}$ from Corollary~\ref{cor:enumeration}. For each $\bfx\in\boldsymbol{A}$, use $A_{\sfL}$ to enumerate all edges $\bfx\to e$ in $C_{n}$ and add each edge coefficient to the left counter for $e$. For each $\bfy\in\boldsymbol{B}$, use $A_{\sfR}$ to enumerate all edges $e\to\bfy$ in $C_{n}$ and add each edge coefficient to the right counter for $e$. For each $e$, multiply its left and right counters, taking absent counters to be zero, and sum the products. The result is
\[
\sum_{e}
\Bigl(\sum_{\bfx\in\boldsymbol{A}}(\bfL_{C_{n}})_{\bfx,e}\Bigr)
\Bigl(\sum_{\bfy\in\boldsymbol{B}}(\bfR_{C_{n}})_{e,\bfy}\Bigr)
=\sum_{\bfx\in\boldsymbol{A},\,\bfy\in\boldsymbol{B}}(P_{1}^{\otimes n})_{\bfx,\bfy}.
\]
Since $C_{n}$ computes $P_{n,1/2}$ over the integers, this is the number of matching pairs in $\boldsymbol{A}\times\boldsymbol{B}$. Output that a matching pair exists if and only if the result is positive.

Implement the counters by sorting the generated coefficients deterministically by $e$. Middle-gate names and edge coefficients have $O(n)$ bits, and $\lvert\boldsymbol{A}\rvert\le2^n$, $\lvert\boldsymbol{B}\rvert\le3^n$, so sorting and integer arithmetic introduce only a factor of $n^{O(1)}$. Using Corollary~\ref{cor:enumeration} with slack $\varepsilon/2$ gives the required running time.
\end{proof}

\subsection{Colored volume and rebalancing}\label{sec:colored-rebalancing}

In this subsection we prove our degree version of rebalancing. This idea comes from previous works~\cite{JS13,AGP23,Ser22,AL25}, where it was developed for optimizing circuit sizes; here we adapt the method to degrees.

A single circuit constructed in Section~\ref{sec:integer-circuit} does not improve the degrees on its own: its middle gates fall into groups with rather different left and right degrees. The idea of rebalancing is to take tensor powers, and to choose the circuit for the remaining tensor factors separately for each group, so that the two degrees even out.

To make this precise, we track the middle gates in groups given by a coloring. A coloring $\chi\colon\Mid(C)\to\calZ$ assigns to every middle gate a color in the color set $\calZ$; a circuit together with a coloring is a \emph{colored circuit}. We usually suppress the coloring from the notation and simply speak of a colored circuit $C$. For $\zeta\in\calZ$, let $C_\zeta$ retain all input and output gates and just the middle gates of color $\zeta$. That is, $C_\zeta=(\bfL_\zeta,\bfR_\zeta)$, where $\bfL_\zeta$ consists of the columns of $\bfL_C$ and $\bfR_\zeta$ of the rows of $\bfR_C$ indexed by middle gates of color $\zeta$, so that
\[
\bfL_C\bfR_C=\sum_{\zeta\in\calZ}\bfL_\zeta\bfR_\zeta.
\]

\subsubsection{Rebalancing}

Now we sketch the basic idea of rebalancing. Let $C^{(1)},\ldots,C^{(m)}$ be $m$ different colored circuits, and let $C^{(i)}_\zeta=(\bfL^{(i)}_\zeta,\bfR^{(i)}_\zeta)$ denote the $\zeta$-colored part of $C^{(i)}$. Suppose that all of them compute the same desired matrix $M$, that is, the identity
\[
M=\sum_{\zeta\in\calZ}\bfL^{(i)}_\zeta\bfR^{(i)}_\zeta
\]
holds for every $i$. For each single $i$, we can bound the degrees of $C^{(i)}$ by the modest estimate summing over the colors:
\begin{equation}\label{eqn:modest}
\deg_{\sfL}(C^{(i)})\le\sum_{\zeta\in\calZ}\deg_{\sfL}(C^{(i)}_\zeta),\qquad\deg_{\sfR}(C^{(i)})\le\sum_{\zeta\in\calZ}\deg_{\sfR}(C^{(i)}_\zeta).
\end{equation}
Although this estimate is rarely tight, we will see later that it is more convenient for the rebalancing process.

Now consider the question of constructing circuits for the tensor power $M^{\otimes d}$ from the base circuits for $M$ that we know. At first glance, the circuits we can build from the $C^{(i)}$ are the tensor products $C^{(i_1)}\otimes\cdots\otimes C^{(i_d)}$. The main insight of rebalancing is that we actually have a much richer family. Let us first consider the tensor square $M^{\otimes2}$. We may first expand one tensor factor by some chosen circuit $C^{(i)}$, that is,
\[
M^{\otimes2}=\Bigl(\sum_{\zeta\in\calZ}\bfL^{(i)}_\zeta\bfR^{(i)}_\zeta\Bigr)\otimes M=\sum_{\zeta\in\calZ}\Bigl(\bfL^{(i)}_\zeta\bfR^{(i)}_\zeta\Bigr)\otimes M.
\]
After pulling $M$ inside the summation, notice that we actually have the freedom to choose a different circuit computing the inner $M$ for each color $\zeta$. Let $j\colon\calZ\to[m]$ be such a choice; we can then write
\[
M^{\otimes2}=\sum_{\zeta\in\calZ}\Bigl(\bfL^{(i)}_\zeta\bfR^{(i)}_\zeta\Bigr)\otimes\Bigl(\bfL_{C^{(j_\zeta)}}\bfR_{C^{(j_\zeta)}}\Bigr)=\sum_{\zeta\in\calZ}\Bigl(\bfL^{(i)}_\zeta\otimes\bfL_{C^{(j_\zeta)}}\Bigr)\Bigl(\bfR^{(i)}_\zeta\otimes\bfR_{C^{(j_\zeta)}}\Bigr).
\]
By the modest estimate~\eqref{eqn:modest}, this circuit has
\begin{equation}\label{eqn:modest2}
\deg_{\sfL}\le\sum_{\zeta\in\calZ}\deg_{\sfL}(C^{(i)}_\zeta)\deg_{\sfL}(C^{(j_\zeta)}),\qquad\deg_{\sfR}\le\sum_{\zeta\in\calZ}\deg_{\sfR}(C^{(i)}_\zeta)\deg_{\sfR}(C^{(j_\zeta)}).
\end{equation}

In general, we can construct a family $\calF_d$ of depth-2 linear circuits computing $M^{\otimes d}$ in the following recursive way.

First, $\calF_0$ is a singleton, consisting of the trivial depth-2 linear circuit computing the $1$-dimensional identity map, which has degree $1$ on both sides.

Then for $d\ge1$, we construct $\calF_d$ from $\calF_{d-1}$ as follows. Each circuit in $\calF_d$ is given by first selecting an index $i\in[m]$ and a choice function $E\colon\calZ\to\calF_{d-1}$ that chooses a circuit $E_\zeta$ for the $(d-1)$-th power for every color $\zeta$. The circuit is then given by the parts
\begin{equation}\label{eqn:family}
M^{\otimes d}=\sum_{\zeta\in\calZ}\Bigl(\bfL^{(i)}_\zeta\otimes\bfL_{E_\zeta}\Bigr)\Bigl(\bfR^{(i)}_\zeta\otimes\bfR_{E_\zeta}\Bigr).
\end{equation}
We call the circuits in $\calF_d$ the \emph{rebalancings} of the base circuits $C^{(1)},\ldots,C^{(m)}$.

To find a degree bound for the best circuit in $\calF_d$, our strategy is to optimize a \emph{weighted} degree instead. For a weight $\lambda>0$, define
\[
F^\star_d(\lambda)\coloneqq\min_{E\in\calF_d}\bigl\{\deg_{\sfL}(E)+\lambda\deg_{\sfR}(E)\bigr\}.
\]
This way we no longer need to control the two profiles $\deg_{\sfL}$ and $\deg_{\sfR}$ simultaneously, which is hard to keep track of when taking tensor products. Write $a^{(i)}_\zeta\coloneqq\deg_{\sfL}(C^{(i)}_\zeta)$ and $b^{(i)}_\zeta\coloneqq\deg_{\sfR}(C^{(i)}_\zeta)$; after discarding middle gates without edges, these are positive. For the circuit $E'$ constructed by~\eqref{eqn:family}, the same relaxation as in~\eqref{eqn:modest2} gives
\begin{align*}
\deg_{\sfL}(E')+\lambda\deg_{\sfR}(E')&\le\sum_{\zeta\in\calZ}\bigl(a^{(i)}_\zeta\deg_{\sfL}(E_\zeta)+\lambda b^{(i)}_\zeta\deg_{\sfR}(E_\zeta)\bigr)\\
&=\sum_{\zeta\in\calZ}a^{(i)}_\zeta\!\left(\deg_{\sfL}(E_\zeta)+\lambda\frac{b^{(i)}_\zeta}{a^{(i)}_\zeta}\deg_{\sfR}(E_\zeta)\right)\!.
\end{align*}
Now suppose we want to find the best $E'$, which comes from first determining some $i\in[m]$ and then the choice $E\colon\calZ\to\calF_{d-1}$. From the form above, the choice actually splits: each $E_\zeta$ can be optimized independently, with the rescaled weight $\lambda b^{(i)}_\zeta/a^{(i)}_\zeta$. Choosing every $E_\zeta$ optimally, we obtain a circuit $E'$ with
\[
\deg_{\sfL}(E')+\lambda\deg_{\sfR}(E')\le G^{(i)}_d(\lambda)\coloneqq\sum_{\zeta\in\calZ}a^{(i)}_\zeta\,F^\star_{d-1}\!\left(\lambda\frac{b^{(i)}_\zeta}{a^{(i)}_\zeta}\right)\!.
\]
So $G^{(i)}_d(\lambda)$ is the best bound we get when the outermost circuit is $C^{(i)}$. Since the family $\calF_d$ goes through all $C^{(i)}$, we have
\[
F^\star_d(\lambda)\le\min_{i\in[m]}G^{(i)}_d(\lambda).
\]

\subsubsection{Colored volume}

The value of $G^{(i)}_d(\lambda)$ depends on the values of $F^\star_{d-1}$ at other points, which makes this recursion seem hard to analyze. However, the behaviour of this functional recursion has been resolved in the work of Alman and Li~\cite{AL25}, which is the key input of this section.

We begin with some notation before stating their result. Notice that the recursion above depends only on the degrees of the color classes, and no longer on the specific details of the circuits. This allows the following simplification. A \emph{degree-pair list} is a finite nonempty list $W$ of pairs $(a,b)$ of reals with $a,b\ge1$; its \emph{volume} is $\rho_W(\tau)\coloneqq\sum_{(a,b)\in W}a^\tau b^{1-\tau}$ for $\tau\in[0,1]$. We call the one-element lists $[(A,1)]$ and $[(1,B)]$ with $A,B\ge1$ \emph{trivial}.

\begin{lemma}[Essentially {\cite[Theorem~B.2]{AL25}}]\label{lem:skew-recursion}
Let $W_1,\ldots,W_m$ be degree-pair lists, including a trivial list of each of the two forms. Define functions $F_d\colon(0,\infty)\to(0,\infty)$ by the recursion
\[
F_0(\lambda)\coloneqq1+\lambda,\qquad F_d(\lambda)\coloneqq\min_{1\le i\le m}\sum_{(a,b)\in W_i}a\,F_{d-1}\!\left(\lambda\frac{b}{a}\right)\qquad(d\ge1).
\]
Then we have the estimate
\[
F_d(\lambda)\le2^{o(d)}\max_{0\le\tau\le1}\rho(\tau)^d\lambda^{1-\tau},
\]
where the $o(d)$ term depends on $W_1,\ldots,W_m$ but not on $\lambda$, and $\rho$ is defined by
\[
\rho(\tau)\coloneqq\min_{1\le i\le m}\rho_{W_i}(\tau).
\]
\end{lemma}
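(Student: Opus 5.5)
The plan is to take logarithmic coordinates and show that $\log F_d(\lambda)$ is controlled by a convex-duality (Legendre-type) bound. The natural potential is
\[
\Phi_d(\lambda)\coloneqq\max_{0\le\tau\le1}\rho(\tau)^d\lambda^{1-\tau},
\]
and the goal is an inductive inequality $F_d(\lambda)\le K_d\,\Phi_d(\lambda)$ with $K_d=2^{o(d)}$. Two base facts come first. $F_0(\lambda)=1+\lambda\le2\max(1,\lambda)=2\Phi_0(\lambda)$. For every $\tau$, the function $\log\rho_{W}(\tau)$ is convex in $\tau$, being a log-sum-exp of linear functions, so $\rho$ is a minimum of log-convex functions. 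The trivial lists give $\rho(\tau)\le\min(A^\tau,B^{1-\tau})$ for the corresponding $A,B$, so $\rho$ is small near both endpoints. This is what makes the extreme regimes $\lambda\to0$ and $\lambda\to\infty$ harmless: there one applies the trivial circuit repeatedly, for instance $F_d(\lambda)\le A^d F_0(\lambda/A^d)$-type bounds.

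The inductive step fixes $\lambda$ and seeks the maximizing $\tau^\ast$ of $\rho(\tau)^d\lambda^{1-\tau}$. Choose $i$ with $\rho_{W_i}(\tau')=\rho(\tau')$ for a suitably chosen nearby $\tau'$, and bound
\[
\sum_{(a,b)\in W_i} a\,\Phi_{d-1}(\lambda b/a)=\sum_{(a,b)} a\max_\sigma \rho(\sigma)^{d-1}\lambda^{1-\sigma}(b/a)^{1-\sigma}.
\]
If every term used the same $\sigma=\tau'$, we would get exactly $\rho_{W_i}(\tau')\rho(\tau')^{d-1}\lambda^{1-\tau'}\le\Phi_d(\lambda)$. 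The difficulty is that different terms choose different $\sigma$. The way to handle this is to exploit that the exponents drift: in the recursion each term rescales $\lambda$ by $b/a$, and after $d$ levels the path of $\log\lambda$ is a sum of increments. So I would instead unfold the recursion fully. One gets $F_d(\lambda)\le\sum$ over adaptive sequences, with the list chosen depending on the current $\lambda$-value, of products of $a$'s, times $F_0$ of the final rescaled $\lambda$. This is a multi-type branching process whose state is $\log\lambda$.

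The bound then reduces to a large-deviation / type-counting computation. Discretize $\log\lambda$ into $O(d)$ buckets, up to $2^{o(d)}$ loss, and choose a strategy: use the list optimal at parameter $\tau$ throughout, where $\tau$ is picked by the current $\lambda$ relative to a planned drift. Group the leaves by type, meaning the empirical frequency of each $(a,b)$ from each list. The contribution of a type is a multinomial times $\prod a\cdot\max(1,\lambda\prod b/a)$, and by standard entropy bounds this is at most $2^{o(d)}\rho_W(\tau)^d\lambda^{1-\tau}$-style via the identity $\sum_{\text{types}}\binom{d}{\cdot}\prod a^{\tau}b^{1-\tau}=\rho_W(\tau)^d$, using $\max(1,x)\le x^{1-\tau}$-like Hölder interpolation with $\tau\in[0,1]$.

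The \textbf{main obstacle} is the adaptivity: a single fixed list gives the bound $\min_i\max_\tau$, while we need $\max_\tau\min_i$. Closing this minimax gap is exactly where switching lists as the rescaled $\lambda$ crosses thresholds matters, and where the trivial lists are needed to stop overshoot at the ends. I would follow \cite[Theorem~B.2]{AL25}'s argument: track the weighted potential along the path, switching lists at each step according to the current rescaled $\lambda$. Then verify that the only modification relative to their size-based setting is replacing $F_0=1+\lambda$ and allowing $a,b\ge1$ reals rather than integers; neither affects their convexity argument. Since $\rho$ is lower semicontinuous as a minimum of finitely many continuous functions, the max over $\tau$ is attained, and the $o(d)$ depends only on the lists through the discretization granularity and the number $m$ and sizes of the lists.
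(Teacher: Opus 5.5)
Your last step, importing \cite[Theorem~B.2]{AL25} and checking that the change of setting does not matter, is exactly how the paper handles this lemma. The paper gives no argument beyond that citation and the remark that the proof uses no property specific to circuit sizes. So as a citation-based plan, your proposal agrees with the paper. The self-contained sketch before the citation, however, has a genuine gap.

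\textbf{The interpolation step is backwards.} For $\tau\in[0,1]$ one has $x^{1-\tau}\le\max(1,x)$; in fact $\max(1,x)=\max_{\tau\in[0,1]}x^{1-\tau}$. Correspondingly, counting types over a fixed list $W$ gives exactly $\rho_W(1)^d+\lambda\rho_W(0)^d$. Since $\log\rho_W$ is convex, $\tau\mapsto\rho_W(\tau)^d\lambda^{1-\tau}$ peaks at an endpoint. For example, $W=[(4,1),(1,4)]$ with $\lambda=1$ gives $2\cdot5^d$, not $2^{o(d)}\rho_W(1/2)^d=2^{o(d)}4^d$. Interior $\tau$ can only come from the leaf identity $\prod a\cdot F_0(\lambda_{\mathrm{leaf}})=\lambda^{1-\tau}\prod a^{\tau}b^{1-\tau}\bigl(\lambda_{\mathrm{leaf}}^{\tau-1}+\lambda_{\mathrm{leaf}}^{\tau}\bigr)$, where $\lambda_{\mathrm{leaf}}=\lambda\prod(b/a)$. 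This identity is lossless only if the list choices steer $\lambda_{\mathrm{leaf}}$ to $2^{o(d)}$ while every node uses a list optimal at essentially the same $\tau$. That is precisely the adaptivity you flag and then delegate, so without the citation your sketch does not get past the single-list bound $\min_i\max_\tau$.

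\textbf{Your opening induction can be repaired directly.} The ``different $\sigma$'' problem comes from using $\rho$ itself, whose maximizer can jump by a constant when $\lambda$ is rescaled by $b/a$; changing the potential fixes it. Let $\hat\psi$ be the least concave majorant of $\log_2\rho$ on $[0,1]$, and put $g\coloneqq\hat\psi+4\varepsilon\tau(1-\tau)$ and $\Phi^{\varepsilon}_d(\lambda)\coloneqq\max_\tau 2^{dg(\tau)}\lambda^{1-\tau}$.
\begin{itemize}
\item Maximizing after adding the affine function $(1-\tau)\log_2\lambda$ is unchanged by passing to the concave majorant, so $\Phi^\varepsilon_d(\lambda)\le2^{\varepsilon d}\max_\tau\rho(\tau)^d\lambda^{1-\tau}$.
\item Since $g$ is $8\varepsilon$-strongly concave, the maximizer $\sigma_{(a,b)}$ for $\Phi^\varepsilon_{d-1}(\lambda b/a)$ lies within $L/(8\varepsilon(d-1))$ of the maximizer $\sigma_0$ for $\Phi^\varepsilon_{d-1}(\lambda)$. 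Here $L\coloneqq\max\lvert\log_2(a/b)\rvert$ over all pairs in all lists.
\item Take $W_i$ with $\rho_{W_i}(\sigma_0)=\rho(\sigma_0)$. Then
\[
a\,\Phi^\varepsilon_{d-1}(\lambda b/a)=2^{(d-1)g(\sigma_{(a,b)})}\lambda^{1-\sigma_{(a,b)}}a^{\sigma_{(a,b)}}b^{1-\sigma_{(a,b)}}\le\Phi^\varepsilon_{d-1}(\lambda)\,a^{\sigma_0}b^{1-\sigma_0}\,2^{L^2/(8\varepsilon(d-1))}.
\]
\item Summing over $W_i$ and using $\log_2\rho\le g$ gives at most $\Phi^\varepsilon_{d-1}(\lambda)\rho(\sigma_0)2^{L^2/(8\varepsilon(d-1))}\le\Phi^\varepsilon_d(\lambda)2^{L^2/(8\varepsilon(d-1))}$.
\item Start at $d=1$, where $F_1(\lambda)=O(1+\lambda)$ and $\Phi^\varepsilon_1(\lambda)\ge\max(1,\lambda)$ because $\rho\ge1$. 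The per-level losses multiply to $2^{O(L^2\log d/\varepsilon)}$.
\end{itemize}
Hence, for every $\varepsilon>0$, $F_d(\lambda)\le2^{\varepsilon d+O_\varepsilon(\log d)}\max_\tau\rho(\tau)^d\lambda^{1-\tau}$ uniformly in $\lambda$, which is the lemma. This replaces your unfolding, discretization and large-deviation steps, and the appeal to \cite{AL25}, with a short self-contained induction. It never uses the trivial lists.
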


We remark that the original statement in~\cite[Appendix~B]{AL25}, building on~\cite{JS13,AGP23,Ser22}, is more concrete: it is stated for optimizing the size of circuits. However, a look at the proof shows that it makes no use of special properties of circuit sizes, so it applies verbatim to abstract degree-pair lists as well.

To connect the lemma with the sketch above, take $W_i\coloneqq[(a^{(i)}_\zeta,b^{(i)}_\zeta)]_{\zeta\in\calZ}$. Since $F^\star_0(\lambda)=1+\lambda$ and the right-hand side of the recursion is monotone in $F_{d-1}$, induction on $d$ shows $F^\star_d\le F_d$. Thus the lemma bounds the weighted degree of the best rebalancing in terms of the volumes $\rho_{W_i}(\tau)$ alone. This motivates us to consider the \emph{colored volume} of circuits, which is exactly the volume of this degree-pair list. Let $\tau\in[0,1]$ be a parameter. The \emph{$\tau$-colored volume} of a colored circuit $C$ is defined by
\[
\rho_C(\tau)\coloneqq\sum_{\zeta\in\calZ}\deg_{\sfL}(C_\zeta)^\tau\deg_{\sfR}(C_\zeta)^{1-\tau}.
\]
Thus $\rho_C(1)$ and $\rho_C(0)$ are the sums of the left and of the right degrees of the color classes, and $\rho_C(\tau)$ interpolates between them. For a single color, $\rho_C(\tau)=\deg_{\sfL}(C)^\tau\deg_{\sfR}(C)^{1-\tau}$.

We now turn Lemma~\ref{lem:skew-recursion} back into a statement about circuits, which makes the sketch above precise.

\begin{corollary}\label{cor:rebalancing}
Let $C^{(1)},\ldots,C^{(m)}$ be colored $\EXACTZ$ circuits computing the same nonzero matrix $M$, among which there are a single-colored circuit with right degree $1$ and a single-colored circuit with left degree $1$, and put
\[
\rho(\tau)\coloneqq\min_{1\le i\le m}\rho_{C^{(i)}}(\tau).
\]
Then for every $d\ge0$ and every $\lambda>0$ there is an $\EXACTZ$ circuit $E_{d,\lambda}$ computing $M^{\otimes d}$ with
\[
\deg_{\sfL}(E_{d,\lambda})+\lambda\deg_{\sfR}(E_{d,\lambda})\le2^{o(d)}\max_{0\le\tau\le1}\rho(\tau)^d\lambda^{1-\tau},
\]
where the $o(d)$ term depends on $C^{(1)},\ldots,C^{(m)}$ but not on $\lambda$.
\end{corollary}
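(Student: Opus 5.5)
We derive Corollary~\ref{cor:rebalancing} from Lemma~\ref{lem:skew-recursion} by making the rebalancing sketch precise.

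First, discard from each $C^{(i)}$ every middle gate without edges. Such a gate contributes nothing to $\bfL\bfR$. If every gate of some color class is discarded, drop that color. Since $M\ne0$, every circuit keeps at least one color, so its list is nonempty. Every remaining color class $C^{(i)}_\zeta$ then has positive degrees $a^{(i)}_\zeta,b^{(i)}_\zeta\ge1$, because degrees are integers and some gate in the class has an edge on each side (a gate with edges on only one side contributes nothing and can also be discarded). Set $W_i\coloneqq[(a^{(i)}_\zeta,b^{(i)}_\zeta)]_{\zeta}$, so that $\rho_{W_i}=\rho_{C^{(i)}}$.

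The single-colored circuit with right degree $1$ gives a list $[(A,1)]$, and the one with left degree $1$ gives $[(1,B)]$. These are the trivial lists required by Lemma~\ref{lem:skew-recursion}, which therefore yields $F_d(\lambda)\le2^{o(d)}\max_\tau\rho(\tau)^d\lambda^{1-\tau}$.

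Next we show, by induction on $d$, that for every $\lambda>0$ there is a circuit $E_{d,\lambda}\in\calF_d$ computing $M^{\otimes d}$ with $\deg_{\sfL}+\lambda\deg_{\sfR}\le F_d(\lambda)$.

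For $d=0$, take the $1\times1$ identity circuit, whose weighted degree is $1+\lambda=F_0(\lambda)$.

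For $d\ge1$, let $i$ attain the minimum in the recursion at $\lambda$. For each color $\zeta$, take $E_\zeta\coloneqq E_{d-1,\lambda b^{(i)}_\zeta/a^{(i)}_\zeta}$ from the induction hypothesis, and form $E'$ by~\eqref{eqn:family}. Three facts need checking:
\begin{itemize}
\item \emph{Correctness.} By the mixed-product property, $(\bfL^{(i)}_\zeta\otimes\bfL_{E_\zeta})(\bfR^{(i)}_\zeta\otimes\bfR_{E_\zeta})=\bfL^{(i)}_\zeta\bfR^{(i)}_\zeta\otimes M^{\otimes(d-1)}$. Summing over $\zeta$ gives $M\otimes M^{\otimes(d-1)}=M^{\otimes d}$, and all entries stay integral.
\item \emph{Degree bound.} The middle gates of $E'$ are the disjoint union over $\zeta$ of $\Mid(C^{(i)}_\zeta)\times\Mid(E_\zeta)$. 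The left degree of an input gate $(\bfx,\bfx')$ is therefore $\sum_\zeta\deg_{\sfL}(C^{(i)}_\zeta,\bfx)\deg_{\sfL}(E_\zeta,\bfx')\le\sum_\zeta a^{(i)}_\zeta\deg_{\sfL}(E_\zeta)$, and symmetrically on the right. This is the bound~\eqref{eqn:modest2}.
\item \emph{Conclusion.} Combining the two bounds gives $\deg_{\sfL}(E')+\lambda\deg_{\sfR}(E')\le\sum_\zeta a^{(i)}_\zeta F_{d-1}(\lambda b^{(i)}_\zeta/a^{(i)}_\zeta)=F_d(\lambda)$.
\end{itemize}

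The main point needing care is matching the hypotheses of Lemma~\ref{lem:skew-recursion}: every pair must satisfy $a,b\ge1$, and the two trivial forms must be present. The cleanup step handles both. It also fixes the $o(d)$ term as depending only on the lists $W_i$, hence only on the circuits and not on $\lambda$.

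Note that the induction needs no monotonicity of $F_{d-1}$, because we directly choose $E_\zeta$ attaining the bound at the rescaled weight.
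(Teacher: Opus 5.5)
Your proposal is correct and follows the paper's proof essentially step for step: delete middle gates lacking an edge on either side to obtain degree-pair lists $W_i$ (including the two trivial ones) for Lemma~\ref{lem:skew-recursion}, then build $E_{d,\lambda}=\sum_{\zeta}C^{(i)}_\zeta\otimes E_\zeta$ by induction, with the rescaled weights $\lambda b^{(i)}_\zeta/a^{(i)}_\zeta$. One step to state explicitly: the cleanup never increases a degree or a colored volume, so the $\rho$ of the cleaned circuits is at most the $\rho$ in the statement, and the bound transfers because its right-hand side is monotone in $\rho$.
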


For example, the trivial circuits $(M,I)$ and $(I,M)$, each with a single color, can always be added to the list.

\begin{proof}[Proof of Corollary~\ref{cor:rebalancing}]
For each $i$, first delete middle gates with no incoming or no outgoing edge; this changes neither the computed matrix nor increases any degree or colored volume. Let $\calZ_i$ be the nonempty colors of $C^{(i)}$, and define $a^{(i)}_\zeta\coloneqq\deg_{\sfL}(C^{(i)}_\zeta)$ and $b^{(i)}_\zeta\coloneqq\deg_{\sfR}(C^{(i)}_\zeta)$. These degrees are positive integers, so $W_i\coloneqq[(a^{(i)}_\zeta,b^{(i)}_\zeta)]_{\zeta\in\calZ_i}$ is a degree-pair list with $\rho_{W_i}(\tau)=\rho_{C^{(i)}}(\tau)$ for all $\tau\in[0,1]$. Since $M\ne0$, the two single-colored circuits of the assumption give trivial lists of both forms. Let $F_d$ be as in Lemma~\ref{lem:skew-recursion} for $W_1,\ldots,W_m$; then its function $\rho$ is the one above, and it remains to construct, for every $d\ge0$ and $\lambda>0$, a circuit $E_{d,\lambda}$ computing $M^{\otimes d}$ with $\deg_{\sfL}(E_{d,\lambda})+\lambda\deg_{\sfR}(E_{d,\lambda})\le F_d(\lambda)$.

The recursion for $F_d$ tells us how to build such circuits: at each level, pick the best base circuit, and recurse on each of its color classes with the weight $\lambda$ rescaled by the ratio of the two degrees. We proceed by induction on $d$. At $d=0$, take a circuit with one middle gate and two edge coefficients equal to $1$. For $d\ge1$, choose at the root an index $i$ attaining the minimum in $F_d(\lambda)$. For every color $\zeta\in\calZ_i$, apply the induction hypothesis at $\lambda b^{(i)}_\zeta/a^{(i)}_\zeta$ to obtain $E_\zeta\coloneqq E_{d-1,\lambda b^{(i)}_\zeta/a^{(i)}_\zeta}$, and set
\[
E_{d,\lambda}\coloneqq\sum_{\zeta\in\calZ_i}C^{(i)}_\zeta\otimes E_\zeta,
\]
as in~\eqref{eqn:family}. This sum shares the input and output gates and combines the branches' disjoint middle gates. A middle gate in each tensor product has name $(e,f)$, where $e$ and $f$ name middle gates in its two factors; each edge coefficient is the product of the corresponding integer coefficients. The matrices computed by the $C^{(i)}_\zeta$ sum to $M$, and each $E_\zeta$ computes $M^{\otimes(d-1)}$. Hence $E_{d,\lambda}$ computes $M^{\otimes d}$. Tensor products multiply the two degrees, and combining the branches gives each degree at most the sum of the branch degrees. The induction hypothesis therefore bounds $\deg_{\sfL}(E_{d,\lambda})+\lambda\deg_{\sfR}(E_{d,\lambda})$ by $\sum_{\zeta\in\calZ_i}a^{(i)}_\zeta F_{d-1}(\lambda b^{(i)}_\zeta/a^{(i)}_\zeta)=F_d(\lambda)$.
\end{proof}

The following lemma is the form in which we use rebalancing: a bound on the colored volume of circuits for $P_{k,1/2}$ yields points of $\calK(1/2)$.

\begin{lemma}\label{lem:colored-transfer}
Let $V\colon[0,1]\to\mathbb{R}_{\ge0}$, and suppose that for every even $k\ge2$ there is a colored $\EXACTZ$ circuit $C_k$ computing $P_{k,1/2}$ such that
\[
\rho_{C_k}(\tau)\le2^{(V(\tau)+o(1))k}\qquad\text{uniformly in }\tau\in[0,1].
\]
Then every pair $(l,r)\in\mathbb{R}^2$ with
\[
\tau l+(1-\tau)r\ge\min\left\{V(\tau),\tau,\frac{1-\tau}{2}\right\}\qquad\text{for all }\tau\in[0,1]
\]
belongs to $\calK(1/2)$.
\end{lemma}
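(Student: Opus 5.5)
The plan is to fix a large even $k$, apply Corollary~\ref{cor:rebalancing} to $M\coloneqq P_{k,1/2}$ with three base circuits, convert the resulting weighted bound into separate left and right degree bounds for $M^{\otimes d}$ by a suitable choice of $\lambda$, and finally pass from $(P_{k,1/2})^{\otimes d}$ to $P_{n,1/2}$ via Lemma~\ref{lem:block-grouping}.

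\textbf{Base circuits.} For $M=P_{k,1/2}$ we take $C_k$ together with the two trivial single-colored circuits.
\begin{itemize}
\item $(I,M)$ has left degree $1$ and right degree $2^{k/2}$, namely the completions of a query. Its volume is $2^{(1-\tau)k/2}$.
\item $(M,I)$ has right degree $1$ and left degree $\binom{k}{k/2}\le2^k$, the number of half-star queries matched by a string. Its volume is at most $2^{\tau k}$.
\end{itemize}
Hence $\rho(\tau)\le2^{(\min\{V(\tau),\tau,(1-\tau)/2\}+o(1))k}$, uniformly in $\tau$. Fix $\varepsilon>0$ and choose $k$ so that this $o(1)$ term is at most $\varepsilon/4$ and $\h(1/k)\le\varepsilon/4$. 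Then Corollary~\ref{cor:rebalancing} gives, for every $d$ and $\lambda$, a circuit $E_{d,\lambda}$ for $M^{\otimes d}$ with
\[
\deg_{\sfL}+\lambda\deg_{\sfR}\le2^{o(d)}\max_\tau2^{kd(\min\{V,\tau,(1-\tau)/2\}+\varepsilon/4)}\lambda^{1-\tau}.
\]

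\textbf{Choice of weight.} Set $\lambda\coloneqq2^{(l-r)kd}$. By hypothesis $\min\{V(\tau),\tau,(1-\tau)/2\}\le\tau l+(1-\tau)r$, and
\[
\tau l+(1-\tau)r+(1-\tau)(l-r)=l,
\]
so every term in the maximum is at most $2^{kd(l+\varepsilon/4)}$. Therefore
\[
\deg_{\sfL}\le2^{o(d)}2^{(l+\varepsilon/4)kd},\qquad\deg_{\sfR}\le\lambda^{-1}2^{o(d)}2^{(l+\varepsilon/4)kd}=2^{o(d)}2^{(r+\varepsilon/4)kd}.
\]
The $o(d)$ term depends only on the fixed base circuits, so it is absorbed for large $d$. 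Nonnegativity of $l$ and $r$, needed for membership in $\calK(1/2)$, follows from the hypothesis at $\tau=1$ and $\tau=0$, because $V\ge0$ forces the minimum there to be $0$.

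\textbf{Passing to general $n$.} For even $n$, put $d=\lceil n/k\rceil$ and apply Lemma~\ref{lem:block-grouping} to $E\coloneqq E_{d,\lambda}$. This costs an extra factor $\binom{kd}{d}\le2^{\h(1/k)kd}\le2^{\varepsilon kd/4}$ on the left degree only. Since $kd=n+O(1)$, for all large $n$ we get a circuit for $P_{n,1/2}$ with degrees at most $2^{(l+\varepsilon)n}$ and $2^{(r+\varepsilon)n}$. Odd $n$ reduce to even $n$ as remarked after the definition of $\calK$, and $\varepsilon$ was arbitrary, so $(l,r)\in\calK(1/2)$.

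The main thing to handle carefully is the order of quantifiers. Fix $\varepsilon$ first, then $k$, and only then let $d\to\infty$. This ensures that the $o(1)$ term in the hypothesis and the $o(d)$ term from Corollary~\ref{cor:rebalancing}, which depends on $k$, are both controlled. The conversion of the weighted bound into two degree bounds through the choice of $\lambda$ is then routine.
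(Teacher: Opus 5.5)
Your proposal is correct and follows the paper's proof essentially step for step. It uses the same three base circuits ($C_k$ and the two trivial single-colored circuits) and the same weight $\lambda=2^{(l-r)kd}$ in Corollary~\ref{cor:rebalancing}, then Lemma~\ref{lem:block-grouping} with $k$ fixed in advance so that $\h(1/k)\le\varepsilon/4$, and you correctly note the delicate point that the $2^{o(d)}$ factor depends only on the fixed base circuits and not on $\lambda$.
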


In the condition on $(l,r)$, the terms $\tau$ and $(1-\tau)/2$ come from the two trivial circuits $(P_{k,1/2},I)$ and $(I,P_{k,1/2})$, with degrees $\bigl(\binom{k}{k/2},1\bigr)$ and $(1,2^{k/2})$; rebalancing lets us use whichever of the three circuits is best at each $\tau$.

\begin{proof}[Proof of Lemma~\ref{lem:colored-transfer}]
The plan is as follows. We apply Corollary~\ref{cor:rebalancing} to the given circuit and the two trivial circuits to obtain circuits for the tensor powers $(P_{k,1/2})^{\otimes d}$, and then pass to $P_{n,1/2}$ by Lemma~\ref{lem:block-grouping}.

The assumptions at $\tau=1$ and $\tau=0$ give $l,r\ge0$. Fix $\varepsilon>0$. By the definition of $\calK(1/2)$, it suffices to construct, for every sufficiently large $n$, a circuit computing $P_{n,1/2}$ with degrees $\le2^{(l+\varepsilon)n}$ and $\le2^{(r+\varepsilon)n}$. By the hypothesis, we can choose and fix an even $k$ with $\h(1/k)\le\varepsilon/4$ such that $C^{(1)}\coloneqq C_k$ satisfies $\rho_{C^{(1)}}(\tau)\le2^{(V(\tau)+\varepsilon/4)k}$ for every $\tau\in[0,1]$.

Add the two trivial circuits $C^{(2)}\coloneqq(P_{k,1/2},I)$ and $C^{(3)}\coloneqq(I,P_{k,1/2})$, each colored with a single color. Their colored volumes are $\rho_{C^{(2)}}(\tau)=\binom{k}{k/2}^\tau\le2^{\tau k}$ and $\rho_{C^{(3)}}(\tau)=2^{(1-\tau)k/2}$. Together with the assumption on $(l,r)$, this gives $\rho(\tau)\le2^{(\tau l+(1-\tau)r+\varepsilon/4)k}$ for the function $\rho$ of Corollary~\ref{cor:rebalancing}.

For every even $n\ge4$, put $d\coloneqq\lceil n/k\rceil$ and $\lambda\coloneqq2^{(l-r)kd}$. Then $\rho(\tau)^d\lambda^{1-\tau}\le2^{(l+\varepsilon/4)kd}$ for every $\tau$, so Corollary~\ref{cor:rebalancing} gives a circuit $E_{d,\lambda}$ computing $(P_{k,1/2})^{\otimes d}$ with $\deg_{\sfL}(E_{d,\lambda})\le2^{(l+\varepsilon/4)kd+o(d)}$ and $\deg_{\sfR}(E_{d,\lambda})\le\lambda^{-1}2^{(l+\varepsilon/4)kd+o(d)}=2^{(r+\varepsilon/4)kd+o(d)}$. Applying Lemma~\ref{lem:block-grouping} to $E_{d,\lambda}$ gives a circuit $C_n$ computing $P_{n,1/2}$ with $\deg_{\sfL}(C_n)\le2^{(l+\varepsilon/4+\h(1/k))kd+o(d)}$ and $\deg_{\sfR}(C_n)\le2^{(r+\varepsilon/4)kd+o(d)}$.

Since $k$ is fixed, $kd-n<k$ and $o(d)=o(n)$. Choose an integer $n_0\ge4$ so that, for every integer $n\ge n_0$ and $d=\lceil n/k\rceil$,
\[
o(d)+(\max\{l,r\}+\varepsilon)(kd-n)\le\frac{\varepsilon n}{4}.
\]
For every even $n\ge n_0$, the two degree upper bounds are then $2^{(l+3\varepsilon/4)n}$ and $2^{(r+\varepsilon/2)n}$. For odd $n$, use the embedding into length $n+1$ from Section~\ref{sec:circuit-prelim}. Because $k$ is even, $\lceil(n+1)/k\rceil=\lceil n/k\rceil$, so the same degree upper bounds in terms of $kd$ and the same threshold condition apply. Since $\varepsilon>0$ was arbitrary, $(l,r)\in\calK(1/2)$.
\end{proof}

\subsection{The circuit construction} \label{sec:integer-circuit}

In this subsection, we construct an $\EXACTZ$ circuit achieving $l + r < 1/2$.

We first construct a circuit for a structured set of columns, and then use the symmetries of $P_{k,1/2}$ and a random covering to handle all queries. Throughout this subsection, $k$ is even and $a,b>0$ are such that $ak$ and $bk$ are integers with $4ak\le k-bk$; in the proof of Theorem~\ref{thm:half-stars-circuit} we reduce to this case by rounding.

All exponents below come from counting with binomial coefficients. Recall the function $\h(u,v)=u\log_2u-v\log_2v-(u-v)\log_2(u-v)$ from Section~\ref{sec:circuit-prelim}. Since $\h(u,v)=u\h(v/u)$, the standard estimate $2^{n\h(m/n)}/(n+1)\le\binom nm\le2^{n\h(m/n)}$ with $n=uk$ and $m=vk$ becomes
\begin{equation}\label{eqn:binom-entropy}
\frac{2^{\h(u,v)k}}{uk+1}\le\binom{uk}{vk}\le2^{\h(u,v)k}\qquad\text{whenever }uk,vk\text{ are integers}.
\end{equation}
In particular, $\binom{uk}{vk}=2^{\h(u,v)k+O(\log k)}$ and $\binom{2m}{m}\ge2^{2m}/(2m+1)$.

For $B\subseteq[k]$, let $\calY_B$ consist of the queries $\bfy$ with $\lvert S(\bfy)\rvert=k/2$ whose specified bits outside $B$ are balanced: $S(\bfy)\setminus B$ contains exactly $ak$ positions $i$ with $y_i=1$ and exactly $ak$ positions $i$ with $y_i=0$.

\begin{lemma}\label{lem:fixed-B-circuit}
Let $B\subseteq[k]$ with $\lvert B\rvert=bk$. Then the restriction of $P_{k,1/2}$ to the columns $\calY_B$ has a colored $\EXACTZ$ circuit $C_B$, whose colors are ordered pairs $(\mu,\theta)$ satisfying $a\le\mu\le2a$, $\mu\le\theta\le2\mu$, and $\mu k,\theta k\in\mathbb{Z}$.

Moreover, for every color $(\mu,\theta)$ and every binary string $\bfx$, writing $C'\coloneqq(C_B)_{(\mu,\theta)}$, we have
\[
\begin{aligned}
\deg_{\sfL}(C')&\le O^*\bigl(2^{(\h(1-b,\mu)-2\mu)k}\bigr),\\
\mathbb{E}_{\bfz}\bigl[\deg_{\sfL}(C',\bfx\oplus\bfz)\bigr]&\le O^*\bigl(2^{(\h(1-b,\mu)+\h(2\mu,\theta)-4\mu)k}\bigr),\\
\deg_{\sfR}(C')&\le2^{(b-1/2+2a+\h(a,\mu-a)+\h(\mu,\theta-\mu))k},
\end{aligned}
\]
where $\bfz\in\{0,1\}^k$ is uniformly random subject to $z_i=0$ for all $i\in B$, and $\oplus$ is the bitwise XOR. Here $O^*(\cdot)$ hides only factors polynomial in $k$, uniformly in $B$ and the color.
\end{lemma}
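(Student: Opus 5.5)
The plan is to build $C_B$ as a product of three independent layers, one per group of coordinates, and then read off the three degree bounds by counting. Inside $B$ we enumerate wildcard completions. Outside $B$ we expand the $\mathtt0$-constraints by inclusion--exclusion, and then realize each resulting monomial through a random covering by sets of twice its size.

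Fix $\bfy\in\calY_B$. Let $A_1,A_0\subseteq[k]\setminus B$ be its positions outside $B$ with $y_i=1$ and $y_i=0$, so $\lvert A_1\rvert=\lvert A_0\rvert=ak$. Inside $B$ it has $k/2-2ak$ specified positions, hence $(b-1/2+2a)k$ wildcards. Then $f_{\bfy}(\bfx)$ factors as
\[
[\bfx|_B\text{ matches }\bfy|_B]\cdot\prod_{i\in A_1}x_i\prod_{i\in A_0}(1-x_i).
\]
The first factor is written as a sum over the $2^{(b-1/2+2a)k}$ completions $\bfp\in\{0,1\}^B$ of $\bfy|_B$ of $[\bfx|_B=\bfp]$. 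The second factor is expanded as
\[
\sum_{T\subseteq A_0}(-1)^{\lvert T\rvert}\bfx^{A_1\cup T}.
\]
We set $U\coloneqq A_1\cup T$ and $\mu\coloneqq\lvert U\rvert/k\in[a,2a]$; for fixed $\mu$ there are $\binom{ak}{(\mu-a)k}\le2^{\h(a,\mu-a)k}$ choices of $T$.

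Next we handle the monomial $\bfx^U=[U\subseteq\operatorname{supp}\bfx]$. For each admissible $\mu$, the standard probabilistic covering argument gives a family $\mathscr W_\mu$ of $2\mu k$-subsets of $[k]\setminus B$ covering every $\mu k$-subset. This needs $2\mu k\le4ak\le(1-b)k$, which is exactly our standing assumption. Its size is
\[
\lvert\mathscr W_\mu\rvert\le O^*\!\left(\binom{(1-b)k}{\mu k}\Big/\binom{2\mu k}{\mu k}\right)\le O^*\bigl(2^{(\h(1-b,\mu)-2\mu)k}\bigr),
\]
by \eqref{eqn:binom-entropy}; the logarithmic overhead of the covering argument is polynomial in $k$. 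Since $k$ is fixed, existence suffices; explicitness is provided afterwards by Corollary~\ref{cor:enumeration}. Assign to each $U$ one $W(U)\in\mathscr W_\mu$ containing it. Then
\[
[U\subseteq\operatorname{supp}\bfx]=\sum_{U\subseteq V\subseteq W(U)}[\operatorname{supp}\bfx\cap W(U)=V].
\]
This motivates the middle gates: triples $(\bfp,W,V)$ with $\bfp\in\{0,1\}^B$, $W\in\mathscr W_\mu$ and $V\subseteq W$, colored by $(\mu,\theta)$ with $\theta\coloneqq\lvert V\rvert/k$. Each gate represents the rank-one indicator $[\bfx|_B=\bfp]\cdot[\operatorname{supp}\bfx\cap W=V]$. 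A string $\bfx$ links with coefficient $1$ to $(\bfx|_B,W,\operatorname{supp}\bfx\cap W)$ for every $W$ in every $\mathscr W_\mu$. A query $\bfy$ links to $(\bfp,W(A_1\cup T),V)$ with coefficient $(-1)^{\lvert T\rvert}$, ranging over completions $\bfp$, sets $T\subseteq A_0$, and sets $V$ between $A_1\cup T$ and $W(A_1\cup T)$.

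A query may reach the same gate from different $T$. In that case we merge the edges and sum their coefficients; this only lowers degrees. Exactness over $\mathbb Z$ then follows from the two displayed identities. The constraints $\mu\le\theta\le2\mu$ and $\mu k,\theta k\in\mathbb Z$ hold by construction.

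For the degrees, a string $\bfx$ has exactly one edge in color $(\mu,\theta)$ for each $W\in\mathscr W_\mu$ with $\lvert\operatorname{supp}\bfx\cap W\rvert=\theta k$. This gives the deterministic bound $\lvert\mathscr W_\mu\rvert$. For the expected bound, note that $\bfz$ vanishes on $B$ and is uniform outside it, so $(\bfx\oplus\bfz)|_W$ is uniform on $\{0,1\}^W$. Hence each $W$ contributes with probability $\binom{2\mu k}{\theta k}2^{-2\mu k}$, and multiplying by $\lvert\mathscr W_\mu\rvert$ gives the extra factor $2^{(\h(2\mu,\theta)-2\mu)k}$. On the right, the edges in color $(\mu,\theta)$ number at most
\[
2^{(b-1/2+2a)k}\binom{ak}{(\mu-a)k}\binom{\mu k}{(\theta-\mu)k},
\]
one factor for each of $\bfp$, $T$ and $V$. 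This is the stated right bound with no polynomial loss.

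The main obstacle is the design step: choosing the intermediate sets $W$ of size exactly $2\mu k$ and indexing gates by the exact trace $V$. This is what makes the left degree save a factor $2^{2\mu k}$ over listing all $U\subseteq\operatorname{supp}\bfx$, while the right side pays only $\binom{\mu k}{(\theta-\mu)k}$. Once that choice is fixed, the rest is bookkeeping with \eqref{eqn:binom-entropy}.
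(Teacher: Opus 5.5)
Your proof is correct and follows the paper's construction. Your gates $(\mathbf{p},W,V)$ are the paper's gates $(Q,\boldsymbol{\sigma})$, with $\boldsymbol{\sigma}|_B=\mathbf{p}$ and $\boldsymbol{\sigma}|_Q$ the indicator of $V$, and the left edges, inclusion--exclusion right coefficients, coloring and three counting arguments all match. One slip: $\mu\le\theta$ does not hold by construction, since every $\bfx$ also links to gates with $\lvert V\rvert<\mu k$; these have no outgoing edges and must be deleted, as the paper does, before you read off the colors.
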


\begin{proof}[Proof of Lemma~\ref{lem:fixed-B-circuit}]
The circuit treats the positions inside and outside $B$ differently. Inside $B$, it simply enumerates the assignments of $\bfx$. Outside $B$, it expands the constraints by inclusion--exclusion and evaluates each resulting monomial using a covering family.

For $\bfy\in\calY_B$ and $i\in\{0,1\}$, let $A_i(\bfy)\coloneqq\{j\in S(\bfy)\setminus B:y_j=i\}$, so that $\lvert A_0(\bfy)\rvert=\lvert A_1(\bfy)\rvert=ak$. Expanding $\prod_{j\in A_0(\bfy)}(1-x_j)$ gives
\[
f_{\bfy}(\bfx)=f_{\bfy|_B}(\bfx|_B)\sum_{A_1(\bfy)\subseteq U\subseteq A_1(\bfy)\cup A_0(\bfy)}(-1)^{\lvert U\rvert-ak}\prod_{i\in U}x_i.
\]
Every set $U$ in this sum satisfies $U\subseteq\overline B$ and $ak\le\lvert U\rvert\le2ak$. For every integer $\mu k$ in this range, $2\mu k\le4ak\le k-bk$, we let $\mathscr{Q}_{\overline B,\mu}$ be a family of $2\mu k$-subsets of $\overline B$ that covers all $\mu k$-subsets of $\overline B$. By a classical probabilistic argument, such a family exists with
\[
\lvert\mathscr{Q}_{\overline B,\mu}\rvert\le k\frac{\binom{k-bk}{\mu k}}{\binom{2\mu k}{\mu k}}.
\]
Indeed, a uniformly random $2\mu k$-subset of $\overline B$ contains a fixed $\mu k$-subset with probability $\binom{2\mu k}{\mu k}/\binom{k-bk}{\mu k}$, so with this many independent random sets, each of the at most $2^k$ subsets is missed with probability at most $e^{-k}$, and a union bound applies. By~\eqref{eqn:binom-entropy}, $\binom{k-bk}{\mu k}\le2^{\h(1-b,\mu)k}$, and $\binom{2\mu k}{\mu k}\ge2^{2\mu k}/(2\mu k+1)$, so
\[
\lvert\mathscr{Q}_{\overline B,\mu}\rvert\le k(2\mu k+1)\,2^{(\h(1-b,\mu)-2\mu)k}.
\]
Assign each $\mu k$-subset $U\subseteq\overline B$ to exactly one containing set $Q(U)\in\mathscr{Q}_{\overline B,\mu}$.

The middle gates have names $(Q,\boldsymbol{\sigma})$, where $Q\in\mathscr{Q}_{\overline B,\mu}$ for some $\mu$ and $\boldsymbol{\sigma}\in\{0,1\}^{B\cup Q}$ is an assignment to the positions in $B\cup Q$. Set
\[
\begin{aligned}
(\bfL_{C_B})_{\bfx,(Q,\boldsymbol{\sigma})}&\coloneqq\bfone[\bfx|_{B\cup Q}=\boldsymbol{\sigma}],\\
(\bfR_{C_B})_{(Q,\boldsymbol{\sigma}),\bfy}&\coloneqq f_{\bfy|_B}(\boldsymbol{\sigma}|_B)\sum_{\substack{Q(U)=Q\\A_1(\bfy)\subseteq U\subseteq A_1(\bfy)\cup A_0(\bfy)}}(-1)^{\lvert U\rvert-ak}\prod_{i\in U}\sigma_i.
\end{aligned}
\]
For each $Q$, the binary string $\bfx$ is adjacent to exactly one middle gate $(Q,\bfx|_{B\cup Q})$. Since each $U$ is assigned to exactly one $Q$, the expansion above gives $(\bfL_{C_B}\bfR_{C_B})_{\bfx,\bfy}=f_{\bfy}(\bfx)$ over the integers for every binary string $\bfx$ and every $\bfy\in\calY_B$.

Give the middle gate $(Q,\boldsymbol{\sigma})$ the color $(\mu,\theta)$, where $\mu\coloneqq\lvert Q\rvert/(2k)$ and $\theta\coloneqq\#1(\boldsymbol{\sigma}|_Q)/k$, and delete middle gates with no incoming or no outgoing edge. A nonzero right coefficient requires some $U\subseteq Q$ with $\lvert U\rvert=\mu k$ and $\boldsymbol{\sigma}|_U\equiv1$, so the remaining colors satisfy $\mu\le\theta\le2\mu$.

For the left degree, the middle gates of color $(\mu,\theta)$ adjacent to $\bfx$ are among the gates $(Q,\bfx|_{B\cup Q})$ with $Q\in\mathscr{Q}_{\overline B,\mu}$ and $\#1(\bfx|_Q)=\theta k$. Hence $\deg_{\sfL}(C',\bfx)\le\lvert\mathscr{Q}_{\overline B,\mu}\rvert$, which gives the first bound. For the second, every $Q\in\mathscr{Q}_{\overline B,\mu}$ lies in $\overline B$, so $(\bfx\oplus\bfz)|_Q$ is uniform on $\{0,1\}^Q$, and
\[
\mathbb{E}_{\bfz}\bigl[\deg_{\sfL}(C',\bfx\oplus\bfz)\bigr]\le\lvert\mathscr{Q}_{\overline B,\mu}\rvert\,2^{-2\mu k}\binom{2\mu k}{\theta k}\le\lvert\mathscr{Q}_{\overline B,\mu}\rvert\,2^{(\h(2\mu,\theta)-2\mu)k},
\]
using $\binom{2\mu k}{\theta k}\le2^{\h(2\mu,\theta)k}$ from~\eqref{eqn:binom-entropy}. Together with the bound on $\lvert\mathscr{Q}_{\overline B,\mu}\rvert$, the exponent is $\h(1-b,\mu)-2\mu+\h(2\mu,\theta)-2\mu$, which gives the second bound.

For the right degree, fix $\bfy\in\calY_B$. A nonzero right coefficient requires $\boldsymbol{\sigma}|_B$ to agree with $\bfy$ on $S(\bfy)\cap B$, which has $k/2-2ak$ positions; this leaves $2^{bk-k/2+2ak}$ assignments on $B$. It also requires some $U$ with $Q(U)=Q$ and $\boldsymbol{\sigma}|_U\equiv1$. There are $\binom{ak}{\mu k-ak}$ choices of $U$ of size $\mu k$, each of which determines $Q$ and $\boldsymbol{\sigma}|_U$. Then $\binom{\mu k}{\theta k-\mu k}$ assignments on $Q\setminus U$ give the color $(\mu,\theta)$. Counting a middle gate more than once gives the upper bound
\[
2^{bk-k/2+2ak}\binom{ak}{\mu k-ak}\binom{\mu k}{\theta k-\mu k}\le2^{(b-1/2+2a+\h(a,\mu-a)+\h(\mu,\theta-\mu))k},
\]
where the three factors contribute the exponents $b-1/2+2a$, $\h(a,\mu-a)$, and $\h(\mu,\theta-\mu)$, respectively, by~\eqref{eqn:binom-entropy}.
\end{proof}

We now assemble the circuits of Lemma~\ref{lem:fixed-B-circuit} into a single colored circuit for the whole matrix $P_{k,1/2}$: flipping bits reaches the unbalanced queries, and a random family of pairs $(B,\boldsymbol{\varphi})$ covers all of them. The following theorem bounds the two degrees of every color class of the resulting circuit.

\begin{theorem}\label{thm:half-stars-circuit}
Let $a,b\in\mathbb{R}$ satisfy $a>0$ and $4a\le1-b\le1/2$, and put
\[
\kappa\coloneqq\h(b)-\h(1/2,2a)-\h(1/2,1-b-2a).
\]
Assume that $\kappa>4a$. For $a\le\mu\le2a$ and $\mu\le\theta\le2\mu$, define
\[
\begin{aligned}
V_{\sfL}(\mu,\theta)&\coloneqq\kappa+\h(1-b,\mu)+\h(2\mu,\theta)-4\mu,\\
V_{\sfR}(\mu,\theta)&\coloneqq b-\tfrac12+2a+\h(a,\mu-a)+\h(\mu,\theta-\mu).
\end{aligned}
\]
Then for every sufficiently large even $k$, there is a colored $\EXACTZ$ circuit $C_k$ computing $P_{k,1/2}$, with at most $(k+1)^2$ colors, each of which is a pair $(\mu,\theta)$ with $a\le\mu\le2a$ and $\mu\le\theta\le2\mu$.

Moreover, for every color $(\mu,\theta)$, writing $C'\coloneqq(C_k)_{(\mu,\theta)}$, we have
\[
\deg_{\sfL}(C')\le2^{(V_{\sfL}(\mu,\theta)+o(1))k},\qquad\deg_{\sfR}(C')\le2^{(V_{\sfR}(\mu,\theta)+o(1))k},
\]
where the $o(1)$ terms are uniform over all colors.
\end{theorem}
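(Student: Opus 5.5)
We build $C_k$ as a sum of translated copies of the circuits $C_B$ from Lemma~\ref{lem:fixed-B-circuit}, one copy per pair $(B,\boldsymbol{\varphi})$ in a covering family. First round $a,b$: choose $ak,bk$ integers close to $ak,bk$ preserving $4ak\le k-bk$. By continuity of $\h$, this changes $\kappa,V_{\sfL},V_{\sfR}$ only by $o(1)$. The color set is then the set of pairs $(\mu,\theta)$ with $\mu k,\theta k\in\{0,\ldots,k\}$, which has at most $(k+1)^2$ elements.

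\textbf{Symmetry step.} For $\boldsymbol{\varphi}\in\{0,1\}^k$, flipping bits gives $f_{\bfy}(\bfx)=f_{\bfy\oplus\boldsymbol{\varphi}}(\bfx\oplus\boldsymbol{\varphi})$, where $\oplus$ leaves wildcards fixed. So a circuit for the columns $\calY_B$ transported by $\boldsymbol{\varphi}$ computes the columns $\calY_B\oplus\boldsymbol{\varphi}$. Input $\bfx$ uses the edges of $\bfx\oplus\boldsymbol{\varphi}$, and the right degree is unchanged. Flips inside $B$ are irrelevant to membership in $\calY_B$, so take $\boldsymbol{\varphi}$ supported on $\overline B$.

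\textbf{Covering step.} A query $\bfy$ lies in $\calY_B\oplus\boldsymbol{\varphi}$ exactly when two things hold: $S(\bfy)\setminus B$ has size $2ak$, equivalently $\overline B$ contains $2ak$ specified positions and $k-bk-2ak$ wildcards; and $\boldsymbol{\varphi}$ flips the right number of bits there to balance them. Take a uniformly random $bk$-set $B$ and an independent uniform $\boldsymbol{\varphi}$ on $\overline B$. The probability that $\bfy$ is covered is $\binom{k/2}{2ak}\binom{k/2}{k-bk-2ak}/\binom{k}{bk}$ times the probability $2^{-2ak}\binom{2ak}{ak}$ of balancing. This equals $2^{-(\kappa+2a)k+O(\log k)}$ by~\eqref{eqn:binom-entropy}. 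Hence $N\coloneqq k^2\,2^{(\kappa+2a)k}$ random pairs cover all $\le3^k$ queries with positive probability. Each query is assigned to exactly one covering pair, keeping only its right edges in that copy. The resulting circuit computes $P_{k,1/2}$ exactly over $\mathbb{Z}$, as in Lemma~\ref{lem:block-grouping}: every left input connects to all copies, but only the selected copy contributes to each column. Middle gates keep their colors $(\mu,\theta)$. The right degree of each color class is then at most that of $C_B$, which Lemma~\ref{lem:fixed-B-circuit} bounds by $2^{V_{\sfR}k}$.

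\textbf{Left degree (main obstacle).} For a fixed color, $\deg_{\sfL}(C',\bfx)=\sum_{(B,\boldsymbol{\varphi})}\deg_{\sfL}((C_B)_{(\mu,\theta)},\bfx\oplus\boldsymbol{\varphi})$. Since $\boldsymbol{\varphi}$ is uniform on $\overline B$, the expectation bound of Lemma~\ref{lem:fixed-B-circuit} gives mean $N\cdot O^*(2^{(\h(1-b,\mu)+\h(2\mu,\theta)-4\mu)k})$. This equals $2^{(V_{\sfL}+2a+o(1))k}$, which overshoots $V_{\sfL}$ by $2a$. I therefore expect the intended covering to be more economical. Given $B$, choose the flips so that each $\bfy$ with the right $|S(\bfy)\setminus B|$ is balanced by at least one $\boldsymbol{\varphi}$ in a family of size about $2^{2ak}/\binom{2ak}{ak}\cdot\mathrm{poly}$, rather than sampling pairs independently. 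A random flip family of that size per $B$ covers all patterns. The pair count then drops to $2^{(\kappa+o(1))k}$, since the $2a$ term is cancelled by $\binom{2ak}{ak}\approx2^{2ak}$. The mean left degree becomes $2^{(V_{\sfL}+o(1))k}$.

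To upgrade this mean to a worst-case bound for every $\bfx$ and every color, apply a Chernoff bound. Each summand is at most the worst-case bound $O^*(2^{(\h(1-b,\mu)-2\mu)k})$, and the hypothesis $\kappa>4a\ge2\mu$ makes the number of summands exponentially larger than the ratio between this worst-case bound and the mean per summand. So the sum concentrates within a factor $2$ of its mean, with failure probability $2^{-\Omega(k)}$ large enough to beat a union bound over the $2^k$ strings $\bfx$ and $(k+1)^2$ colors. Intersecting this event with the covering event yields the desired $C_k$.
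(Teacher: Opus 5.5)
Your construction is the paper's: transport $C_B$ by flips $\boldsymbol{\varphi}$ supported on $\overline B$, cover all queries by random pairs $(B,\boldsymbol{\varphi})$, assign each query one pair, inherit right degrees from Lemma~\ref{lem:fixed-B-circuit}, and bound left degrees by the expectation estimate plus a Chernoff and union bound using $\kappa>4a\ge2\mu$. However, your covering probability is miscomputed, and this makes you reject the direct argument for no reason. The balancing factor satisfies $1/(2ak+1)\le2^{-2ak}\binom{2ak}{ak}\le1$, so it is $2^{O(\log k)}$. By the definition of $\kappa$, the hypergeometric factor alone is $2^{-\kappa k+O(\log k)}$. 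Hence $p_0=2^{-\kappa k+O(\log k)}$, not $2^{-(\kappa+2a)k+O(\log k)}$. With $\lceil k^2/p_0\rceil=2^{(\kappa+o(1))k}$ independent samples, which is exactly what the paper uses, the mean left degree is already $2^{(V_{\sfL}+o(1))k}$; there is no overshoot by $2a$. Your two-level ``economical'' covering is valid but not genuinely different. Its per-$B$ flip family has size $2^{2ak}\binom{2ak}{ak}^{-1}\mathrm{poly}(k)=\mathrm{poly}(k)$, so it reproduces the same number of pairs up to polynomial factors. The cancellation $\binom{2ak}{ak}\approx2^{2ak}$ you invoke there is precisely what was missing from your first computation.

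Two smaller points. First, in the Chernoff step, work with an upper bound $M$ on the mean of $\sum_jD_j/N_\mu$ and use $\Pr[\sum_jD_j/N_\mu>2M]\le e^{-M/3}$. A failure probability of merely $2^{-\Omega(k)}$ would not by itself survive the union bound over $2^k$ strings. Here, though, $M\ge p_0^{-1}2^{-2\mu k}$ is exponentially large because $\kappa>4a\ge2\mu$, so this is fine. Alternatively, as the paper does, add $k^2$ to $M$ and absorb the extra $2k^2N_\mu$ term using $1/p_0\ge2^{4ak}\ge2^{2\mu k}$. Second, after rounding, the colors satisfy $\lfloor ak\rfloor\le\mu k\le2\lfloor ak\rfloor$ rather than $a\le\mu\le2a$. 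To land in the domain the theorem requires, rescale the color names by $ak/\lfloor ak\rfloor=1+O(1/k)$, which uniform continuity of $\h$ absorbs into the $o(1)$.
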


\begin{proof}[Proof of Theorem~\ref{thm:half-stars-circuit}]
We construct the required colored circuits for every sufficiently large even $k$. To simplify notation, we write $ak$ and $bk$ for $\lfloor ak\rfloor$ and $\lfloor bk\rfloor$ throughout the proof; that is, we argue as if $ak$ and $bk$ were integers. The effect of this rounding is handled in the last paragraph.

\paragraph{Flipping symmetry.}
Lemma~\ref{lem:fixed-B-circuit} handles only the queries whose specified bits outside $B$ are balanced: $ak$ ones and $ak$ zeros. To reach the other queries, we use the symmetry of $P_1$ that swaps the roles of $0$ and $1$. For $\boldsymbol{\varphi}\in\{0,1\}^k$, write $\bfx\oplus\boldsymbol{\varphi}$ for the bitwise XOR, and let $\bfy\oplus\boldsymbol{\varphi}$ be the query obtained from $\bfy$ by replacing $y_i$ with $y_i\oplus\varphi_i$ for every $i\in S(\bfy)$ and keeping the wildcards. Then $f_{\bfy\oplus\boldsymbol{\varphi}}(\bfx\oplus\boldsymbol{\varphi})=f_{\bfy}(\bfx)$, and $\lvert S(\bfy\oplus\boldsymbol{\varphi})\rvert=\lvert S(\bfy)\rvert$.

For a $bk$-subset $B$ and a vector $\boldsymbol{\varphi}\in\{0,1\}^{\overline B}$, extended by zeros on $B$, let $\calY_{B,\boldsymbol{\varphi}}\coloneqq\{\bfy:\bfy\oplus\boldsymbol{\varphi}\in\calY_B\}$. Let $C_B$ be the circuit of Lemma~\ref{lem:fixed-B-circuit}, and define $C_{B,\boldsymbol{\varphi}}$ by relabeling its input and output gates:
\[
(\bfL_{C_{B,\boldsymbol{\varphi}}})_{\bfx,e}\coloneqq(\bfL_{C_B})_{\bfx\oplus\boldsymbol{\varphi},e},\qquad(\bfR_{C_{B,\boldsymbol{\varphi}}})_{e,\bfy}\coloneqq(\bfR_{C_B})_{e,\bfy\oplus\boldsymbol{\varphi}}\qquad(\bfy\in\calY_{B,\boldsymbol{\varphi}}),
\]
with the same coloring. By the symmetry, $C_{B,\boldsymbol{\varphi}}$ is an $\EXACTZ$ circuit computing the restriction of $P_{k,1/2}$ to the columns $\calY_{B,\boldsymbol{\varphi}}$. It has the same right degrees as $C_B$, and the left degree of $\bfx$ in $C_{B,\boldsymbol{\varphi}}$ equals the left degree of $\bfx\oplus\boldsymbol{\varphi}$ in $C_B$, color by color.

\paragraph{Covering all queries.}
We choose a family $\mathscr{B}$ of pairs $(B,\boldsymbol{\varphi})$ such that the sets $\calY_{B,\boldsymbol{\varphi}}$ cover all queries. Choose $B$ uniformly among the $bk$-subsets of $[k]$, and then choose the bits of $\boldsymbol{\varphi}\in\{0,1\}^{\overline B}$ independently and fairly. A fixed query $\bfy$ belongs to $\calY_{B,\boldsymbol{\varphi}}$ if and only if $\overline B$ contains exactly $2ak$ specified positions of $\bfy$ and the flipped bits $y_i\oplus\varphi_i$ on these positions contain exactly $ak$ ones. This happens with probability
\[
p_0\coloneqq\frac{\binom{k/2}{2ak}\binom{k/2}{k-bk-2ak}}{\binom{k}{bk}}
\frac{\binom{2ak}{ak}}{2^{2ak}},
\]
which is the same for all queries. Here the first factor is the hypergeometric probability that exactly $2ak$ of the $k/2$ specified positions of $\bfy$ fall into the $(k-bk)$-set $\overline B$, and the second is the probability that $2ak$ fair bits contain exactly $ak$ ones. We estimate each binomial coefficient by~\eqref{eqn:binom-entropy}:
\[
\binom{k/2}{2ak}=2^{\h(1/2,2a)k+O(\log k)},\qquad\binom{k/2}{k-bk-2ak}=2^{\h(1/2,1-b-2a)k+O(\log k)},
\]
\[
\binom{k}{bk}=2^{\h(b)k+O(\log k)},\qquad\frac{\binom{2ak}{ak}}{2^{2ak}}=2^{O(\log k)}.
\]
Hence, with $\kappa=\h(b)-\h(1/2,2a)-\h(1/2,1-b-2a)$ as in the theorem,
\begin{equation}\label{eqn:p0-estimate}
p_0=2^{-\kappa k+O(\log k)}.
\end{equation}
Since $\kappa>4a$, this gives $p_0\le2^{-4ak}$ for all sufficiently large $k$; this comparison will be used for the left degrees.

Take $\lceil k^2/p_0\rceil$ independent samples. A union bound over the at most $3^k$ queries gives probability $\le3^ke^{-k^2}$ that some query is not covered. For any samples covering all queries, remove repeated pairs to obtain $\mathscr{B}$, and assign each query $\bfy$ exactly one pair $(B(\bfy),\boldsymbol{\varphi}(\bfy))\in\mathscr{B}$ with $\bfy\in\calY_{B(\bfy),\boldsymbol{\varphi}(\bfy)}$. Below we will choose samples for which the left-degree upper bounds also hold.

\paragraph{The circuit.}
For each $(B,\boldsymbol{\varphi})\in\mathscr{B}$, keep in $C_{B,\boldsymbol{\varphi}}$ only the output gates of the queries assigned to $(B,\boldsymbol{\varphi})$. Let $C$ be the sum of these circuits: it shares the input and output gates and combines their disjoint middle gates, whose names are $(B,\boldsymbol{\varphi},Q,\boldsymbol{\sigma})$; the coloring is inherited. Every query $\bfy$ receives edges only from $C_{B(\bfy),\boldsymbol{\varphi}(\bfy)}$, so $(\bfL_C\bfR_C)_{\bfx,\bfy}=f_{\bfy}(\bfx)$ over the integers, and $C$ computes $P_{k,1/2}$. Delete middle gates with no incoming or no outgoing edge. For every color $(\mu,\theta)$, the right degree of each query is bounded as in Lemma~\ref{lem:fixed-B-circuit}:
\[
\deg_{\sfR}(C_{(\mu,\theta)})\le2^{(b-1/2+2a+\h(a,\mu-a)+\h(\mu,\theta-\mu))k}.
\]

\paragraph{The left degree.}
Fix a binary string $\bfx$ and a color $(\mu,\theta)$. Before removing repeated pairs and restricting the output gates, the left degree of $\bfx$ in $C_{(\mu,\theta)}$ is at most $\sum_jD_j$, where for the $j$-th sample $(B_j,\boldsymbol{\varphi}_j)$ we put $D_j\coloneqq\deg_{\sfL}\bigl((C_{B_j})_{(\mu,\theta)},\bfx\oplus\boldsymbol{\varphi}_j\bigr)$. Let $N_\mu=O^*\bigl(2^{(\h(1-b,\mu)-2\mu)k}\bigr)$ and $q_{\mu,\theta}=O^*\bigl(2^{(\h(1-b,\mu)+\h(2\mu,\theta)-4\mu)k}\bigr)$ denote the first two bounds of Lemma~\ref{lem:fixed-B-circuit}; they do not depend on $B$. Conditional on $B_j$, the vector $\boldsymbol{\varphi}_j$ is distributed as $\bfz$ in the lemma. Hence the variables $D_j/N_\mu$ are independent, lie in $[0,1]$, and have expectation at most $q_{\mu,\theta}/N_\mu$.

Put $M\coloneqq\lceil k^2/p_0\rceil q_{\mu,\theta}/N_\mu+k^2$, which is at least the expectation of $\sum_jD_j/N_\mu$. The multiplicative Chernoff bound gives probability $\le e^{-M/3}\le e^{-k^2/3}$ that this sum exceeds $2M$. After a union bound over all $2^k$ binary strings and $\le(k+1)^2$ colors, this failure probability still tends to $0$, as does the probability of an uncovered query. For all sufficiently large $k$, the combined failure probability is $<1$, so there are samples for which every query is covered and all the upper bounds
\[
\deg_{\sfL}(C_{(\mu,\theta)})\le2N_\mu M=2\left\lceil\frac{k^2}{p_0}\right\rceil q_{\mu,\theta}+2k^2N_\mu\le\frac{1}{p_0}\,O^*\bigl(2^{(\h(1-b,\mu)+\h(2\mu,\theta)-4\mu)k}\bigr)
\]
hold simultaneously; for the last inequality, the second term $2k^2N_\mu=O^*\bigl(2^{(\h(1-b,\mu)-2\mu)k}\bigr)$ is absorbed into the first: since $1/p_0\ge2^{4ak}\ge2^{2\mu k}$ and $\h(2\mu,\theta)\ge0$, we have $2^{(\h(1-b,\mu)-2\mu)k}\le\frac{1}{p_0}2^{(\h(1-b,\mu)+\h(2\mu,\theta)-4\mu)k}$. Removing repeated pairs and middle gates only decreases these counts. Fix such a family $\mathscr{B}$ and the query assignments.

\paragraph{Degree bounds.}
By~\eqref{eqn:p0-estimate}, $1/p_0=2^{\kappa k+O(\log k)}$. Substituting this into the left-degree bound and comparing with the definitions of $V_{\sfL}$ and $V_{\sfR}$ gives
\begin{align*}
\deg_{\sfL}(C_{(\mu,\theta)})&\le2^{(\kappa+\h(1-b,\mu)+\h(2\mu,\theta)-4\mu)k+O(\log k)}=2^{V_{\sfL}(\mu,\theta)k+O(\log k)},\\
\deg_{\sfR}(C_{(\mu,\theta)})&\le2^{V_{\sfR}(\mu,\theta)k},
\end{align*}
with $V_{\sfL},V_{\sfR}$ evaluated at the rounded parameters $a_k\coloneqq\lfloor ak\rfloor/k$ and $b_k\coloneqq\lfloor bk\rfloor/k$ in place of $a$ and $b$. The colors $(\mu,\theta)$ satisfy $\lfloor ak\rfloor\le\mu k\le2\lfloor ak\rfloor$ and $\mu k\le\theta k\le2\mu k$ with $\mu k,\theta k$ integers, so there are at most $(k+1)^2$ of them. To place them in the domain of the theorem, rename each color $(\mu,\theta)$ to $(\mu,\theta)\cdot a/a_k$, a rescaling by $1+O(1/k)$ that preserves $\mu\le\theta\le2\mu$. Uniform continuity of the generalized entropy absorbs the floors and this rescaling into an $o(1)$ error in the exponents, uniform over colors. Hence $C_k\coloneqq C$ is the required circuit.
\end{proof}

Summing the degree bounds of Theorem~\ref{thm:half-stars-circuit} over the colors bounds the colored volume of $C_k$. Feeding this into the rebalancing Lemma~\ref{lem:colored-transfer}, we obtain points of $\calK(1/2)$, which is the form used in Section~\ref{sec:together}.

\begin{corollary}\label{cor:half-stars-K}
Let $a,b$ and $V_{\sfL},V_{\sfR}$ be as in Theorem~\ref{thm:half-stars-circuit}, and define
\[
V(\tau)\coloneqq\max_{\substack{a\le\mu\le2a\\\mu\le\theta\le2\mu}}\bigl\{\tau V_{\sfL}(\mu,\theta)+(1-\tau)V_{\sfR}(\mu,\theta)\bigr\}\qquad(\tau\in[0,1]).
\]
Then $V$ is nonnegative on $[0,1]$, and every pair $(l,r)\in\mathbb{R}^2$ with
\[
\tau l+(1-\tau)r\ge\min\left\{V(\tau),\tau,\frac{1-\tau}{2}\right\}\qquad\text{for all }\tau\in[0,1]
\]
belongs to $\calK(1/2)$.
\end{corollary}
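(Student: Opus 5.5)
The plan is to apply Lemma~\ref{lem:colored-transfer} with the function $V$ of the statement, using the circuits $C_k$ from Theorem~\ref{thm:half-stars-circuit}. Once its hypothesis is verified, the conclusion about $(l,r)$ is exactly the condition of Lemma~\ref{lem:colored-transfer}, so no further work is needed. Two things must be checked: the colored-volume bound $\rho_{C_k}(\tau)\le2^{(V(\tau)+o(1))k}$ uniformly in $\tau$, and the nonnegativity of $V$ on $[0,1]$.

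For the volume bound, fix a sufficiently large even $k$ and the circuit $C_k$ of Theorem~\ref{thm:half-stars-circuit}. It has at most $(k+1)^2$ colors, each a pair $(\mu,\theta)$ in the region $a\le\mu\le2a$, $\mu\le\theta\le2\mu$. For each color $C'=(C_k)_{(\mu,\theta)}$ we have
\[
\deg_{\sfL}(C')^\tau\deg_{\sfR}(C')^{1-\tau}\le2^{(\tau V_{\sfL}(\mu,\theta)+(1-\tau)V_{\sfR}(\mu,\theta)+o(1))k}\le2^{(V(\tau)+o(1))k}.
\]
The $o(1)$ term is uniform in the color, and since it enters as $\tau\cdot o(1)+(1-\tau)\cdot o(1)$, it is also uniform in $\tau$. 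Summing over the colors multiplies the bound by at most $(k+1)^2=2^{o(k)}$, which gives $\rho_{C_k}(\tau)\le2^{(V(\tau)+o(1))k}$ uniformly in $\tau\in[0,1]$.

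The hypothesis of Lemma~\ref{lem:colored-transfer} asks for such a circuit for every even $k\ge2$, whereas Theorem~\ref{thm:half-stars-circuit} only provides one for sufficiently large even $k$. For the finitely many small $k$, we can take the trivial circuit $(I,P_{k,1/2})$ with a single color; its volume is $2^{O(k)}$, which is absorbed into the $o(1)$ term because there are only finitely many such $k$. Alternatively, one can observe that the proof of Lemma~\ref{lem:colored-transfer} only ever uses a single large $k$. The maximum defining $V$ is taken over a compact region and the functions involved are continuous, so $V$ is a well-defined real-valued function.

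It remains to show $V\ge0$, which the statement of Lemma~\ref{lem:colored-transfer} requires. This is the only step that needs a small argument; the rest is bookkeeping. The simplest route is a lower-bound argument. For every color, $\deg_{\sfL},\deg_{\sfR}\ge1$ after deleting empty middle gates, and the circuit has at least one color because $P_{k,1/2}\ne0$. Hence $\rho_{C_k}(\tau)\ge1$ for every $\tau$. Now suppose $V(\tau)<0$ for some $\tau$. The upper bound above would then give $\rho_{C_k}(\tau)\le2^{(V(\tau)+o(1))k}<1$ for large $k$, a contradiction. Therefore $V(\tau)\ge0$ on $[0,1]$, and Lemma~\ref{lem:colored-transfer} yields the stated membership in $\calK(1/2)$.
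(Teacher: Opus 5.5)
Your proposal is correct and follows the paper's proof essentially step for step. You sum the per-color bounds of Theorem~\ref{thm:half-stars-circuit} over the at most $(k+1)^2$ colors, fill in the finitely many small $k$ with an arbitrary circuit, deduce $V\ge0$ from $\rho_{C_k}(\tau)\ge1$ (one color containing a middle gate with both an incoming and an outgoing edge suffices), and then invoke Lemma~\ref{lem:colored-transfer}.
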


\begin{proof}[Proof of Corollary~\ref{cor:half-stars-K}]
Let $C_k$ be the circuits of Theorem~\ref{thm:half-stars-circuit}. Summing over its at most $(k+1)^2$ colors,
\begin{align*}
\rho_{C_k}(\tau)&=\sum_{(\mu,\theta)}\deg_{\sfL}\bigl((C_k)_{(\mu,\theta)}\bigr)^\tau\deg_{\sfR}\bigl((C_k)_{(\mu,\theta)}\bigr)^{1-\tau}\\
&\le(k+1)^2\max_{(\mu,\theta)}2^{(\tau V_{\sfL}(\mu,\theta)+(1-\tau)V_{\sfR}(\mu,\theta)+o(1))k}\le2^{(V(\tau)+o(1))k}
\end{align*}
uniformly in $\tau\in[0,1]$. Since $P_{k,1/2}$ is nonzero, some middle gate of $C_k$ has both an incoming and an outgoing edge, so $\rho_{C_k}(\tau)\ge1$; letting $k\to\infty$ gives $V(\tau)\ge0$. For the finitely many remaining even $k$, any colored circuit computing $P_{k,1/2}$ will do, since finitely many values of $k$ do not affect the $o(1)$ term. The hypothesis of Lemma~\ref{lem:colored-transfer} is therefore satisfied, and that lemma gives the claim.
\end{proof}

\subsection{A lower bound for \texorpdfstring{$\OR$}{OR} circuits}\label{sec:partial-match-circuits}

We close the section by showing that the improvement of Section~\ref{sec:integer-circuit} cannot be achieved without algebraic cancellation. Recall that for $P_{n,1/2}$ the trivial circuit $(I,P_{n,1/2})$, in which each query enumerates all $2^{n/2}$ completions of its wildcards, is a $\PAR$ circuit with $(l,r)=(0,1/2)$, so $l+r=1/2$. The following theorem shows that this is optimal even among $\OR$ circuits, whereas Corollary~\ref{cor:half-stars-K} gives $\EXACTZ$ circuits with $l+r<1/2$.

\begin{theorem}\label{thm:or-lower-bound}
Let $l,r\ge0$, and suppose that there are $\OR$ circuits $C_n$ computing $P_{n,1/2}$ with $\deg_{\sfL}(C_n)\le2^{(l+o(1))n}$ and $\deg_{\sfR}(C_n)\le2^{(r+o(1))n}$. Then $l+r\ge1/2$.
\end{theorem}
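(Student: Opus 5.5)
An $\OR$ circuit for a $01$ matrix $M$ is the same as a covering of the ones of $M$ by all-one combinatorial rectangles: each middle gate $e$ gives the rectangle $R_e=X_e\times Y_e$, where $X_e$ is the set of input gates with an edge to $e$ and $Y_e$ the set of output gates with an edge from $e$. Since coefficients are nonnegative, the rectangle must lie inside the support of $M$, and the covering condition says that every one of $M$ lies in some rectangle. The degrees are $\deg_{\sfL}(\bfx)=\#\{e:\bfx\in X_e\}$ and $\deg_{\sfR}(\bfy)=\#\{e:\bfy\in Y_e\}$. My plan is a double counting over all pairs (middle gate, incident string, incident query), combined with a structural bound on the size of an all-one rectangle of $P_{n,1/2}$.

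\textbf{Rectangle bound.} Let $X\times Y$ be an all-one rectangle with $Y$ nonempty. Every $\bfx\in X$ matches every $\bfy\in Y$, so all strings in $X$ agree on $S\coloneqq\bigcup_{\bfy\in Y}S(\bfy)$. Hence $\lvert X\rvert\le2^{n-\lvert S\rvert}$. Moreover, the values on $S$ are forced: every query in $Y$ agrees with the common value of $X$ on $S$, provided $X$ is nonempty. So every $\bfy\in Y$ is determined by its specified set $S(\bfy)$, which is an $n/2$-subset of $S$, giving $\lvert Y\rvert\le\binom{\lvert S\rvert}{n/2}$. With $\lvert S\rvert=n/2+t$ for $0\le t\le n/2$, this gives
\[
\lvert X\rvert\,\lvert Y\rvert\le2^{n/2-t}\binom{n/2+t}{t}\le2^{n/2-t}2^{n/2+t}=2^{n}.
\]
This crude bound is weak, so I refine it using the weights below.

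\textbf{Double counting.} Count triples $(e,\bfx,\bfy)$ with $\bfx\in X_e$ and $\bfy\in Y_e$. Covering gives at least one triple per matching pair, and there are $\binom{n}{n/2}2^{n/2}\cdot2^{n/2}=2^{(3/2-o(1))n}$ matching pairs: each query matches $2^{n/2}$ strings. The naive per-gate bound $\sum_e\lvert X_e\rvert\lvert Y_e\rvert$ does not use the degrees directly. Instead, weight each triple by $1/\bigl(\lvert X_e\rvert\lvert Y_e\rvert\bigr)^{1/2}$, or more cleanly apply the inequality $\lvert X_e\rvert\lvert Y_e\rvert\le\max\bigl(\lvert X_e\rvert,\lvert Y_e\rvert\bigr)^2$ via an AM–GM split. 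The cleanest route uses
\[
\sum_e\lvert X_e\rvert\le2^n\deg_{\sfL},\qquad\sum_e\lvert Y_e\rvert\le\binom{n}{n/2}2^{n/2}\deg_{\sfR},
\]
together with $\lvert X_e\rvert\lvert Y_e\rvert\le\min\bigl(\lvert X_e\rvert 2^{n/2},\ \lvert Y_e\rvert\cdot c_e\bigr)$. I need to bound $\lvert X_e\rvert\lvert Y_e\rvert$ by $\lvert X_e\rvert^{\alpha}\lvert Y_e\rvert^{1-\alpha}$ times a quantity $K$. The right analogue is $\lvert Y_e\rvert\le2^{n/2-o(n)}\cdot$ something: from $\lvert X\rvert\le2^{n/2-t}$ and $\lvert Y\rvert\le\binom{n/2+t}{n/2}\le2^{n/2+t}$, I get $\lvert X\rvert\lvert Y\rvert^{1/2}\cdots$. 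I will try to prove $\lvert X\rvert\cdot\lvert Y\rvert\le2^{n/2}\sqrt{\lvert X\rvert\lvert Y\rvert}\cdot2^{o(n)}$, that is, $\lvert X\rvert\lvert Y\rvert\le2^{n+o(n)}$, and then use Hölder:
\[
\#\text{pairs}\le\sum_e\lvert X_e\rvert\lvert Y_e\rvert\le2^{n/2+o(n)}\sum_e\sqrt{\lvert X_e\rvert\lvert Y_e\rvert}\le2^{n/2+o(n)}\Bigl(\sum_e\lvert X_e\rvert\Bigr)^{1/2}\Bigl(\sum_e\lvert Y_e\rvert\Bigr)^{1/2}.
\]
The right-hand side is $2^{n/2}\bigl(2^{n}\deg_{\sfL}\cdot2^{3n/2}\deg_{\sfR}\bigr)^{1/2}$. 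Comparing with $2^{3n/2}$ then yields $\deg_{\sfL}\deg_{\sfR}\ge2^{-n/2-o(n)}$, which is useless.

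The main obstacle is therefore exploiting the rectangle structure properly, and I will instead count per query. Fix a query $\bfy$: its $2^{n/2}$ matching strings must be covered by the sets $X_e$ over the at most $\deg_{\sfR}$ gates $e$ adjacent to $\bfy$. Hence some gate has $\lvert X_e\rvert\ge2^{n/2}/\deg_{\sfR}$, so $t_e\le\log_2\deg_{\sfR}$, and $Y_e$ consists of queries whose specified set lies in an $(n/2+t_e)$-set, giving $\lvert Y_e\rvert\le2^{(\h(1/2+\rho,\rho)... )}$ with $t=\rho n$.

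\textbf{Per-query counting.} For each query choose such a heavy gate $e(\bfy)$, and double count pairs $(\bfy,e(\bfy))$. The number of distinct gates used is at least $\binom{n}{n/2}2^{n/2}/\max_e\lvert Y_e\rvert\ge2^{3n/2-n/2-t+o(n)}$, using $\lvert Y_e\rvert\le\binom{n/2+t}{t}\le2^{n/2+t}$, where $t\le rn$. Each such gate has $\lvert X_e\rvert\ge2^{n/2}/\deg_{\sfR}$ left endpoints. Summing left incidences gives at least $2^{n-rn}\cdot2^{n/2-rn}$, while the total number of left incidences is at most $2^n\deg_{\sfL}=2^{n+ln}$. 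This yields $l\ge1/2-2r$, which is not the desired $l+r\ge1/2$. The expected difficulty is tightening this, for example by a careful entropy estimate of $\lvert Y_e\rvert\le\binom{n/2+t}{t}$ in place of $2^{n/2+t}$, or by averaging over all incident gates rather than only the heaviest one.
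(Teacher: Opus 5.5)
Your proposal does not prove the statement as written. You discard the Cauchy--Schwarz route as ``useless'', and your per-query route only yields $l\ge1/2-2r$. The failure, however, is an arithmetic slip, not a missing idea. The number of ones of $P_{n,1/2}$ is $\binom{n}{n/2}2^{n/2}\cdot2^{n/2}=\binom{n}{n/2}2^{n}=2^{(2-o(1))n}$, not $2^{(3/2-o(1))n}$; $2^{(3/2-o(1))n}$ is the number of queries, which you use correctly later. Keep your ``crude'' rectangle bound $\lvert X_e\rvert\lvert Y_e\rvert\le2^{n/2-t}\binom{n/2+t}{n/2}\le2^{n}$, which is correct. With the right count, the chain you wrote becomes
\[
\binom{n}{n/2}2^{n}\le\sum_e\lvert X_e\rvert\lvert Y_e\rvert\le2^{n/2}\sum_e\sqrt{\lvert X_e\rvert\lvert Y_e\rvert}\le2^{n/2}\Bigl(2^{n}\deg_{\sfL}(C_n)\cdot\binom{n}{n/2}2^{n/2}\deg_{\sfR}(C_n)\Bigr)^{1/2}.
\]
Squaring gives $\deg_{\sfL}(C_n)\deg_{\sfR}(C_n)\ge\binom{n}{n/2}2^{-n/2}\ge2^{n/2}/(n+1)$, hence $l+r\ge1/2$. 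So the theorem follows in a few lines from your first attempt, and the per-query part can be dropped.

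This corrected argument is a genuinely different and shorter route than the paper's. The paper fixes a threshold $b\in[1/2,1]$ and splits the middle gates by whether $\lvert X_e\rvert\ge2^{n-\lfloor bn\rfloor}$. It bounds the area of the large-row class by summing over rows, using $\lvert Y_e\rvert\le\binom{\lfloor bn\rfloor}{n/2}$, and the area of the small-row class by summing over columns. Letting $b\downarrow1/2+r$ gives the tradeoff $l\ge1-\h(1/2+r,1/2)$, and a monotonicity check of $g(r)=1+r-\h(1/2+r,1/2)$ then yields $l+r\ge1/2$. Your argument needs only the uniform area bound, with no entropy estimates or calculus. The cost is that it gives only the product bound, while the paper's argument gives the whole curve (e.g.\ $l\ge1$ when $r=0$).

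Your per-query route lost a factor $\deg_{\sfR}$ by keeping only one heavy gate per query. Instead, keep all incident gates with $\lvert X_e\rvert\ge2^{n/2}/(2\deg_{\sfR})$; together they cover at least half of each query's matching strings. Counting area over these gates then recovers exactly the paper's tradeoff.
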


We remark that Burkhard, Fredman, and Kleitman~\cite{BFK81} and Fredman and Volper~\cite{FV82} proved this lower bound for $\PAR$ circuits. Their argument adapts directly to $\OR$ circuits; we include the proof for completeness.

\begin{proof}[Proof of Theorem~\ref{thm:or-lower-bound}]
Let $C$ be any $\OR$ circuit computing $P_{n,1/2}$, and fix $1/2\le b\le1$. Each middle gate of $C$ defines the all-one rectangle whose rows and columns correspond to its adjacent input and output gates. Split the middle gates into two classes according to whether these rectangles contain at least $2^{n-\lfloor bn\rfloor}$ rows or fewer than this threshold.
\begin{itemize}
\item In the first class, at most $\lfloor bn\rfloor$ coordinates are constant across all binary strings $\bfx$ corresponding to the rows of a rectangle, since $c$ constant coordinates leave at most $2^{n-c}$ strings. Every query in that rectangle must specify only such coordinates, with the constant values, so the rectangle has at most $\binom{\lfloor bn\rfloor}{n/2}$ columns. Each row belongs to at most $\deg_{\sfL}(C)$ rectangles of this class.
\item In the second class, each rectangle has fewer than $2^{n-\lfloor bn\rfloor}$ rows, and each column belongs to at most $\deg_{\sfR}(C)$ such rectangles.
\end{itemize}
The matrix has $2^n$ rows, $\binom n{n/2}2^{n/2}$ columns, and $\binom n{n/2}2^n$ ones. Bound the first class's total area by summing over rows, and the second class's by summing over columns. Their rectangles cover all the ones, so dividing by the number of ones gives
\[
\frac{\deg_{\sfL}(C)}{\binom n{n/2}\big/\binom{\lfloor bn\rfloor}{n/2}}+\frac{\deg_{\sfR}(C)}{2^{\lfloor bn\rfloor-n/2}}\ge1.
\]

If $r\ge1/2$, then $l+r\ge1/2$ trivially. Otherwise, take $b>1/2+r$, so that the second term tends to $0$. Since $\binom n{n/2}=2^{n+O(\log n)}$ and $\binom{\lfloor bn\rfloor}{n/2}\le2^{\h(b,1/2)n}$, the first term forces $l\ge1-\h(b,1/2)$, and letting $b\downarrow1/2+r$ gives
\[
l+r\ge g(r)\coloneqq1+r-\h(1/2+r,1/2).
\]
On $[0,1/2]$, we have $g'(r)=1-\log_2\frac{1/2+r}{r}\le0$, so $g(r)\ge g(1/2)=1-\h(1,1/2)+1/2=1/2$.
\end{proof}

\section{Putting everything together}\label{sec:together}\label{parameter-selection-numerical-computation-and-certification}\label{sec:numerical-certificate}

\begin{theorem}\label{thm:main}
  Subset sum can be solved by a randomized algorithm in time $O(2^{0.49999\overline{7}n})$.
\end{theorem}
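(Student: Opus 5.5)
By Theorem~\ref{thm:subset-sum-transfer}, it suffices to exhibit an admissible exponent pair $(l,r)$ with $l+r\le1/2-2\xi$ for some $\xi>1/50000$. Then, for every $\varepsilon>0$, subset sum runs in time $O(2^{(1/2-\xi/9+\varepsilon)n})$. Choosing $\varepsilon<\xi/9-1/450000$ gives the bound $O(2^{(1/2-1/450000)n})=O(2^{0.49999\overline{7}n})$. By Corollary~\ref{cor:circuit-to-pm}, every point of $\calK(1/2)$ is admissible, so the whole task reduces to certifying a point of $\calK(1/2)$ via Corollary~\ref{cor:half-stars-K}.

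The plan is to fix concrete rational parameters $a,b$ and a target pair $(l,r)$, then verify three things.

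\begin{enumerate}
\item The side conditions of Theorem~\ref{thm:half-stars-circuit} hold: $a>0$, $4a\le1-b\le1/2$, and $\kappa>4a$.
\item $l+r\le1/2-2\xi$ with $\xi>1/50000$.
\item For every $\tau\in[0,1]$ we have $\tau l+(1-\tau)r\ge\min\{V(\tau),\tau,(1-\tau)/2\}$.
\end{enumerate}

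To choose the parameters, start from the trivial pair $(0,1/2)$, which touches the constraint $(1-\tau)/2$ at $\tau=0$. Then tilt slightly: take $l$ a small positive number and $r=1/2-\delta$.

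\begin{itemize}
\item Near $\tau=0$, the constraint must be met by $V(\tau)$, so we need $V(0)=\max V_{\sfR}<1/2$ by a margin.
\item Near $\tau=1$, the constraint $\tau$ is easily dominated once $l>0$ is handled by $V$ as well.
\item In the middle range, we need $V$ to stay below the line $\tau l+(1-\tau)r$ wherever the trivial bounds fail.
\end{itemize}

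One plausible regime is $b$ just above $1/2$ and $a$ small. There $V_{\sfR}\approx b-1/2+2a+\dots$ is small, and the inclusion--exclusion savings in $V_{\sfL}$ (the terms $-4\mu$) compensate for $\kappa$. I would locate good $(a,b,l,r)$ numerically and then round them to rationals.

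The verification of condition~3 is a max over $(\mu,\theta)$ inside a min over three terms, for all $\tau$. The function $\tau\mapsto\tau V_{\sfL}+(1-\tau)V_{\sfR}$ is affine for each fixed $(\mu,\theta)$, so $V$ is convex and piecewise-maximal.

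The first reduction is to the region where $\min\{\tau,(1-\tau)/2\}>\tau l+(1-\tau)r$. That region is an interval $[\tau_0,\tau_1]$ computable in closed form. On this interval it then suffices to prove
\[
\tau V_{\sfL}(\mu,\theta)+(1-\tau)V_{\sfR}(\mu,\theta)\le\tau l+(1-\tau)r
\]
for all $(\mu,\theta)$ in the compact domain. Since this is affine in $\tau$, it suffices to check it at $\tau_0$ and $\tau_1$. That leaves two inequalities, each a maximization of a concave function of $(\mu,\theta)$: the generalized entropies $\h(u,v)$ are concave in $v$, and the pieces involving $\mu$ in both arguments can be handled since $\h(u,v)$ is jointly concave, being a perspective of $\h$.

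I will certify each maximum with a rigorous grid-plus-Lipschitz bound. The entropy derivatives are bounded away from the domain boundary. Near the boundary, where $\log$ blows up, monotonicity arguments take over instead. Alternatively, the stationarity conditions can be solved explicitly, since $\partial_\theta\h(2\mu,\theta)=\log_2\frac{2\mu-\theta}{\theta}$ gives closed-form optimizers.

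The main obstacle I expect is this certification step. The gain $\xi\approx2\cdot10^{-5}$ is tiny, so the margins in condition~3 are extremely thin. The grid must therefore be fine, or the stationary points must be found analytically. The entropy terms also need careful handling at the corners $\mu=a$ and $\theta=\mu$, where the derivatives diverge. I would first try the analytic route: eliminate $\theta$ via its stationary point, leaving a one-variable concave function of $\mu$. I would then bound that function with interval arithmetic on a modest grid.
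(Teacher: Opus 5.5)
Your reduction chain (Theorem~\ref{thm:subset-sum-transfer} via Corollaries~\ref{cor:circuit-to-pm} and~\ref{cor:half-stars-K}), your choice of $\varepsilon$, and your verification logic are sound and match the paper. Two points in particular are correct: checking condition~3 only at the crossing points $\tau_0,\tau_1$ suffices, because for fixed $(\mu,\theta)$ the constraint is affine in $\tau$; and the inner maximizations are concave, because $\h(u,v)=u\h(v/u)$ is a perspective. The gap is that at this stage the theorem \emph{is} the numerical certificate, and you never produce parameters. Worse, the regime you point to provably fails. With $b$ near $1/2$ and $a$ small, $\kappa=\h(b)-\h(1/2,2a)-\h(1/2,1-b-2a)$ is close to its value $1-\h(4a)\approx1$ at $b=1/2$. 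Then $V_{\sfL}\ge\kappa-4\mu\ge\kappa-8a\approx1$ and $V_{\sfR}\ge0$, so $V(1/2)\ge\tfrac12(\kappa-8a)\approx\tfrac12$. At $\tau=1/2$ both trivial bounds equal $1/4$, so condition~3 there forces $(l+r)/2\ge1/4$, i.e.\ $\xi\le0$. The $-4\mu$ savings, at most $8a$, cannot offset a $\kappa$ of order $1$. Your heuristic bullets also misplace where the constraint binds. Near $\tau=0$ and $\tau=1$ the minimum is the trivial term $\tau$, resp.\ $(1-\tau)/2$, not $V$. The only place $V$ must be controlled is the middle interval, here roughly $[0.332,0.501]$.

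The working regime is the opposite one: take $1-b$ small. Then $C_k$ is a perturbation of the trivial circuit $(I,P_{k,1/2})$, with $V_{\sfR}\approx b-\tfrac12$ slightly below $\tfrac12$ and $V_{\sfL}$ small. The paper takes $a\approx3.7955\times10^{-4}$ and $b\approx0.990945$, which gives:
\begin{itemize}
\item $\kappa\approx0.00533>4a\approx0.00152$;
\item $V(0)\approx0.49269$, $V(1/2)\approx0.249979$, $V(1)\approx0.00758$;
\item $(l,r)\approx(0.00728,0.49268)$ and $\xi\approx2.0035\times10^{-5}$, which beats $1/50000$ by only about $3.5\times10^{-8}$.
\end{itemize}
The paper also avoids your grid/interval certification. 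At $\tau\in\{0,1/2,1\}$ the inner maxima are explicit: $\theta=3\mu/2,\ \mu=5a/3$ at $\tau=0$; $\theta=4\mu/3$ with $\mu$ a root of a quadratic at $\tau=1/2$; and $\theta=\mu,\ \mu=2a$ at $\tau=1$, using $1-b>10a$. Convexity of $V$ then bounds it by the piecewise-linear interpolant $\widehat V$, and the chord of $\widehat V$ between its crossings with $\tau$ and $(1-\tau)/2$ gives $(l,r)$. Since the right crossing $\tau_+\approx0.5014$ is so close to $1/2$, this loses almost nothing. At your generic points $\tau_0,\tau_1$, the $\theta$-stationarity equation is $2-c=c^\tau(c-1)^{1-\tau}$ with $c=\theta/\mu$, which has no closed form. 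So your ``closed-form optimizers'' exist only at special $\tau$, and your route would need certified numerics accurate to about $10^{-8}$.
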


The proof chains together three results from the previous sections:
\begin{itemize}
\item Corollary~\ref{cor:half-stars-K} turns the function $V$ into pairs $(l,r)\in\calK(1/2)$;
\item Corollary~\ref{cor:circuit-to-pm} shows that every such pair is an admissible exponent pair;
\item Theorem~\ref{thm:subset-sum-transfer} turns an admissible pair with $l+r=1/2-2\xi$ into a subset sum algorithm in time $O(2^{(1/2-\xi/9+\varepsilon)n})$.
\end{itemize}
It remains to choose $a,b$, compute $V$, pick $(l,r)$, and check numerically that the resulting $\xi$ is large enough. Every quantity below has an explicit closed form, and its numerical value is listed in Table~\ref{tab:numerics}.

\paragraph{Parameters.}
We take
\[
a\coloneqq0.00037954865735953624129927,\qquad b\coloneqq0.99094548027616196930017929,
\]
obtained by numerical optimization. Clearly $a>0$, $4a\le1-b\le1/2$, and $1-b>10a$; the table shows $\kappa>4a$. So Theorem~\ref{thm:half-stars-circuit} and Corollary~\ref{cor:half-stars-K} apply.

\paragraph{Three values of $V$.}
As a maximum of affine functions of $\tau$, the function $V$ is convex. We therefore only compute $V(0)$, $V(1/2)$, and $V(1)$, and bound $V$ on $[0,1]$ by the piecewise linear function $\widehat V$ through these three values.
\begin{itemize}
\item At $\tau=0$, we maximize $\h(a,\mu-a)+\h(\mu,\theta-\mu)$. The maximum over $\theta$ is $\mu$, at $\theta=3\mu/2$, and then the maximum of $\h(a,\mu-a)+\mu$ is $a\log_26$, at $\mu=5a/3$. Hence
\[
V(0)=b-\frac12+(2+\log_26)a.
\]
\item At $\tau=1$, we maximize $\h(1-b,\mu)+\h(2\mu,\theta)-4\mu$. The maximum over $\theta$ is at $\theta=\mu$, leaving $\h(1-b,\mu)-2\mu$. This is increasing for $\mu\le(1-b)/5$, hence on $[a,2a]$ because $1-b>10a$. Taking $\mu=2a$ gives
\[
V(1)=\kappa+\h(1-b,2a)-4a.
\]
\item At $\tau=1/2$, the maximum over $\theta$ of $\h(2\mu,\theta)+\h(\mu,\theta-\mu)$ is at $\theta^*\coloneqq4\mu/3$, where this sum minus $4\mu$ equals $\mu\log_2(27/64)$. The remaining objective $\h(1-b,\mu)+\h(a,\mu-a)+\mu\log_2(27/64)$ is strictly concave in $\mu$, and setting its derivative to zero gives
\[
27(1-b-\mu)(2a-\mu)=64\mu(\mu-a),\quad\text{i.e.}\quad37\mu^2+(27(1-b)-10a)\mu-54a(1-b)=0.
\]
Its root $\mu^*\in(a,2a)$ is the maximizer, and $V(1/2)=\frac12\bigl(V_{\sfL}(\mu^*,\theta^*)+V_{\sfR}(\mu^*,\theta^*)\bigr)$.
\end{itemize}

\paragraph{Choosing $(l,r)$.}
By Corollary~\ref{cor:half-stars-K}, we need a line $\tau l+(1-\tau)r$ lying above $\min\{\widehat V(\tau),\tau,(1-\tau)/2\}$ on $[0,1]$. Let $\tau_-$ be the point where the left piece of $\widehat V$ meets $\tau$, and $\tau_+$ the point where the right piece meets $(1-\tau)/2$:
\[
\tau_-\coloneqq\frac{V(0)}{1+2V(0)-2V(1/2)},\qquad\tau_+\coloneqq1-\frac{V(1)}{1/2-2V(1/2)+2V(1)}.
\]
Take the line through $(\tau_-,\tau_-)$ and $(\tau_+,(1-\tau_+)/2)$, that is,
\[
l\coloneqq\frac{(1-\tau_+)(1-3\tau_-)}{2(\tau_+-\tau_-)},\qquad r\coloneqq\frac{\tau_-(3\tau_+-1)}{2(\tau_+-\tau_-)}.
\]
On $[\tau_-,\tau_+]$ it is a chord of the convex function $\widehat V$, so it lies above $\widehat V$. Its slope $l-r$ lies in $(-1/2,1)$, so it lies above $\tau$ to the left of $\tau_-$ and above $(1-\tau)/2$ to the right of $\tau_+$. Hence $(l,r)\in\calK(1/2)$, and by Corollary~\ref{cor:circuit-to-pm}, $(l,r)$ is an admissible exponent pair. Figure~\ref{fig:numerics} illustrates these functions.

\begin{figure}[H]
\centering
\pgfplotsset{
  bwplot/.style={
    width=\linewidth, height=5.6cm, axis lines=left, xlabel={$\tau$},
    tick label style={font=\scriptsize}, label style={font=\small},
    every axis plot/.append style={line join=round},
  },
  envelope/.style={line width=3.2pt, black!22},
  vhat/.style={black, thin, mark=*, mark size=1.1pt},
  diag/.style={black, densely dashed},
  anti/.style={black, dash dot},
  secant/.style={black, semithick},
}
\begin{subfigure}[t]{0.48\textwidth}
\centering
\begin{tikzpicture}
\begin{axis}[bwplot, xmin=0, xmax=1, ymin=0, ymax=0.55,
  xtick={0,0.25,0.5,0.75,1}, ytick={0,0.1,0.2,0.3,0.4,0.5},
  legend columns=5, legend to name=numericslegend,
  legend style={font=\scriptsize, draw=none, /tikz/every even column/.append style={column sep=1.2em}}]
\addplot[envelope] coordinates {(0,0) (0.331683,0.331683) (0.5,0.249979) (0.501373,0.249313) (1,0)};
\addlegendentry{$\min\{\widehat V(\tau),\tau,\frac{1-\tau}2\}$}
\addplot[vhat] coordinates {(0,0.492686) (0.5,0.249979) (1,0.007578)};
\addlegendentry{$\widehat V(\tau)$}
\addplot[diag] coordinates {(0,0) (0.55,0.55)};
\addlegendentry{$\tau$}
\addplot[anti] coordinates {(0,0.5) (1,0)};
\addlegendentry{$\frac{1-\tau}2$}
\addplot[secant] coordinates {(0,0.492684) (1,0.007276)};
\addlegendentry{$\tau l+(1-\tau)r$}
\end{axis}
\end{tikzpicture}
\caption{Overview on $[0,1]$.}\label{fig:numerics-overview}
\end{subfigure}\hfill
\begin{subfigure}[t]{0.48\textwidth}
\centering
\begin{tikzpicture}
\begin{axis}[bwplot, xmin=0.496, xmax=0.5024, ymin=-1.2, ymax=0.6,
  xtick={0.496,0.498,0.5,0.502}, extra x ticks={0.501373},
  extra x tick labels={$\tau_+$},
  extra x tick style={tick label style={font=\scriptsize, yshift=-2.4ex}},
  ytick={-1,-0.5,0,0.5},
  scaled x ticks=false, xticklabel style={/pgf/number format/fixed, /pgf/number format/precision=3},
  ylabel={difference ($\times10^{-6}$)}, ylabel style={yshift=-0.6ex}]
\addplot[envelope] coordinates {(0.496,-0.81130) (0.5,-0.83105) (0.5013730,0) (0.5014553,-1.2)};
\addplot[vhat, mark indices={2}] coordinates {(0.496,-0.81130) (0.5,-0.83105) (0.5023643,0.6)};
\addplot[anti] coordinates {(0.5013319,0.6) (0.5014553,-1.2)};
\addplot[secant] coordinates {(0.496,0) (0.5024,0)};
\end{axis}
\end{tikzpicture}
\caption{Zoom near $\tau=1/2$, shifted.}\label{fig:numerics-zoom}
\end{subfigure}

\medskip
\pgfplotslegendfromname{numericslegend}
\caption{The choice of $(l,r)$. In~(b), every curve is shifted by $-(\tau l+(1-\tau)r)$, so the requirement is that the envelope stays below $0$.}\label{fig:numerics}
\end{figure}

\paragraph{Conclusion.}
Put $\xi\coloneqq1/4-(l+r)/2$, so that $l+r=1/2-2\xi$, and apply Theorem~\ref{thm:subset-sum-transfer} with $\varepsilon\coloneqq10^{-9}$. By the table,
\[
\frac12-\frac{\xi}{9}+10^{-9}=0.49999777486376221650\ldots<\frac12-\frac1{450000}=0.49999\overline{7},
\]
which proves Theorem~\ref{thm:main}.

\begin{table}[H]
\centering
\renewcommand{\arraystretch}{1.15}
\begin{tabular}{@{}lr@{}}
\toprule
Quantity & Value \\
\midrule
$\kappa$ & $0.00533321875848117877\ldots$ \\
$4a$ & $0.00151819462943814496\ldots$ \\
$V(0)$ & $0.49268569663735450584\ldots$ \\
$V(1/2)$ & $0.24997913372377961141\ldots$ \\
$V(1)$ & $0.00757766963945240565\ldots$ \\
$\mu^*$ & $0.00069649211514613618\ldots$ \\
$\theta^*$ & $0.00092865615352818158\ldots$ \\
$\tau_-$ & $0.33168260605143320534\ldots$ \\
$\tau_+$ & $0.50137304583751125366\ldots$ \\
$l$ & $0.00727587055962990941\ldots$ \\
$r$ & $0.49268405898808998771\ldots$ \\
$(l+r)/2$ & $0.24997996477385994856\ldots$ \\
$\xi$ & $0.00002003522614005143\ldots$ \\
$1/2-\xi/9$ & $0.49999777386376221650\ldots$ \\
\bottomrule
\end{tabular}
\caption{Numerical values at the chosen parameters.}\label{tab:numerics}
\end{table}

\FloatBarrier
\phantomsection\section*{AI disclosure and timestamps}\label{sec:ai-disclosure}

In July and August 2026, we were developing a general theory of depth-2 linear circuits, building on~\cite{JS13,AL25,Ye26one}. In August, while looking for applications of this theory, we found reductions from generalized subset sum (GSS) to compatibility testing for tensor powers of matrices, through the representation technique of~\cite{NW21,RW26}. In particular, we knew that GSS with $C=[\pm2]\coloneqq\{-2,-1,1,2\}$ could beat meet-in-the-middle if some circuit for the partial match matrix $P_{n,1/2}$ achieved $l+r<1/2$.

On 5 September, we used GPT-6 Pro to construct the first circuit in our project meeting this criterion, a $\SUPZ$ circuit. From 5 to 28 September, we worked to incorporate this new GSS result into the existing theory. More generally, the reduction works for any fixed $s\in(0,1/2]$ if some circuit for $P_{n,s}$ achieves $l+r<\h(s)-s$, where $\h(s)-s$ is the corresponding $\OR$ benchmark; this allowed us to shift from $s=1/2$ to $s=1/8$.

On 28 September, we used GPT-6 Astra to find a reduction showing that ordinary subset sum can also beat meet-in-the-middle using the same circuit for $P_{n,s}$. This gave our improved algorithm for subset sum; Record~1 arrives at the candidate exponent $0.499664500303\ldots$.

We then pursued the simplest self-contained proof, at a substantial cost in the exponent. For this, we shifted back to $s=1/2$ and used an $\EXACTZ$ circuit instead. The manuscript locked on \textbf{4 October at 02:49} gives a randomized algorithm in worst-case time $O(2^{(1/2-1/450000)n})$, and Record~6 reports a computational reproduction at \textbf{03:51} that morning.

The table below lists, in chronological order, six records of our conversations with ChatGPT during the development of this work. Their timestamps are taken from the original message metadata, in Beijing time (UTC+8). These records were made public on \textbf{7 October 2026}.

\begin{longtable}[]{@{}
  >{\raggedright\arraybackslash}p{(\columnwidth - 4\tabcolsep) * \real{0.12}}
  >{\raggedright\arraybackslash}p{(\columnwidth - 4\tabcolsep) * \real{0.42}}
  >{\raggedright\arraybackslash}p{(\columnwidth - 4\tabcolsep) * \real{0.46}}@{}}
\toprule\noalign{}
\begin{minipage}[b]{\linewidth}\raggedright
Record
\end{minipage} & \begin{minipage}[b]{\linewidth}\raggedright
Link
\end{minipage} & \begin{minipage}[b]{\linewidth}\raggedright
Original message range, 2026
\end{minipage} \\
\midrule\noalign{}
\endhead
\bottomrule\noalign{}
\endlastfoot
1 &
\href{https://chatgpt.com/share/6ac5a437-5d0c-83e8-afea-fcc450b322b0}{Computation
Trees Criterion} & Sep 16, 14:34--Sep 28, 06:48 \\
2 &
\href{https://chatgpt.com/share/6ac5a295-7bb4-83ec-85cf-b0034f63d081}{View
recent files} & Sep 28, 07:16--07:31 \\
3 &
\href{https://chatgpt.com/share/6ac5a378-a208-83ee-9457-4c3ce74b64a4}{View
project conversations} & Sep 28, 18:03--Sep 30, 23:39 \\
4 &
\href{https://chatgpt.com/share/6ac5a3f1-d070-83e8-9474-754bd863b090}{Recover
the improved P1 construction} & Oct 1, 03:46--Oct 4, 02:49 \\
5 &
\href{https://chatgpt.com/share/6ac5a537-691c-83ec-b103-6d7423e7f97f}{Conceptual
simplification} & Oct 3, 05:57--19:07 \\
6 &
\href{https://chatgpt.com/share/6ac5a5e1-8828-83ee-99c7-0d470c4bfc48}{Reproduce
correctness check} & Oct 4, 03:46--03:51 \\
\end{longtable}

A preliminary version of this work was subsequently posted on arXiv~\cite{YL26v1} on 7 October 2026.

\bibliographystyle{alphaurl}
\bibliography{main}
\end{document}